\documentclass{lmcs}
\pdfoutput=1
\usepackage[utf8]{inputenc}

\usepackage{lastpage}
\lmcsdoi{22}{2}{27}
\lmcsheading{}{\pageref{LastPage}}{}{}%
{Dec.~03,~2024}{Jun.~12,~2026}{}

\keywords{PRAM, query evaluation, work-efficient, parallel, acyclic queries, free-connex queries} %

\usepackage{hyperref}

\usepackage{amssymb}
\usepackage{xspace}
\usepackage{multirow}
\usepackage{booktabs}
\usepackage{csquotes}
\usepackage{makecell}
\usepackage{thm-restate}
\newcommand{\mathnotation}[1]{\ensuremath{#1}\xspace}

\newcommand{\tuple}[1]{\bar{#1}}
\newcommand{\range}[2][1]{\ensuremath{\{#1,\ldots,#2\}}}

\newcommand{\bigO}{\ensuremath{\operatorname{\mathcal{O}}}}

\newcommand{\IN}{\ensuremath{\textnormal{\textsf{IN}}}\xspace}
\newcommand{\OUT}{\ensuremath{\textnormal{\textsf{OUT}}}\xspace}

\newcommand{\ctpalgo}{$\bigO(1)$-time parallel algorithm\xspace}
\newcommand{\ctpalgos}{$\bigO(1)$-time parallel algorithms\xspace}
\newcommand{\propertuple}{proper tuple\xspace}
\newcommand{\propertuples}{proper tuples\xspace}

\newcommand{\attrof}[1]{\mathnotation{\textnormal{\texttt{attr}}(#1)}}
\newcommand{\attA}{A}

\newcommand{\attsetX}{\mathnotation{\mathcal{X}}}
\newcommand{\attsetY}{\mathnotation{\mathcal{Y}}}
\newcommand{\attsetZ}{\mathnotation{\mathcal{Z}}}

\newcommand{\relarr}[1]{\mathnotation{\mathbf{#1}}}
\newcommand{\arr}[1]{\mathnotation{\mathbf{#1}}}
\newcommand{\subarr}[2]{\mathnotation{\mathbf{#1}_{#2}}}
\newcommand{\emptytuple}{\mathnotation{\bot}}
\newcommand{\lenof}[1]{\ensuremath{\left\lvert #1\right\rvert}}%
\newcommand{\schema}{\mathnotation{\Sigma}}
\newcommand{\db}{\mathnotation{D}}
\newcommand{\ar}{\mathnotation{\texttt{ar}}}
\newcommand{\adomof}[1]{\textnormal{\textsf{adom}}(#1)}
\newcommand{\cval}{\mathnotation{c_{\texttt{val}}}}

\newcommand{\sel}[3]{\mathnotation{\sigma_{#2=#3}(#1)}}
\newcommand{\proj}[2]{\mathnotation{\pi_{#2}(#1)}}
\newcommand{\join}[2]{\mathnotation{#1 \Join #2}}
\newcommand{\sjoin}[2]{\mathnotation{#1 \ltimes #2}}

\newcommand{\agmterm}[1][R_i]{\mathnotation{\prod_{i=1}^{m}\lvert #1\rvert^{x_i}}}
\newcommand{\agmbound}[1][q,D]{\ensuremath{\textnormal{\textsf{agm}}(#1)}}
\newcommand{\minrelsize}[1][R_i]{\mathnotation{\min_{1\le i\le m}\lvert #1\rvert}}

\newcommand{\query}{\mathnotation{q}}
\newcommand{\queryresult}[2]{#1(#2)}

\newcommand{\atom}{\mathnotation{\mathsf{A}}}
\newcommand{\freeof}[1]{\ensuremath{\mathsf{free}(#1)}}
\newcommand{\querytree}[1]{\ensuremath{T_{#1}}}
\newcommand{\bag}{\ensuremath{\chi}}
\newcommand{\cover}{\ensuremath{\mu}} %
\newcommand{\ghw}{\mathnotation{\mathsf{ghw}}}
\newcommand{\fghw}{\mathnotation{\mathsf{fghw}}}

\newcommand{\relsof}[1]{\ensuremath{\mathcal{R}_{#1}}}

\newcommand{\padeps}{\lambda}

\newcommand{\WSBounds}[2]{Work \(\bigO(#1)\) and space \(\bigO(#2)\)}
\newcommand{\wsbounds}[2]{work \(\bigO(#1)\) and space \(\bigO(#2)\)}

\newcommand{\WSBoundsDictionary}[2]{\WSBounds{#1}{#2} in the dictionary setting}
\newcommand{\keyf}{\textnormal{\textsf{key}}}

\newcommand{\calM}{\mathcal{M}}
\newcommand{\tree}{T}
\newcommand{\tnode}{v}
\newcommand{\tnodeB}{w}

\newcommand{\RestateRemark}[1]{{\normalfont\bfseries #1}}
\newcommand{\RestateInit}[1]{\newcommand{#1}{}}
\newcommand{\RestateGo}[1]{\renewcommand{#1}{(Restated)}}

\newcommand{\qplan}{\mathnotation{\mathcal{P}}} %
\newcommand{\operation}[2]{\mathnotation{#1(#2)}}
\newcommand{\operationname}[1]{\mathnotation{\operatorname{\textnormal{\texttt{#1}}}}}

\newcommand{\opNumTuples}[1]{\opnameNumTuples(#1)}
\newcommand{\opnameNumTuples}{\mathnotation{\operationname{NumTuples}}}

\newcommand{\opEqual}[1]{\operation{\opnameEqual}{#1}}
\newcommand{\opnameEqual}{\mathnotation{\operationname{Equal}}}

\newcommand{\opLessThan}[1]{\operation{\opnameLessThan}{#1}}
\newcommand{\opnameLessThan}{\mathnotation{\operationname{LessThan}}}

\newcommand{\opEqualConst}[1]{\operation{\opnameEqualConst}{#1}}
\newcommand{\opnameEqualConst}{\mathnotation{\operationname{EqualConst}}}

\newcommand{\opOutput}[1]{\operation{\opnameOutput}{#1}}
\newcommand{\opnameOutput}{\mathnotation{\operationname{Output}}}

\newcommand{\opSort}[2][\padeps]{\operation{\opnameSort[#1]}{#2}}
\newcommand{\opnameSort}[1][\padeps]{\mathnotation{\operationname{Sort}_{#1}}}

\newcommand{\opKey}[1]{\operation{\opnameKey}{#1}}
\newcommand{\opnameKey}{\mathnotation{\operationname{KeyOf}}}

\newcommand{\opKeyOut}[1]{\operation{\opnameKeyOut}{#1}}
\newcommand{\opnameKeyOut}{\operationname{KeyOutput}}

\newcommand{\opSearchTuple}[1]{\operation{\opnameSearchTuple}{#1}}
\newcommand{\opnameSearchTuple}{\operationname{SearchTuple}}

\newcommand{\opSearchTuples}[1]{\operation{\opnameSearchTuples}{#1}}
\newcommand{\opnameSearchTuples}{\operationname{SearchTuples}}

\newcommand{\opnameComputeHashvalues}{\operationname{ComputeHashvalues}}

\newcommand{\opCompact}[2][\padeps]{\operation{\opnameCompact[#1]}{#2}}
\newcommand{\opnameCompact}[1][\padeps]{\mathnotation{\operationname{Compact}_{#1}}}

\newcommand{\opDeduplicate}[1]{\operation{\opnameDeduplicate}{#1}}
\newcommand{\opnameDeduplicate}{\operationname{Deduplicate}}

\newcommand{\opSelection}[2][{\attA=a}]{\operation{\opnameSelection[#1]}{#2}}
\newcommand{\opnameSelection}[1][{\attA=a}]{\mathnotation{\operationname{Select}_{#1}}}

\newcommand{\opProjection}[2][\attA]{\operation{\opnameProjection[#1]}{#2}}
\newcommand{\opnameProjection}[1][\attA]{\mathnotation{\operationname{Project}_{#1}}}

\newcommand{\opJoin}[1]{\operation{\opnameJoin}{#1}}
\newcommand{\opnameJoin}{\mathnotation{\operationname{Join}}}

\newcommand{\opSemiJoin}[1]{\operation{\opnameSemiJoin}{#1}}
\newcommand{\opnameSemiJoin}{\mathnotation{\operationname{SemiJoin}}}

\newcommand{\opDifference}[1]{\operation{\opnameDifference}{#1}}
\newcommand{\opnameDifference}{\mathnotation{\operationname{Difference}}}

\newcommand{\opUnion}[1]{\operation{\opnameUnion}{#1}}
\newcommand{\opnameUnion}{\mathnotation{\operationname{Union}}}

\setlist[itemize]{labelindent=*,leftmargin=*,label={{\small\(\blacktriangleright\)}}}

\begin{document}
\title{Work-Efficient Query Evaluation in Constant Time with PRAMs}\thanks{This work was funded by the Deutsche Forschungsgemeinschaft (DFG, German
Research Foundation) --- SCHW 678/8-1.}

\author[J.~Keppeler]{Jens Keppeler}[a]
\author[T.~Schwentick]{Thomas Schwentick\lmcsorcid{0000-0002-1062-922X}}[a]
\author[C.~Spinrath]{Christopher Spinrath}[b]

\address{TU Dortmund University, Germany}%
\email{jens.keppeler@tu-dortmund.de, thomas.schwentick@tu-dortmund.de}

\address{Lyon 1 University, Liris CNRS, France}%
\email{christopher.spinrath@liris.cnrs.fr}
\thanks{The third author made his main contributions to this article while working at the TU Dortmund University.}

\subjclass[2012]{Theory of computation~Shared memory algorithms}
\subjclass[2012]{Theory of computation~Database query processing and optimization (theory)}

\titlecomment{%
	A preliminary version of this article was published in the proceedings of the 26th International Conference on Database Theory \cite{DBLP:conf/icdt/KeppelerSS23}.
}

\begin{abstract}
The article studies query evaluation in parallel constant time in the CRCW PRAM model. While it is well-known that all relational algebra queries can be evaluated in constant time on an appropriate CRCW PRAM model, this article is interested in the efficiency of evaluation algorithms, that is, in the number of processors or, asymptotically equivalent, in the work. Naive evaluation in the parallel setting results in huge (polynomial) bounds on the work of such algorithms and in presentations of the result sets that can be extremely scattered in memory.
The article discusses some obstacles for constant-time PRAM query evaluation. It presents algorithms for relational operators and  explores three settings, in which efficient sequential query evaluation algorithms exist:  acyclic queries, semijoin algebra queries, and join queries --- the latter in the worst-case optimal framework.
Under mild assumptions --- that data values are numbers of polynomial size in the size of the database or that the relations of the database are suitably sorted --- constant-time algorithms are presented that are weakly work-efficient in the sense that work  $\mathcal{O}(T^{1+\varepsilon})$ can be achieved, for every $\varepsilon>0$, compared to the time $T$ of an optimal sequential algorithm.
Important tools are the algorithms for approximate prefix sums and compaction from Goldberg and Zwick (1995).
 \end{abstract}

\maketitle
\section{Introduction}\label{section:ctp:intro}

Parallel query evaluation has been an active research area during the
last 10+ years. Parallel evaluation algorithms have been thoroughly
investigated, mostly using the Massively Parallel Communication (MPC) model
\cite{DBLP:journals/jacm/BeameKS17}. For surveys, we refer to
\cite{KoutrisSS18,HuY20}.

Although the MPC model is arguably very well-suited
to study parallel query evaluation, it is not the
only model to investigate parallel query evaluation. Indeed, there is also the
Parallel Random Access Machine (PRAM) model, a more
``theoretical'' model which allows for a more
fine-grained analysis of parallel algorithms, particularly in
non-distributed settings. It was shown by Immerman
\cite{Im88a,Immerman} that PRAMs with polynomially many processors can evaluate first-order formulas
and thus relational algebra queries in time $\bigO(1)$.

In the study of PRAM algorithms it is considered very important that
algorithms perform well, compared with optimal sequential algorithms.  The
overall number of computation steps in a PRAM-computation is
called its \emph{work}. An important goal is
to design parallel algorithms that are \emph{work-optimal} in the
sense that their work asymptotically matches the running time of the
best sequential algorithms.

Obviously, for $\bigO(1)$-time PRAM
algorithms the work and the number of processors differ at most by a
constant factor. Thus, the result by Immerman shows that   relational
algebra queries can be evaluated in parallel constant time with polynomial work.
Surprisingly, to the best of our knowledge, work-efficiency of $\bigO(1)$-time PRAM
algorithms for query evaluation has not been investigated in the
literature. This article is meant to lay some groundwork in this direction.

The proof
of the afore-mentioned result that each relational algebra query
can be evaluated in constant-time by a PRAM with polynomial work is
scattered over several articles. It consists basically of three steps, translating from queries
to first-order logic formulas \cite{Codd72}, to bounded-depth
circuits \cite{BIS90}, and then to PRAMs \cite{Im88a}. A complete
account can be found in a monograph by Immerman \cite{Immerman}. It was
not meant as a ``practical translation'' of queries and does not yield
one. However, it is not hard to see directly that the operators of the
relational algebra can, in principle, be evaluated in constant time on
a PRAM. It is slightly less obvious, though, how the output of such an
operation is represented, and how it can be fed into the next operator.

\begin{exa}
Let us consider an example to illustrate some issues of constant-time
query evaluation on a PRAM. Let
$q$ be the following conjunctive query, written in a rule-based
fashion, for readability.
\[ q(x,y,z) \gets E(x,x_1), E(x_1,x_2), E(y,y_1), E(y_1,y_2), E(z,z_1),
  E(z_1,z_2), R(x_2,y_2,z_2)
  \]

  A (very) naive evaluation algorithm can assign one processor to each
  combination of six $E$-tuples and one $R$-tuple, resulting in work
  $\bigO(|E|^6 |R|)$. Since the query is obviously acyclic, it can
  be evaluated  more efficiently  in the spirit of Yannakakis'
  algorithm \cite{DBLP:conf/vldb/Yannakakis81}. After its
  semijoin phase, the join underlying the
  first two atoms $E(x,x_1), E(x_1,x_2)$ can be computed as a sub-query $q_1(x,x_2)
  \gets E(x,x_1), E(x_1,x_2)$. This can be evaluated by $|E|^2$ many
  processors, each getting a pair of tuples $E(a_1,a_2), E(b_1,b_2)$
  and producing output tuple $(a_1,b_2)$ in case $a_2=b_1$. The output
  can be written into a two-dimensional table, indexed in both
  dimensions by the tuples of $E$. In the next round, the output
  tuples can be joined with tuples from $R$ to compute  $q_2(x,y_2,z_2)
  \gets E(x,x_1), E(x_1,x_2), R(x_2,y_2,z_2)$. However, since it is
  not clear in advance, which entries of the two-dimensional table
  carry $q_1$-tuples, the required work is about $|E|^2 \cdot
  |R|$. Output tuples of $q_2$ can again be written into a two-dimensional
  table, this time indexed by one tuple from $E$ (for $x$) and one tuple from
  $R$ (for $y_2$ and $z_2$). Proceeding in a similar fashion,
  $q$ can be evaluated with work   $\bigO(|E|^3|R|)$.  Other evaluation orders
  are possible but result in similar work bounds.
 In terms of the input size of the database, this amounts to
 $\bigO(\IN^4)$, whereas Yannakakis' algorithm yields
 $\bigO(\IN\cdot \OUT)$ in the sequential setting, where \IN denotes
 the number of tuples in the given relations  and \OUT
 the size of the query result, respectively.

 Let us take a look at the representation of the output of this
 algorithm. The result tuples reside in a three-dimensional table, where each dimension is indexed
 by tuples from $E$. They are thus scattered over a space of size
 $|E|^3$, no matter the size of the result. Furthermore, the same
 output tuple might occur several times due to  several valuations. To
 produce a table, in which each output tuple occurs exactly once, a
 deduplication step is needed, which could yield additional work in the order of
  $|E|^6$.
 \end{exa}

The example illustrates two challenges posed by the  $\bigO(1)$-time PRAM
setting, which will be discussed in more detail in \autoref{section:ctp:lower-bounds}.
\begin{itemize}
\item It is, in general, not
  possible to represent the result of a query in a compact form, say,
  as an array, contiguously filled with result tuples. In the above
  analysis, this obstacle
  yields upper bounds in terms of the size of the input database,
  but not in the size of the query result.
\item It might be necessary to deduplicate output (or intermediate)
  relations, but, unfortunately, this cannot be done by sorting a
  relation, since (compact) sorting in constant time is not possible on PRAMs, either.
\end{itemize}

We will use compaction techniques for
PRAMs from Goldberg and Zwick \cite{GoldbergZ95} to deal with the first
challenge. The second challenge is addressed by a suitable
representation of the relations. Besides a general setting without any
assumptions, we consider a setting, where data items are mapped to
a linear-sized\footnote{As will be discussed at the end of \autoref{section:ctp:othersettings} an
  initial segment of polynomial size yields the same results, in
  most cases.} initial segment of the natural numbers by a dictionary, and a
setting where a linear order for the data values is available.

~

Although every query of the relational algebra can be evaluated by a
\ctpalgo with polynomial work, the polynomials can be arbitrarily
bad.
In fact, that a graph has a $k$-clique can be expressed by a
conjunctive query (cf.\ \autoref{section:ctp:acyclic-queries}) with $k$ variables, and it follows from Rossman's
$\omega(n^{k/4})$ lower bound for the size of bounded-depth circuit
families for $k$-Clique \cite{Rossman08} that any \ctpalgo that evaluates this query
needs work $\omega(n^{k/4})$.  Furthermore, it is  conjectured
that the $k$-clique problem cannot be solved in time significantly less than $\bigO(n^k)$ by any sequential combinatorial\footnote{The term \emph{combinatorial} is frequently used in the Algorithms community. It does not have a formal definition. In \cite{AbboudBW15} it is said that it ``should be interpreted as any practically eﬃcient algorithm that
does not suﬀer from the issues of FMM such as large constants and ineﬃcient memory usage''. Here, FMM refers to Fast Matrix Multiplication.} algorithm \cite{AbboudBW15,AbboudBBK17} and therefore it cannot be solved  by a constant-time parallel algorithm with work less than $\bigO(n^k)$, either.\footnote{For arbitrary algorithms this conjectured bound is $\bigO(n^{\smash{\frac{k\omega}{3}}})$, where $\omega$ is the fast matrix multiplication exponent. }

Hence, it cannot be expected that conjunctive queries can be evaluated in constant time and small work, in general. Therefore, we consider  restricted query classes like acyclic conjunctive queries or the semijoin algebra, where much better bounds for sequential algorithms are possible.
We show that, for each $\varepsilon>0$, there are Yannakakis-based algorithms for
acyclic join queries and free-connex acyclic join queries in the dictionary setting with upper work bounds of $\bigO((\IN\cdot
\OUT)^{1+\varepsilon})$ and  $\bigO((\IN+
\OUT)^{1+\varepsilon})$, respectively.  Two other results are work-optimal algorithms
for queries of the
semijoin algebra, and  algorithms for join queries that are almost
worst-case optimal and almost
work-optimal.

We emphasise that the stated results do not claim a fixed algorithm
that has an upper work bound $W$ such that, for every
$\varepsilon>0$,  it holds $W\in\bigO((\IN\cdot
\OUT)^{1+\varepsilon})$. It rather means that there is
a uniform algorithm that has $\varepsilon$  as a parameter and has the stated
work bound, for each fixed $\varepsilon>0$. The linear factor hidden
in the $\bigO$-notation thus depends on $\varepsilon$. This holds analogously
for our other upper bounds of this form.

Furthermore, we consider the semijoin algebra, conjunctive queries with their generalised hypertree width as a
parameter, and worst-case optimal join
evaluation.

\paragraph*{Outline.}
This article roughly consists of three parts.

The first part consists of three sections. \autoref{section:ctp:lower-bounds} reviews some upper and lower bounds for CRCW PRAMs for
some basic operations from the literature and  presents a constant
time algorithm that sorts numbers of polynomial size in a padded
fashion. \autoref{section:ctp:db-basics} explains how databases are
represented by arrays and discusses the three settings that we
consider.
\autoref{section:ctp:alg-for-basic-ops} presents some algorithms for basic operations on such arrays.

The second part presents algorithms for relational operators in
\autoref{section:ctp:alg-for-db-ops}.

The third part studies query evaluation in the above mentioned
three contexts in which (arguably) efficient algorithms exist for
sequential query evaluation.  \autoref{section:ctp:query-evaluation} presents our results for the dictionary setting and
\autoref{section:ctp:othersettings} discusses how results for the other settings can be obtained,
primarily by transforming the database at hand into the dictionary
setting. The results for query evaluation are summarised in \autoref{table:overview-main-results}.

The article ends with a conclusion and some
additional details about the results of Goldberg and Zwick \cite{GoldbergZ95} in the
appendix.

\paragraph*{Related versions.}
This article is an extended version of our ICDT 2023 paper
\cite{DBLP:conf/icdt/KeppelerSS23}. However, whereas that paper relied
on the compaction technique of Hagerup
\cite{DBLP:journals/ipl/Hagerup92}, this article uses
techniques from Goldberg and Zwick \cite{GoldbergZ95}, which came to our attention
shortly before the final version of \cite{DBLP:conf/icdt/KeppelerSS23}
was due. These techniques resulted in considerably better  bounds for acyclic and
free-connex acyclic queries. %
Furthermore, in contrast to the conference paper, we additionally address the evaluation of these queries and queries of the semijoin algebra in the ordered setting, and the (weakly) worst-case optimal evaluation of join queries in the general and the dictionary setting.

Most results are also part of the thesis of one of the authors \cite[Chapter~3]{thesis:christopher}, which has been written in parallel to this article.
However, the exact definitions of the settings and the presentation of the algorithms and results have diverged and differ slightly.

\paragraph*{Related work.} %
The initial motivation for our research was to study parallel constant-time algorithms that maintain query results in a dynamic setting, where the database can change over time. This is related to the \emph{dynamic complexity} framework of Patnaik and Immerman
\cite{PatnaikI97} and similarly Dong and Su \cite{DongS95}. In this framework, it is studied, which queries can be maintained by dynamic programs
that are allowed to perform a constant time parallel PRAM computation
after each change in the database. In fact, the dynamic programs are
usually expressed by first-order formulas, but, as stated above, the
$\bigO(1)$-time PRAM point of view is equivalent. Research on
dynamic complexity  has mostly concentrated on queries that cannot be expressed in the
relational algebra (e.g., involving transitive closure as for the
reachability query). It has mainly considered expressibility
questions, i.e., which queries can be maintained in constant time, at
all. Recently, the investigation of such dynamic programs from the work-efficiency point of
view has been initiated \cite{SchmidtSTVZ21}. In this line of research, the plan was  to study
the work-aware dynamic complexity of database queries. However, it then
turned out that, due to a lack of existing results, it makes sense to first study how work-efficient queries can be evaluated by constant-time PRAM algorithms in the purely static setting.

As mentioned above, to the best of our knowledge, work-efficiency of \ctpalgos has not been investigated in the literature.
However, parallel query evaluation has, of course, been studied for various other models.
In a recent article, Wang and Yi studied query evaluation by circuits
\cite{WangY22}.
Although this is in principle closely related, the article ignores polylogarithmic depth factors and therefore does not study $\bigO(1)$-time.

Denninghoff and Vianu \cite{DenninghoffV91} proposed \emph{database method schemas} as model for studying parallel query evaluation.
While these schemas are defined for \emph{object-oriented databases} they can also be applied to relational databases.
However, although \cite{DenninghoffV91}  considers constant-time parallel evaluation, they do not study the work of \ctpalgos.
Notably, they prove that PRAMs and database method schemas can simulate each other within a logarithmic time factor under some (reasonable) assumptions \cite[Section~5]{DenninghoffV91}.

Finally, query evaluation has been studied extensively in the Massively Parallel Communication (MPC) model, originally introduced by Beame et al. \cite{DBLP:journals/jacm/BeameKS17}.
Particularly noteworthy in the context of our research is that worst-case optimal algorithms \cite{Koutris2016} and Yannakakis' algorithm \cite{DBLP:conf/pods/Hu019} have been studied within the MPC model.
However, the research focussed -- as the name of the model suggests -- on the communication between servers, or more precisely the number of required communication rounds and the \emph{load} -- that is, the maximal amount of data a server receives during a round.
For the actual computation in between the communication phases, the MPC model has no notion of computation steps; which is why, for studying work bounds, we opted for the PRAM model instead.

\section{A Primer on Constant-Time CRCW PRAM Algorithms}\label{section:ctp:lower-bounds}

In this section, we define PRAMs and review some results about $\bigO(1)$-time parallel
algorithms on CRCW PRAMs from the literature. More precisely, we recall some negative
results, in particular for sorting and some positive results,
involving quite powerful constant-time algorithms for approximative
prefix sums and linear compaction, and padded sorting.
We also give a  $\bigO(1)$-time algorithm for padded
sorting of sequences of numbers of polynomial size that will be useful
for our purposes.

\subsection{Parallel Random Access Machines (PRAMs)}\label{section:prams}
A \emph{parallel random access machine (PRAM)} consists of a number of
processors that work in parallel and use a shared memory.
The processors are consecutively numbered from \(1\) to some number
\(p_\textnormal{max}\) that depends on the size of the input.
Each processor can access its \emph{processor number} and freely use it in computations.
For instance, a common application is to use it as an array index.
The memory consists of finitely many, consecutively numbered memory cells.
The number of a memory cell is called its \emph{address}.
The input for a PRAM consists of a sequence of $n$ but strings of
length $\bigO(\log n)$, residing in some designated input memory cells.
We presume that the \emph{word size}, that is, the number of bits that
can be stored in a memory cell, is bounded by \(\bigO(\log
n)\).

A processor can access any memory cell in
one time step, given its address.
Furthermore, we assume that the usual arithmetic and bitwise
operations can be performed in one time step by any processor.

The \emph{work}  of a
PRAM computation is the sum of the number of all computation steps of
all processors made during the computation.
We define the space  required by a PRAM computation as the maximal
address of any memory cell accessed during the
computation.\footnote{For simplicity, we do not count the space that
  a processor might need for its working register. For some of our
  algorithms, the space of a
  computation under this definition might be significantly smaller than the number of
  processors. However, we never state bounds of that kind, since we do
  not consider them useful. Thus, our space bounds for an algorithm are always at least
  as large as the work bounds.}

We mostly use
the \emph{Concurrent-Read Concurrent-Write model (CRCW PRAM)},
i.e., processors are allowed to read and write concurrently from and to
the same memory location. Weaker PRAM models as CREW or EREW are not able to evaluate relational algebra queries in constant time. In fact, CREW PRAMs require time $\Omega(\log n)$ to compute the logical OR of $n$ bits, and therefore cannot compute the query $\pi_\emptyset(R)$ deciding whether relation $R$ is empty in constant time \cite[Theorem 7]{CookDR86}.

More precisely, we mainly assume the
\emph{Arbitrary} CRCW PRAM model: if multiple processors concurrently write
to the same memory location, one of them, \enquote{arbitrarily},
succeeds. This is our standard model, and we often refer to it simply by \emph{CRCW PRAM}.

However, for some algorithms the weaker \emph{Common} model suffices,
where all processors need to write the same value into the same
location. On the other hand, the lower bounds below even hold for the
stronger \emph{Priority} model, where always the processor with the
smallest processor number succeeds in writing into a memory
cell.

It is well known that the
weakest of the three models, the Common CRCW PRAM, can simulate one step of the strongest model, the Priority CRCW PRAM, in
constant time
\cite[Theorem 3.2]{ChlebusDHR88}. The simulation in \cite{ChlebusDHR88} only requires an extra $\bigO(\log n)$ work factor,  but  might require an extra $\bigO(n)$-factor of space. The technique of \cite[Theorem 3.2]{ChlebusDHR88} can be easily adapted to yield an extra $\bigO(n^\epsilon)$ work factor,  but  also only an extra $\bigO(n^\epsilon)$-factor of space, for any $\epsilon>0$. Many of our upper bounds have an extra $n^\epsilon$-factor for work. For those results, the difference between the precise concurrent write mechanisms is not essential.

For a few algorithms even the Exclusive-Read\emph{
Exclusive-Write model (EREW PRAM)} model, where concurrent access is
forbidden (Proposition \ref{alg:selection}) or the \emph{CREW PRAM} model, where concurrent writing  is
forbidden (Propositions \ref{result:alg-search-fullylinkedB} and \ref{result:alg-search-fullylinkedA}), are sufficient.

We refer to \cite{DBLP:books/aw/JaJa92} for more details on PRAMs and to \cite[Section~2.2.3]{DBLP:books/el/leeuwen90/Boas90} for a discussion of alternative space measures.

As mentioned in  the introduction, the PRAM model is a theoretical algorithmic model. Most of the PRAM-related research has been done in the 1990s or early 2000s. However, given the growing built-in parallelism of modern hardware there has been considerable recent interest in the model both with respect to algorithms (e.g., \cite{CaoF23,AgarwalKLPWWZ24}), sometimes in the more abstract work-depth (or work-depth) model (see, e.g., \cite{LeeVC07}) and with respect to the transferability of PRAM algorithms to real computers (e.g., \cite{DhulipalaBS21,GhanimEB21,LiuFLZL24}).

\subsection{Lower and Upper Bounds for Constant-Time CRCW PRAM Algorithms}

Next, we are going to review some known results about PRAM algorithms
for some basic algorithmic tasks from the literature. These tasks deal
with arrays, a very natural data structure for PRAMs. Before we
describe the results, we first fix some notation for arrays.

An array \arr{A} is a sequence of consecutive memory cells.
The \emph{size} \(\lenof{\arr{A}}\) of \(\arr{A}\) is the number of memory cells it consists of.
By $\arr{A}[i]$ for $1\le i\le \lenof{\arr{A}}$ we refer to the $i$-th cell of \arr{A}, and to the
(current) content of that cell.
We call $i$ the \emph{index} of cell \(\arr{A}[i]\).
Furthermore, we assume that the size of an array is always available to all processors (for instance, it might be stored in a \enquote{hidden} cell with index $0$).
Given an index $i$, any processor can access cell $\arr{A}[i]$ in $\bigO(1)$ time with $\bigO(1)$ work.

We will also use arrays whose cells correspond to \(c\) underlying memory cells each, for some constant integer \(c > 0\).
In terms of some programming languages this corresponds to an array of structs (or records).
Since a single processor can read and write \(c\) memory cells in
constant time, we often blur this distinction.
For instance, we might write \(\arr{A}[i]\) for the $i$-th block of
\(c\) memory cells in some  array with  \(cn\) memory cells. In
particular, this will allow us to refer to the $i$-th tuple of a
relation that is stored in that array in a concise way.

\autoref{result:sort-lower} below  implies that we cannot expect, in general, that
query results can be stored in a compact fashion, that is, the result
$t_1, \ldots, t_m$ of $m$ tuples is stored in an array of size $m$.
This follows from the following lower bound for computing the parity
function on CRCW PRAMs. This Boolean function tests whether the number of ones in a given bit string is odd.

\begin{propC}[{\cite[Theorem~10.8]{DBLP:books/aw/JaJa92}}]\label{prop:lb-parity}
	Any algorithm that computes the parity function of $n$
        variables and uses a polynomial number of processors requires $\Omega\left(\frac{\log{n}}{\log{\log{n}}}\right)$ time on a Priority CRCW PRAM.
\end{propC}

We thus consider arrays which may have \emph{empty cells}.\footnote{In terms of some programming languages an empty cell corresponds to a cell containing a special \texttt{null} value. It can be indicated by storing a special value.}
An array \emph{without} empty cells is \emph{compact}.
We say that an algorithm \emph{compacts} an array \(\arr{A}\) with \(k\) non-empty cells, if it computes a compact array of size \(k\) containing all values from the non-empty cells of~\(\arr{A}\).
\autoref{prop:lb-parity} implies that such an algorithm cannot run in
constant-time on our PRAM models, since otherwise the parity function
could be computed by making all cells with value~\(0\) empty, compacting
and testing whether the compacted array has odd length.
  Similarly, \autoref{prop:lb-parity}  also implies a lower bound for sorting an array.
\begin{cor}\label{result:sort-lower}\hfill
  \begin{enumerate}[(a)]
  \item 	Any algorithm that compacts an array with $n$ cells,
    some of which might be empty,  and uses a polynomial number of processors,
    requires $\Omega\left(\frac{\log{n}}{\log{\log{n}}}\right)$ time
    on a Priority CRCW PRAM.\label{result:sort-lower-compaction}
    \item 	Any algorithm that sorts an array with $n$ natural
    numbers and uses a polynomial number of processors
    requires $\Omega\left(\frac{\log{n}}{\log{\log{n}}}\right)$ time
    on a Priority CRCW PRAM.
    This holds even if all numbers in the array are from \(\{0,1\}\).
  \end{enumerate}
\end{cor}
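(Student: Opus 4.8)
The plan is to reduce the parity function to each of the two tasks and invoke \autoref{prop:lb-parity}. Concretely, from a Priority CRCW PRAM algorithm that compacts (respectively, sorts) in time $T$ using a polynomial number of processors, I will build a Priority CRCW PRAM algorithm that computes the parity of $n$ bits in time $T + \bigO(1)$, still with polynomially many processors; \autoref{prop:lb-parity} then forces $T \in \Omega(\log n / \log\log n)$. All the pre- and post-processing described below runs in $\bigO(1)$ time and can be carried out on the Priority model (indeed already on much weaker models), so the composition is legitimate and the extra processors it uses keep the processor count polynomial.

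For part~(a), given input bits $x_1, \dots, x_n$, one parallel step with $n$ processors builds an array $\arr{A}$ of size $n$ whose cell $i$ is left empty when $x_i = 0$ and is filled with the value $1$ when $x_i = 1$. Applying the assumed compaction algorithm yields a compact array whose size $k$ equals the number of ones among the $x_i$. Since the size of an array is available in its hidden cell of index~$0$, a single processor reads $k$ and outputs its least significant bit, which is exactly the parity of $x_1, \dots, x_n$; extracting the lowest bit of an $\bigO(\log n)$-bit word is an $\bigO(1)$-time bitwise operation under our machine assumptions.

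For part~(b) the bits $x_1, \dots, x_n \in \{0,1\}$ already form an array of natural numbers, so the stronger formulation with numbers restricted to $\{0,1\}$ is precisely what this reduction provides. After sorting, the array has the form $0^a 1^b$ with $a + b = n$, and we want the parity of $b$. Rather than summing (which would itself be parity-hard), I initialise a cell $p$ with the sentinel value $n+1$ and let processor $i$, for $1 \le i \le n$, write $i$ into $p$ whenever the sorted array has a $1$ at position $i$ and a $0$ at position $i-1$, using the convention that position $0$ holds $0$. In a sorted $0/1$ array at most one index satisfies this condition --- the index $a+1$, and only when $b \ge 1$ --- so there is no write conflict; if $b = 0$ the sentinel survives. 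Then $b = n - p + 1$ when $p \le n$ and $b = 0$ otherwise, and again a single processor outputs the least significant bit of $b$. This adds only $\bigO(1)$ time and $\bigO(n)$ processors on top of the sorting algorithm.

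I do not expect a genuine obstacle here; the only two points that need care are (i) that the post-processing must not secretly recompute a parity-like aggregate --- which is why, for sorting, I detect the single $0$-to-$1$ transition instead of counting the ones --- and (ii) that reading the parity of a word-sized number costs only $\bigO(1)$, which holds because the usual bitwise operations on $\bigO(\log n)$-bit words are assumed to take constant time.
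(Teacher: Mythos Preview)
Your proposal is correct and follows essentially the same approach as the paper: reduce parity to each task via \autoref{prop:lb-parity}, using the size of the compacted array for~(a) and the position of the $0$-to-$1$ transition in the sorted array for~(b). The paper's own argument is terser (it merely says ``Similarly'' for sorting), so your explicit handling of~(b) --- detecting the unique transition rather than counting --- is a welcome elaboration, but the underlying idea is identical.
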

Since the priority model is a stronger model, the lower bounds stated in \autoref{result:sort-lower} also apply to the Arbitrary and Common CRCW PRAM models, cf., e.g.\ \cite{DBLP:books/aw/JaJa92}.

In the remainder of the section, we have a closer look at further lower bounds and at almost matching upper bounds for approximate compaction and sorting.

\subsubsection{Approximate Compaction}\label{subsec:aprroxcompact}

\autoref{result:sort-lower}
rules out any constant-time algorithms that compact a
non-compact array (for the stated models). Therefore, the best we can
hope for is approximate compaction, i.e., compaction that guarantees
some maximum size of the result array in terms of the number of
non-empty entries. In the following, $\lambda(n)$ will be an
\emph{accuracy function}.

We say that an array is
of size $n$ is \emph{\(\lambda\)-compact}, for a function
\(\lambda\), if it has at least \(\frac{n}{1+\lambda(n)}\)
non-empty cells. That is, $n$ is at most $(1+\lambda(n))m$, where $m$
is the number of non-empty cells. An
algorithm achieves \emph{$\lambda$-approximate
  compaction}, if it compacts   arrays into  \(\lambda\)-compact
arrays. We call it \emph{order-preserving}, if the relative order of
array elements does not change.

We can always assume that $\lambda(n)\ge
\frac{1}{n}$ holds, since otherwise, even with only one empty cell, we would have  $n\le (1+\lambda(n))(n-1)<n$.
Later on in this article $\lambda$ will be a constant or of
the form \((\log n)^{-a}\), for some $a$.

We will see that approximate compaction can be done quite well, but the
following result implies a trade-off between time and number of
processors.

\begin{propC}[{\cite[Theorem~5.3]{DBLP:journals/iandc/Chaudhuri96}}]\label{prop:lowerbound:lac}
Any algorithm that achieves  $\lambda$-approximate compaction, for
an accuracy function $\lambda$, and uses
$\mu(n)n$ processors on a Priority CRCW PRAM, for some function $\mu$, requires time
$\Omega(\log{\frac{\log{n}}{\log{(\mu(n)\lambda(n) + 2)}}})$.
\end{propC}
\begin{cor}\label{coro:compact-delta}
  Any algorithm that achieves  $\lambda$-approximate compaction  for a
  constant $\lambda>0$  within
  constant time requires $\Omega(n^{1+\varepsilon})$ processors, for some constant
  $\varepsilon>0$.
\end{cor}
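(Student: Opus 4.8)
The plan is to obtain the corollary directly from \autoref{prop:lowerbound:lac} by unravelling the asymptotics in its time bound. Suppose, towards a contradiction, that some algorithm achieves \(\lambda\)-approximate compaction for a fixed constant \(\lambda>0\), runs in time \(T(n)\le c\) for some constant \(c\) and all sufficiently large \(n\), and uses only \(\mu(n)n\) processors on a Priority CRCW PRAM. Instantiating \autoref{prop:lowerbound:lac} with the constant accuracy function \(\lambda(n)=\lambda\), there is a constant \(a>0\) with \(T(n)\ge a\cdot\log\frac{\log n}{\log(\mu(n)\lambda+2)}\) for all large \(n\). (If \(\mu(n)\lambda+2\ge n\) then already \(\mu(n)=\Omega(n)\) and the number of processors is \(\Omega(n^2)\), so we may assume the quantity inside the outer logarithm exceeds \(1\).)

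Combining \(T(n)\le c\) with this lower bound yields \(\log\frac{\log n}{\log(\mu(n)\lambda+2)}\le c/a\), hence \(\log(\mu(n)\lambda+2)\ge 2^{-c/a}\log n\), and therefore \(\mu(n)\lambda+2\ge n^{\delta}\) with \(\delta:=2^{-c/a}>0\) a constant. Since \(\lambda\) is a positive constant, this forces \(\mu(n)\ge \frac{1}{2\lambda}n^{\delta}\) for all sufficiently large \(n\), so the number of processors is \(\mu(n)n=\Omega(n^{1+\delta})\). Taking \(\varepsilon:=\delta\) proves the claim for the Priority model; since the Priority CRCW PRAM is at least as powerful as the Arbitrary and Common models, any constant-time compaction algorithm for the weaker models can be run on a Priority PRAM with the same processor count, so the bound carries over. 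One could equivalently argue contrapositively: if the processor count were \(o(n^{1+\varepsilon})\) for every \(\varepsilon>0\), then \(\log(\mu(n)\lambda+2)=o(\log n)\), which makes the time bound of \autoref{prop:lowerbound:lac} grow as \(\omega(1)\), contradicting constant running time.

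I expect no genuinely hard step here: the entire content is bookkeeping with the hidden constants of the \(\Omega\)-notation in \autoref{prop:lowerbound:lac}, checking that the additive \(+2\) and the factor \(\lambda\) are harmless because \(\lambda\) is a fixed constant bounded away from \(0\), and confirming that the edge case where \(\mu(n)\) is already polynomially large is even more favourable for the claim. The only point worth stating carefully is the quantifier order — we obtain one specific \(\varepsilon=2^{-c/a}\) determined by the (constant) running time \(c\) and the constant \(a\) from Chaudhuri's bound, which is exactly the "for some constant \(\varepsilon>0\)" in the statement.
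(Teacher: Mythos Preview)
Your proof is correct and follows essentially the same route as the paper: both instantiate \autoref{prop:lowerbound:lac} with constant \(\lambda\) and constant running time, then algebraically unwind the inequality to obtain \(\mu(n)\lambda+2\ge n^{\delta}\) for a constant \(\delta>0\), whence \(\mu(n)n=\Omega(n^{1+\delta})\). Your additional remarks on the edge case \(\mu(n)\lambda+2\ge n\), the carry-over to weaker models, and the contrapositive phrasing are sound but not needed for the argument.
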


\begin{proof}
  Let an algorithm be given that achieves  $\lambda$-approximate
  compaction in constant time~$t$. By \autoref{prop:lowerbound:lac} there
  is a constant $c$ such that $t\ge c
  \log{\frac{\log{n}}{\log{(\mu(n)\lambda + 2)}}}$.
  Therefore, $2^t \ge \big(\frac{\log{n}}{\log{(\mu(n)\lambda +
      2)}}\big)^c$, and thus $\frac{\log{n}}{\log{(\mu(n)\lambda +
      2)}}\le 2^{t/c}$ and $\log{n}\le 2^{t/c} \log{(\mu(n)\lambda +
      2)}$.
  Finally, we get $(\mu(n)\lambda + 2)^{2^{t/c}} \ge n$ and therefore
  $\mu(n)\lambda + 2 \ge n^{1/c'}$, where $c'=2^{t/c}$
  is a constant. Since we assumed $\lambda$ to be constant, we can
  conclude $\mu(n)=\Omega(n^{1/c'})$ and the statement of the
  corollary follows.
\end{proof}

As a consequence of \autoref{result:sort-lower} and \autoref{coro:compact-delta}, our
main data structure to represent intermediate and result relations are
(not necessarily compact) arrays. Furthermore, the best work bounds we
can expect for algorithms that compute $c$-approximate arrays, for
some constant $c$,  are of the form $\bigO(n^{1+\varepsilon})$, for
some $\varepsilon>0$.

It turns out that $\lambda$-approximate compaction algorithms matching the
lower bound of \autoref{coro:compact-delta} indeed exist, even
order-preserving ones, and for $\lambda\in o(1)$.

\begin{propC}[{\cite{GoldbergZ95}}]\label{result:lin-compaction}
	For every \(\varepsilon>0\) and \(a>0\) there is an \ctpalgo
        that achieves order-preserving $\lambda$-approximate
        compaction for $\lambda(n)=(\log n)^{-a}$.
	The algorithm requires work and space \(\bigO(n^{1+\varepsilon})\) on a Common CRCW PRAM.
      \end{propC}
In particular, order-preserving $c$-approximate
compaction is possible for every constant $c>0$.

      \autoref{result:lin-compaction} is not stated in this form in \cite{GoldbergZ95}, but
      it readily follows from the next result on consistent
      approximate prefix sums.

      Let  \arr{A} be an array of size $n$ containing  integers. An
      array \arr{B} of the same length \emph{contains $\lambda$-consistent
		prefix sums}  for \arr{A}, if, for each $i\in\range{n}$, it holds
      \begin{itemize}
      \item $\sum_{j=1}^i \arr{A}[j] \le \arr{B}[i] \le (1+\lambda(n))
        \sum_{j=1}^i \arr{A}[j]$ and
      \item $\arr{B}[i]-\arr{B}[i-1] \ge \arr{A}[i]$.
      \end{itemize}
		We revisit the proof, outlined by \cite{GoldbergZ95}, of the following result in \autoref{sec:approx-prefix-sums} for the sake of completeness, and to assert the space bound, which is not explicitly stated by~\cite{GoldbergZ95}.
\RestateInit{\restateprefixsums}
\begin{restatable}[{{\cite[Theorem~4.2]{GoldbergZ95}}}]{propC}{prefixsums}
   \label{result:prefix-sums}\RestateRemark{\restateprefixsums}
	For every \(\varepsilon>0\) and \(a>0\) there is an \ctpalgo
        that computes $\lambda$-consistent prefix sums for any array of
		  length $n$ with $(\log n)$-bit integers, for $\lambda(n)=(\log n)^{-a}$.
	The algorithm requires work and space \(\bigO(n^{1+\varepsilon})\) on a Common CRCW PRAM.
      \end{restatable}
      Indeed, \autoref{result:lin-compaction} follows by using an
      array \arr{A} such that $\arr{A}[j]=1$ if the $j$-th cell of the
      array that is to be compacted  is non-empty and 0, otherwise. The approximate prefix sums then
      yield the new positions for the non-empty cells.

	  We emphasise that in \autoref{result:lin-compaction} the size of the cell \emph{contents} does not matter.

\subsubsection{Processor Allocation}

A crucial task for PRAM algorithms is to assign processors properly to
tasks. Sometimes this is straightforward, e.g., if each element of
some array needs to be processed by a separate processor. However,
sometimes the assignment of processors to tasks is not clear, a
priori. This will, in particular, be the case for the evaluation of
the Join operator. Processor allocation can be formalised as follows.

A \emph{task description} \(d = (m, \tuple{x})\) consists of a number
\(m\) specifying how many processors are required for the task, and a
constant number of additional numbers \(\tuple{x} = (x_1,\ldots,x_k)\)
serving as \enquote{input} for the task, e.g.\ \(\tuple{x}\) may
contain links to input arrays or which algorithm is used to solve the
task.%
A \emph{task schedule} for a sequence \(d_1,\ldots,d_n\) of task descriptions is an array \(\arr{C}\) of size at least \(\sum_{i=1}^n m_i\) such that, for every \(i\in\range{n}\), there are at least \(m_i\) consecutive cells \(\arr{C}[j_i], \ldots, \arr{C}[j_i+m_i-1]\) with content \(d_i\) and each of these cells is augmented by a link to \(\arr{C}[j_i]\), i.e.\ the cell with the smallest index in the sequence.
The tasks specified by \(d_1,\ldots,d_n\) can then be solved in parallel using \(\lenof{\arr{C}}\) processors: Processor \(j\) can lookup the task it helps to solve in cell \(\arr{C}[j]\), and using the link it can also determine its relative processor number for the task.
If a cell \(\arr{C}[j]\) is empty, processor \(j\) does nothing.

The processor allocation problem is closely related to the prefix sums and interval allocation problems.
We believe the following lemma is folklore.
The relationship of these problems and an analogous result for randomised PRAMs and \(\bigO(\log^* n)\) time are, for instance, discussed in \cite[Section~2]{DBLP:conf/stacs/Hagerup92a}.
For the sake of completeness we provide a proof for the deterministic constant-time case here.

\begin{lem}\label{result:task-scheduling}
	For every \(\varepsilon > 0\) and \(\padeps > 0\) there is a \ctpalgo that, given an array \(\arr{T}\) containing a sequence of task descriptions \(d_1 = (m_1,\tuple{x}_1),\ldots,d_n=(m_n,\tuple{x}_n)\), computes a schedule \(\arr{C}\) for \(d_1,\ldots,d_n\) of size \((1+\padeps) \sum_{i=1} m_i\).
	It requires work and space \(\bigO(\lenof{\arr{T}}^{1+\varepsilon} + \lenof{\arr{C}}^{1+\varepsilon})\) on a Common CRCW PRAM.
\end{lem}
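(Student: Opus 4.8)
The plan is to reduce the scheduling task to computing approximate prefix sums of the multiplicities $m_1,\dots,m_n$ (via \autoref{result:prefix-sums}) and then to \emph{fill in} the schedule array, where the latter is a batched predecessor/merging problem that I would solve by a recursion of constant depth. Two degenerate cases are dealt with first: tasks with $m_i = 0$ require no processors and can be dropped (if one prefers, they are removed beforehand by one application of approximate compaction, \autoref{result:lin-compaction}), so we may assume all $m_i \ge 1$; and for $n$ below a constant $n_0$ chosen so that $(\log n)^{-a} \le \padeps$ for $n \ge n_0$, a single processor computes the exact prefix sums $S_i = \sum_{j\le i} m_j$ in $\bigO(1)$ steps, after which each of the $S_n \le \sum_i m_i$ cells of $\arr{C}$ uses one processor to locate, among the $\bigO(1)$ boundaries $S_0,\dots,S_n$, the task it belongs to and to store that task's description together with a link to the first cell of its block. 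So assume $n \ge n_0$.

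Now apply \autoref{result:prefix-sums} to $(m_1,\dots,m_n)$ with accuracy $\lambda(n) = (\log n)^{-a}$, obtaining in work and space $\bigO(n^{1+\varepsilon})$ an array $\arr{B}$ with (setting $\arr{B}[0] := 0$) $\sum_{j\le i} m_j \le \arr{B}[i] \le (1+\lambda(n))\sum_{j\le i} m_j$ and $\arr{B}[i] - \arr{B}[i-1] \ge m_i$ for all $i \in \range{n}$. Since $n \ge n_0$, we have $N := \arr{B}[n] \le (1+\padeps)\sum_i m_i$, so we may allocate $\arr{C}$ of size $(1+\padeps)\sum_i m_i$, leaving its last cells empty. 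By the two displayed properties, the intended block of task $i$, namely $\arr{C}[\arr{B}[i-1]+1],\dots,\arr{C}[\arr{B}[i-1]+m_i]$, has $m_i$ cells, these blocks are pairwise disjoint, and all of them lie within $[1,N]$; moreover $\arr{B}[0] < \arr{B}[1] < \dots < \arr{B}[n]$ since all $m_i \ge 1$. It therefore remains to compute, for every $k \in \range{N}$, the index $\mathrm{host}(k) := \min\{\, i : \arr{B}[i] \ge k \,\}$ of the block whose cell range contains $k$; processor $k$ then checks whether in addition $k \le \arr{B}[\mathrm{host}(k)-1] + m_{\mathrm{host}(k)}$ and, if so, writes $d_{\mathrm{host}(k)}$ augmented with the link $\arr{B}[\mathrm{host}(k)-1]+1$ into $\arr{C}[k]$, and otherwise leaves $\arr{C}[k]$ empty.

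Computing $\mathrm{host}$ for all $k$ amounts to merging the two sorted sequences $\arr{B}[0] < \dots < \arr{B}[n]$ and $1,2,\dots,N$. I would do this in $\bigO(1)$ time with work and space $\bigO((n+N)^{1+\varepsilon})$ by a recursion whose depth is a constant depending only on $\varepsilon$. At the base, two sorted sequences of total length $v$ are merged by brute force: one processor per pair of elements computes, for each element, its rank in the other sequence, in $\bigO(1)$ time with $\bigO(v^2)$ work and with a single writer per cell, so the Common model suffices. For the recursive step, set $h := \lceil (n+N)^{1-1/r}\rceil$ for a constant $r \ge 2$; cut each sequence into consecutive chunks of $\le h$ elements (hence $\bigO((n+N)^{1/r})$ chunks each); brute-force merge the $\bigO((n+N)^{1/r})$ chunk boundaries of the two sequences, using $\bigO((n+N)^{2/r}) \le \bigO((n+N)^{1+\varepsilon})$ work; since value ranges of query-chunks have width $\le h$, this identifies for each query-chunk the $\bigO(1)$ consecutive $\arr{B}$-chunks that can contain hosts of its queries, so recursing independently on each of the resulting $\bigO((n+N)^{1/r})$ (query-chunk, $\arr{B}$-chunk) pairs --- each a merge of size $\bigO(h) = \bigO((n+N)^{1-1/r})$ --- resolves every query in exactly one subproblem. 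A level-by-level accounting shows the total work at each level is $\bigO((n+N)^{1+\varepsilon})$; stopping the recursion once the subproblem size has dropped below $(n+N)^{\varepsilon}$ and finishing those subproblems by brute force costs $\bigO((n+N)^{1+2\varepsilon})$ overall, and this occurs after $\bigO(\log(1/\varepsilon)) = \bigO(1)$ levels. Rescaling $\varepsilon$, adding the prefix-sum cost, and using $n \le \lenof{\arr{T}} = \bigO(n)$ and $N = \Theta(\lenof{\arr{C}})$ yields the stated bounds.

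I expect the filling step to be the main obstacle: obtaining a genuinely constant-time merging routine (rather than an $\bigO(\log\log n)$-time one) while keeping the work at $\bigO((n+N)^{1+\varepsilon})$ for arbitrarily small $\varepsilon$ forces the recursion depth to be bounded in terms of $\varepsilon$ --- it must stop at subproblems of size $(n+N)^{\varepsilon}$ rather than recurse down to constant-size instances --- and it requires the careful work accounting sketched above. By contrast, the reduction to approximate prefix sums and the bookkeeping for the $(1+\padeps)$ slack are comparatively routine.
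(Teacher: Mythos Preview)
Your proposal is correct and follows the same high-level strategy as the paper: compute $\lambda$-consistent approximate prefix sums of $m_1,\dots,m_n$ via \autoref{result:prefix-sums} to obtain the block boundaries $\arr{B}[0]<\dots<\arr{B}[n]$, allocate $\arr{C}$ of size $(1+\lambda)\sum_i m_i$, and then fill in every cell of $\arr{C}$ with its task.

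The difference lies in the filling step. The paper simply writes each task description $d_i$ into the single ``lead'' cell $\arr{C}[\arr{B}[i-1]+1]$ and then invokes the predecessor-link \autoref{result:predsucc} on the array $\arr{C}$ itself (work and space $\bigO(\lenof{\arr{C}}^{1+\varepsilon})$): every cell thereby learns its nearest non-empty predecessor, reads the task stored there, and checks whether its offset is at most~$m_i$. You instead compute $\mathrm{host}(k)$ for all $k$ by merging the two sorted sequences $\arr{B}[0]<\dots<\arr{B}[n]$ and $1,\dots,N$ with a constant-depth recursion. These are the same underlying problem; the paper black-boxes it as a single call to an already-stated proposition, whereas you rederive a tailored variant. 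One point worth making explicit in your write-up: your claim that each query-chunk touches only $\bigO(1)$ consecutive $\arr{B}$-chunks hinges on the assumption $m_i\ge 1$, which forces $\arr{B}[i]-\arr{B}[i-1]\ge 1$ and hence at most $h$ of the $\arr{B}$-values in any length-$h$ value interval --- without that assumption the claim fails.
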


\begin{proof}
	The algorithm first determines, for each task, a \enquote{lead processor}, which will be the processor with the lowest processor number assigned to the task.

	For this purpose, it computes consistent \(\padeps\)-approximate prefix sums \(s_1,\ldots,s_n\) for the sequence \(m_1,\ldots,m_n\) using \autoref{result:prefix-sums}.
	It then assigns the first task \(d_1\) to processor \(1\), and, for \(i\in\range[2]{n}\), task \(d_i\) to processor \(s_{i-1} + 1\).
	Consequently, the algorithm allocates an array \(\arr{C}\) of size \(s_n\) for the schedule and sets \(\arr{C}[0] = d_1\), and \(\arr{C}[s_{i-1} + 1] = d_i\), for \(i \ge 2\).
	All remaining cells are initially empty.
	Since \(s_n \le (1+\padeps)\sum_{i=1}^n m_i\), the array \(\arr{C}\) has the desired size.

	Thanks to \autoref{result:prefix-sums} computing the prefix sums requires work and space \(\bigO(\lenof{\arr{T}}^{1+\varepsilon})\).

	To assign the remaining \(m_i - 1\) processors to task
        \(d_i\), for each $i$, we observe that, for each \(i\), there are at least \(m_i - 1\) empty cells in \(\arr{C}\) between the cell containing \(d_i\) and the cell containing \(d_{i+1}\), because the prefix sums \(s_1,\dots,s_n\) are consistent.
	Thus, it suffices to compute links from each empty cell
        \(\arr{C}[j]\) to the non-empty cell  \(\arr{C}[k]\) with
        maximal number $k\le j$ and to test, whether $j-k$ is at most
        $m_i$ for the task $d_i$ with  \(\arr{C}[k]=d_i\).
	Due to the following \autoref{result:predsucc} this can be done with work and space \(\bigO(\lenof{\arr{C}}^{1+\varepsilon})\).
\end{proof}

We say that an array \arr{B} \emph{provides predecessor links} for a
(possibly non-compact)
array~\arr{A}, if it has the same length as \arr{A} and, for each
index $j$, \(\arr{B}[j]\) is the largest number $i<j$, for which
\(\arr{A}[i]\) is non-empty.
\begin{prop}\label{result:predsucc}
  For every $\varepsilon>0$, there is a  \ctpalgo that computes, for
  an array \arr{A}, an array \arr{B} that provides predecessor links
  for \arr{A} with work and space $\bigO(\lenof{\arr{A}}^{1+\varepsilon})$ on a Common CRCW PRAM.
\end{prop}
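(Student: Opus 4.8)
The plan is to build a balanced tree of \emph{constant} depth over the cells of \arr{A} and, in one bottom-up and one top-down sweep, to propagate for every node the index of the rightmost non-empty cell (i)~inside the node's range, respectively (ii)~strictly to the left of the node's range; the predecessor links are then read off at the leaves. Fix $k := \lceil 1/\varepsilon\rceil$. After padding \arr{A} with empty cells we may assume $\lenof{\arr{A}} = m^k$ with $m = \lceil n^{1/k}\rceil = \bigO(n^{1/k})$; this enlarges the array by at most the constant factor $2^k$. For $\ell \in \{0,\dots,k\}$, level $\ell$ has $m^{k-\ell}$ nodes, and the $t$-th level-$\ell$ node ``owns'' the cell range $[t m^\ell+1,\,(t{+}1)m^\ell]$; every level-$\ell$ node with $\ell\ge 1$ has as children the $m$ level-$(\ell{-}1)$ nodes owning its sub-ranges. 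All parent/child/range relations are $\bigO(1)$ integer arithmetic, so each processor can decode from its number which node, and which ordered pair of that node's children, it is responsible for. In particular this lemma uses no processor-allocation subroutine, which is essential because \autoref{result:task-scheduling} depends on it.

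First (bottom-up) sweep: for each node $v$ compute $L(v)$, the largest index $i$ with $\arr{A}[i]$ non-empty inside $v$'s range, using $0$ as the sentinel for ``none''. For a leaf this is immediate; for an internal node it is the maximum of the $L$-values of its $m$ children (indices increase from left to right, so the maximum picks exactly the rightmost non-empty child). Second (top-down) sweep: for each node $v$ compute $P(v)$, the largest index $i$ with $\arr{A}[i]$ non-empty among the cells strictly to the left of $v$'s range, again $0$ if none. We set $P$ of the root to $0$, and for a node $v$ that is the $j$-th child of a node $u$ we set $P(v)$ to the maximum of $P(u)$ and of $L(c)$ over all children $c$ of $u$ strictly to the left of $v$ -- because the cells strictly left of $v$'s range are precisely those strictly left of $u$'s range together with the ranges of $v$'s left siblings. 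An easy induction on the level confirms that $L$ and $P$ compute what is claimed; finally $\arr{B}[j] := P(v)$ for the leaf $v$ owning cell $j$, which is $0$ exactly when no non-empty cell precedes cell $j$.

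The only non-trivial primitive is taking the maximum of the $\le m{+}1$ values associated with an internal node in $\bigO(1)$ time. This is the standard tournament: allocate $m{+}1$ ``loser'' flags, let the processor handling an ordered pair of values mark the smaller one's flag (ties broken by index, so exactly one flag survives), and let the surviving value be written to the output cell. Since every write to a loser flag writes the same value ``loser'', the Common CRCW model suffices, and one internal node costs $\bigO(m^2)$ work. Summing $m^{k-\ell}\cdot\bigO(m^2)$ over the levels $\ell = 0,\dots,k$ gives total work and space $\bigO(k\, m^{k+1}) = \bigO(n^{1+1/k}) = \bigO(n^{1+\varepsilon})$ (the hidden constant absorbs $k$ and the padding factor $2^k$, both depending only on $\varepsilon$), while the running time is $\bigO(k) = \bigO(1)$. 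The main point to get right is keeping the tree of constant depth -- hence fanning out each range into $m = n^{1/k}$ pieces rather than, say, $\sqrt{n}$, which would force depth $\Theta(\log\log n)$ -- so that the per-node $\bigO(m^2)$ tournament overhead stays within the $n^{1+\varepsilon}$ budget; everything else is bookkeeping.
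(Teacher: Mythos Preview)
Your approach is correct and matches the paper's in spirit: both build a constant-depth hierarchy with fan-out $n^{\Theta(\varepsilon)}$ and spend polynomial-in-the-fan-out work at each node. The paper phrases it as nested intervals (round $r$ handles intervals of size $n^{r\delta}$ and fills in the still-missing links by a cubic-in-$n^\delta$ brute force per interval), whereas you use an explicit tree with a bottom-up $L$ pass and a top-down $P$ pass; these are two presentations of the same idea, and your observation that no processor-allocation subroutine is needed here is exactly right.

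There is, however, an arithmetic slip in your work accounting. In the top-down sweep every \emph{non-root} node---including each of the $m^{k}$ leaves---performs one tournament max over up to $m{+}1$ values at cost $\bigO(m^{2})$; hence the level-$0$ contribution alone is $m^{k}\cdot\bigO(m^{2})=\bigO(m^{k+2})$, and $\sum_{\ell=0}^{k}m^{k-\ell}\cdot\bigO(m^{2})=\Theta(m^{k+2})$, not $\bigO(k\,m^{k+1})$ as you wrote. This gives $\bigO(n^{1+2/k})$, so to land inside $\bigO(n^{1+\varepsilon})$ you should take $k=\lceil 2/\varepsilon\rceil$ rather than $\lceil 1/\varepsilon\rceil$ (equivalently, the paper sets $\delta=\varepsilon/2$ for exactly the same reason: the per-interval cost there is cubic, not quadratic). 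With that one-line correction the proof is complete.
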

\begin{proof}[Proof sketch]
	Let $n=\lenof{\arr{A}}$ and $\delta=\frac{\varepsilon}{2}$.
        All cells of \arr{B} are initially set to a null value.
        In
        the first round, the algorithm considers subintervals of
        length $n^\delta$ and establishes   predecessor links within
        them. To this end, it uses, for each interval a
        $n^\delta{\times} n^\delta$-table whose entries $(i,j)$ with
        $i<j$ are initialised by $1$, if $\arr{A}[i]$ is non-empty
        and, otherwise, by $0$. Next, for each triple $i,j,k$ of
        positions in the interval, entry $(i,j)$ is set to $0$ if
        $i<k<j$ and   $\arr{A}[k]$ is non-empty.  It is easy to see
        that afterwards entry $(i,j)$ still carries a 1 if and only if
        $i$ is the predecessor of $j$. For all such pairs $i<j$, $\arr{B}[j]$
        is set to $i$. Finally, for each interval, the unique
        non-empty cell with a null value is identified or a flag is set if the interval has no non-empty cells at all.
	For every interval, $(n^{\delta})^3 = n^{3\delta}$ processors suffice for this computation, i.e.\ one processor for each triple $i,j,k$ of positions in the interval.
	Since there are $\frac{n}{n^\delta} = n^{1-\delta}$ intervals, this yields an overall work of $n^{1-\delta}\cdot n^{3\delta} = n^{1+2\delta} = n^{1+\varepsilon}$.
	Similarly, the algorithm uses \(n^{1-\delta}\) many tables of size \(n^{2\delta}\) each.
	Thus, it requires \(n^{1+\delta} \le n^{1+\varepsilon}\) space.

	In the next round, intervals of length $n^{2\delta}$ are
        considered and each is viewed as a sequence of  $n^{\delta}$
        smaller intervals of length $n^{\delta}$. The goal in the
        second round is to establish predecessor links for the cells
        of each of the smaller intervals that have not yet obtained a
        predecessor link. This can be done similarly with the same
        asymptotic work and space as in round 1. After
        $\lceil\frac{1}{\delta}\rceil$ rounds, this process has
        established predecessor links for all cells (besides for the
        first non-empty cell and all smaller cells).
      \end{proof}
      An analogous result holds for the computation of successor links.

      \subsubsection{Padded Sorting}
\autoref{result:sort-lower} mentions another notorious obstacle for
\ctpalgos: they cannot sort (compactly) in constant time, at all. And similarly to \autoref{coro:compact-delta} linearly many
processors do not suffice to sort  in a (slightly)
non-compact fashion. The \emph{$\lambda$-padded sorting
	problem} asks to sort $n$ given items into an array of length
$(1+\lambda)n$ with empty cells in spots without an item.

\begin{propC}[{\cite[Theorem~5.4]{DBLP:journals/iandc/Chaudhuri96}}]\label{cited-result:psort-lower-bound}
	Solving the $\lambda$-padded sorting problem  on a Priority CRCW PRAM using $\mu(n)n$ processors requires $\Omega(\log{\frac{\log{n}}{\log((\lambda(n) + 2)(\mu(n)+1))}})$ time.
\end{propC}
The following corollary follows from
\autoref{cited-result:psort-lower-bound} in the same way as \autoref{coro:compact-delta} follows
from \autoref{prop:lowerbound:lac}.
\begin{cor}\label{coro:sort-delta} %
  Any  constant-time algorithm for the $\lambda$-padded sorting problem, for a
  constant $\lambda$,
 requires $n^{1+\varepsilon}$ processors, for some constant
  $\varepsilon>0$.
\end{cor}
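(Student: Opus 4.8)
The plan is to follow the derivation of \autoref{coro:compact-delta} from \autoref{prop:lowerbound:lac} almost verbatim, this time invoking the padded-sorting lower bound \autoref{cited-result:psort-lower-bound} instead. Suppose some algorithm solves the $\lambda$-padded sorting problem for a constant $\lambda$ in constant time $t$ using $\mu(n)n$ processors, for some function $\mu$. By \autoref{cited-result:psort-lower-bound} there is a constant $c>0$ with
\[
  t \ge c\,\log\frac{\log n}{\log\big((\lambda(n)+2)(\mu(n)+1)\big)}.
\]

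From this I would rearrange exactly as in the proof of \autoref{coro:compact-delta}: exponentiating gives $2^{t/c}\ge \frac{\log n}{\log((\lambda+2)(\mu(n)+1))}$, hence $\log n \le 2^{t/c}\,\log((\lambda+2)(\mu(n)+1))$, and therefore $\big((\lambda+2)(\mu(n)+1)\big)^{2^{t/c}} \ge n$. Setting $c' := 2^{t/c}$, which is a constant since $t$ and $c$ are, this yields $(\lambda+2)(\mu(n)+1) \ge n^{1/c'}$. Because $\lambda$ is a fixed constant, the factor $\lambda+2$ disappears into the $\Omega$, so $\mu(n) = \Omega(n^{1/c'})$ and thus the total processor count $\mu(n)n$ is $\Omega(n^{1+1/c'})$; the corollary then holds with $\varepsilon := 1/c'$.

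I do not expect a genuine obstacle here. The only point requiring mild care is that in \autoref{cited-result:psort-lower-bound} the accuracy $\lambda$ appears as the factor $(\lambda+2)$ multiplying $(\mu(n)+1)$, rather than inside a combined term $\mu(n)\lambda+2$ as in \autoref{prop:lowerbound:lac}; since $\lambda$ is constant this does not affect the asymptotics and the argument goes through unchanged. Finally, as the lower bound of \autoref{cited-result:psort-lower-bound} is stated for the Priority CRCW PRAM, it applies a fortiori to the weaker Arbitrary and Common models, so the statement holds for all PRAM variants considered here.
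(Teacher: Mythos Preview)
Your proposal is correct and mirrors the paper's approach exactly: the paper simply says the corollary follows from \autoref{cited-result:psort-lower-bound} in the same way as \autoref{coro:compact-delta} follows from \autoref{prop:lowerbound:lac}, and you have carried out precisely that derivation, correctly noting that the different placement of the constant $\lambda$ in the denominator is asymptotically irrelevant.
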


Thus, we cannot rely on $c$-padded sorting, for any constant $c$ in
constant time with a linear number of processors.

However, with the help of the algorithms underlying \autoref{result:prefix-sums} and \autoref{result:lin-compaction}, it is possible to sort
$n$ numbers of polynomial size in constant time with
\(\bigO(n^{1+\varepsilon})\) work in a $c$-padded fashion. We will see later that this is
very useful for our purposes since fixed-length
tuples of numbers of linear size can be viewed as numbers of
polynomial size.

\begin{prop}\label{result:sorting}
  For every constant \(\varepsilon > 0\), \(\padeps > 0\), and $c>0$ there is a
  constant-time Common CRCW PRAM algorithm that sorts up to $n$
  integers  $a_1,\ldots,a_n$ in the range
  \(\range[0]{n^c{-}1}\)  and stores them (in order) into an array of
  size \((1+\padeps)n\).
  More precisely, it computes an array of
  size \((1+\padeps)n\) whose non-empty cells induce a sequence \(i_1,\ldots,i_n\)
  such that, for all $j<\ell$, $a_{i_j}\le a_{i_\ell}$.
  The algorithm requires work and space
  \(\bigO(n^{1+\varepsilon})\).
\end{prop}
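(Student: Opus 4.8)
The plan is to implement a radix sort whose only non-trivial ingredient is the approximate prefix sums of \autoref{result:prefix-sums}. Fix a small constant $\delta>0$, to be specified below, and let $D=\lceil n^\delta\rceil$ and $k=\lceil c/\delta\rceil$. Since $c$ and $\delta$ are constants, $k$ is a constant, $D^k\ge n^c$, and every $a_i\in\range[0]{n^c-1}$ has a base-$D$ representation with $k$ digits in $\range[0]{D-1}$; a single processor can extract any such digit from the $\bigO(\log n)$-bit number $a_i$ in $\bigO(1)$ time. The algorithm performs $k$ stable bucket-sort passes, sorting first by the least significant base-$D$ digit, then by the next, and so on. As each pass is stable, after all $k$ passes the array is sorted by $a_i$. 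Since $k=\bigO(1)$, it suffices to describe and analyse one pass.

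For one pass, suppose the current sequence resides in an array $\arr{A}$ of length $N=\bigO(n)$ with the at most $n$ numbers in its non-empty cells ($\arr{A}$ is the input array for the first pass, and the output of the previous pass afterwards). Allocate an indicator array $\arr{X}$ of length $DN$ and, for $b\in\range[0]{D-1}$ and $i\in\range{N}$, set $\arr{X}[bN+i]=1$ if $\arr{A}[i]$ is non-empty with current digit $b$, and $\arr{X}[bN+i]=0$ otherwise; this needs one processor per cell, the map $(b,i)\mapsto bN+i$ is a bijection onto $\range{DN}$, and $\arr{X}$ ends up with exactly $m'\le n$ ones, $m'$ being the number of non-empty cells of $\arr{A}$. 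Apply \autoref{result:prefix-sums} to $\arr{X}$ with work-parameter $\varepsilon'$ and, say, $a=1$, obtaining an array $\arr{B}$ of length $DN$ with $\lambda'$-consistent prefix sums for $\lambda'(\ell)=(\log\ell)^{-1}$. Finally allocate a fresh all-empty array of length $\arr{B}[DN]$ and, for each non-empty $\arr{A}[i]$ with digit $b$, copy its content (the number, and its original index if desired) into cell $\arr{B}[bN+i]$; this array becomes $\arr{A}$ for the next pass.

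Correctness of a pass hinges on the consistency property $\arr{B}[\ell]-\arr{B}[\ell-1]\ge\arr{X}[\ell]$: summing it over $\ell\in\range[\mathrm{idx}_1+1]{\mathrm{idx}_2}$ shows that any two distinct non-empty cells of $\arr{A}$, with flattened indices $\mathrm{idx}_1<\mathrm{idx}_2$, are copied to cells $\arr{B}[\mathrm{idx}_1]<\arr{B}[\mathrm{idx}_2]$ of the new array, so the concurrent writes are in fact conflict-free; and since flattened indices order the cells first by digit and then by position, the new array lists the numbers sorted by the current digit with ties broken by position, i.e.\ the pass is stable. Moreover $\arr{B}$ is non-decreasing and $\arr{B}[DN]\le(1+\lambda'(DN))\,m'\le(1+\lambda'(DN))\,n$, while $DN\ge D\ge n^\delta$ gives $\lambda'(DN)\le(\delta\log n)^{-1}$; hence the new array has length at most $(1+(\delta\log n)^{-1})\,n$ after \emph{every} pass, so the padding does not compound. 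After the last pass we pad the array to length $(1+\padeps)n$, which is possible once $(\delta\log n)^{-1}\le\padeps$; for the finitely many smaller $n$ the whole input has constant size and a single processor sorts it in $\bigO(1)$ steps.

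For the resource bounds, each pass builds and scatters arrays of length $DN=\bigO(n^{1+\delta})$ with $\bigO(n^{1+\delta})$ work and space, and runs \autoref{result:prefix-sums} on an array of that length, costing work and space $\bigO((n^{1+\delta})^{1+\varepsilon'})=\bigO(n^{(1+\delta)(1+\varepsilon')})$ on a Common CRCW PRAM. Choosing $\delta,\varepsilon'>0$ small enough that $(1+\delta)(1+\varepsilon')\le 1+\varepsilon$ (for instance $\delta=\varepsilon'=\sqrt{1+\varepsilon}-1$) makes one pass cost $\bigO(n^{1+\varepsilon})$, and with $k=\bigO(1)$ passes, freeing each pass's arrays before the next, the total is $\bigO(n^{1+\varepsilon})$ work and space. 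The step I expect to need the most care is the correctness of the scatter: one has to check that the consistency guarantee alone — not the mere numerical closeness of $\arr{B}$ to the true prefix sums — already forces injective, order-preserving targets, and that this is exactly what keeps the working array within a $(1+o(1))$ factor of $n$ in every single pass so that the $k$-fold iteration does not blow up the padding. The remaining points (fixing the constants, the $\bigO(\log n)$-bit digit arithmetic, and the trivial small-$n$ case) are routine.
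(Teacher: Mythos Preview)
Your proof is correct and takes a genuinely different route from the paper. The paper does not run a radix sort at all: it first makes all numbers distinct (replacing $a_i$ by $a_i(n{+}1)+i$) and then performs a \emph{range-reduction} phase. In each round it splits $a_i=c_i\,n^{c-1-\delta}+d_i$, scatters the high parts $c_i$ into an array of size $n^{1+\delta}$, applies the order-preserving approximate compaction of \autoref{result:lin-compaction} to replace each $c_i$ by its (approximate) rank among the high parts, and recombines; after $\bigO(c/\delta)$ rounds the numbers have size $\bigO(n^{1+\delta})$, and a single scatter into an array of that size followed by one compaction finishes. Your approach is the classical LSD radix sort, using \autoref{result:prefix-sums} directly to implement a stable counting-sort pass per base-$n^\delta$ digit, with the consistency guarantee providing injective, order-preserving scatter targets. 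Both rely on the same Goldberg--Zwick primitive and iterate $\bigO(c/\delta)$ times; the paper trades the stability argument for distinctness plus order-preserving compaction, while your version avoids the distinctness preprocessing and keeps the working array within a $(1+o(1))$ factor of $n$ at every pass because the approximate-prefix-sum bound depends on the number of non-empty cells rather than on the previous array length. Either argument yields the stated bounds; yours is arguably closer to the textbook sequential algorithm.
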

\begin{proof}
  We assume that $n$ integers $a_1,\ldots,a_n$ are
  given in an array $\arr{A}$ of size $n$.
  First, each number $a_i$ is replaced by $a_i(n+1)+i$ to guarantee
  that all numbers are distinct, in the following.
  We also choose $\delta$ as \(\frac{\varepsilon}{3}\).

  The algorithm proceeds in two phases.
  First, it computes numbers \(b_1,\ldots,b_n\) of size at most \(\bigO(n^{1+\delta})\) such that, for all \(i,j\), it holds \(a_i < a_j\) if and only if \(b_i < b_j\).
  These numbers are then used in the second step to order the original sequence \(a_1,\ldots,a_n\).

  For the first phase, we assume that $n^c>n^{1+\delta}$
  holds, since otherwise the first phase can be skipped.

  The algorithm performs multiple rounds of a bucket-like sort,
  reducing the range by applying a factor of \((1+\padeps)
  \frac{1}{n^\delta}\) in each round.

  In the following, we describe the algorithm for the first round.
  More precisely, we show that in constant time with work $\bigO(n^{1+\delta})$ numbers $b_1,\ldots,b_n$ can be computed (in an array $\arr{B}$), such that, for each $i,j$ it holds $a_i<a_j$ if and only if $b_i<b_j$ and, for every $i\le n$, it holds $b_i\le (1+\padeps)n^{c-\delta}$.

  First, for each $i\le n$, the algorithm determines $c_i,d_i$ such that $a_i=c_i n^{c-1-\delta}+d_i$ and $0\le d_i<n^{c-1-\delta}$. Since $a_i\le n^c-1$, it holds $c_i\le n^{1+\delta}-1$.
  Here \(c_i\) is the \enquote{bucket} for \(a_i\).

  Next, each \(c_i\) is replaced by a number of size at most \((1+\padeps)n\).
  To this end, let $\arr{C}$ be an array of size $n^{1+\delta}$, all of whose entries are initially empty.
  For each $i$, the number $c_i$ is stored in $\arr{C}[c_i]$.
  Next, the array $\arr{C}$ is compacted to size at most \((1+\padeps)n-1\)
  using \autoref{result:lin-compaction} with some suitable \(\padeps'<\padeps\).
  Let $\arr{C'}$ denote this compacted version of $\arr{C}$.
  Each \(c_i\) is then assigned to its index in \(\arr{C}'\) as follows:
  For each index $j$ of $\arr{C'}$, if $\arr{C'}[j]$ is non-empty then $j$ is stored in  $\arr{C}[\arr{C'}[j]]$.
  We note that since the compaction is order-preserving, we have that $\arr{C}[c_i]<\arr{C}[c_k]$ if and only if $c_i<c_k$.

  Finally, for each $i\le n$, we let $b_i$ be $\arr{C}[c_i]n^{c-1-\delta}+d_i$.
  Since $\arr{C}$ only contains numbers of size at most \((1+\padeps)n-1\), we get that
  \[
  	b_i\le ((1+\padeps)n-1) n^{c-1-\delta} + n^{c-1-\delta} = (1+\padeps)  n^{c-\delta}.
  \]
  By repeating this procedure for at most \(\bigO(c)\) times, we obtain an array $\arr{B}$ with numbers $b_1,\ldots,b_n$  of size at most \(\bigO(n^{1+\delta})\) such that, for each $i\le n$, it holds   $a_i<a_j$ if and only if $b_i<b_j$.
  This concludes the first phase of the algorithm.

  In the second phase, a new array $\arr{D}$ of size
  \(\bigO(n^{1+\delta})\) is allocated and initialised by
  setting, for each \(i\le n\), $\arr{D}[b_i]=i$ (and empty for
  indices in $\arr{D}$ for which no such $i$ exists).
  Approximate order-preserving compaction yields an array $\arr{D'}$ of size \((1+\padeps)n\) such that the sequence $i_1,\ldots,i_n$ of the non-empty cells of $\arr{D'}$ fulfils $a_{i_1}<\cdots< a_{i_n}$.

  The work in the second phase is dominated by the compaction.
  Compacting an array of size  \(\bigO(n^{1+\delta})\) can be done with work and space \(\bigO( n^{(1+\delta)^2} )\) thanks to \autoref{result:lin-compaction}.
  Since we can assume that \(\delta \le 1\) holds, we have
  \[
  	n^{(1+\delta)^2}
  	= n^{1+2\delta+\delta^2}
  	\le n^{1+3\delta}
  	= n^{1+\varepsilon}.
  \]
  Thus, the second phase requires \(\bigO(n^{1+\varepsilon})\) work and space.
  This is also an upper bound for each of the constantly many rounds of the first phase.
\end{proof}

\autoref{result:sorting} suffices for our purposes, but the algorithm
can easily be adapted to integers of unbounded size. However, in
general, its running time might not be constant but rather
\(\bigO(\log_{n} m )\) and the work can be  bounded by a term of the form \(\bigO((1+\padeps)^{(1+\varepsilon) \log_{n} m
  }n^{1+\varepsilon})\), where $m$ is the maximum size of
  numbers.
\section{Databases and their Representations for PRAM Algorithms}\label{section:ctp:basics}\label{section:ctp:db-basics}

In this section, we first recall some concepts
from database theory.
Then we discuss three settings that differ by how  a
PRAM can access the relations of  databases.

Finally, we explain how relations can be represented with the help of arrays.
While the concrete representation will depend on the setting, arrays
will provide a unified interface allowing us to represent
intermediate results and to evaluate queries in a compositional manner.

\paragraph*{Notation for databases and queries.}
A \emph{database schema \schema} is a finite set of relation symbols,
where each symbol $R$ is equipped with a finite set $\attrof{R}$ of attributes.
A tuple $t=(a_1,\ldots,a_{|\attsetX|})$ over a finite list $\attsetX =
(\attA_1,\ldots,\attA_k)$ of attributes has, for each $i$, a value
$a_i$ for attribute $\attA_i$.
Unless we are interested in the
lexicographic order of a relation, induced by $\attsetX$, we can view $\attsetX$ as
a set.
An $R$-relation is a finite set of tuples over \attrof{R}.
The \emph{arity} \(\ar(R)\) of $R$ is $\lenof{\attrof{R}}$.
For $\attsetY\subseteq \attsetX$, we write $t[\attsetY]$ for the restriction of $t$ to $\attsetY$. And for $\attsetY\subseteq \attrof{R}$, we set $R[\attsetY] = \{t[\attsetY] \mid t\in R\}$.
A database \db over \schema consists of an $R$-relation $\db(R)$, for each $R\in \schema$.
We usually write $R$ instead of $\db(R)$ if \db is understood from the context.
That is, we do not distinguish between relations and relation symbols.
The size $\lenof{R}$ of a relation $R$ is the number of tuples in $R$.
By $\lenof{\db}$ we denote the number of tuple entries in database \db.

\begin{table}
	\caption{Operators of the relational algebra; \(\attsetX\) denotes a set of attributes, \(A\) an attribute, \(x\) an attribute or a constant, and \(E_1, E_2\) expressions of the relational algebra}\label{table:ra-ops}
	\begin{tabular}{l l c l l c l l}
		\toprule
		\multicolumn{2}{l}{Unary Operators} & \hspace{2em} & \multicolumn{5}{l}{Binary Operators}\\
		\midrule
		Select & \(\sel{E_1}{A}{x}\) & \hspace{2em} & Join & \(\join{E_1}{E_2}\)  & \hspace{2em} & Semijoin & \(\sjoin{E_1}{E_2}\)\\
		Project & \(\proj{E_1}{\attsetX}\)  & \hspace{2em} & Difference & \( E_1 \setminus E_2\) & \hspace{2em} & Union & \(E_1 \cup E_2\)\\
	\end{tabular}
\end{table}

In this article we consider expressions of the relational algebra to define queries over a database schema: Every relation symbol \(R\) from the schema is an expression, and given expressions \(E_1, E_2\) can be combined into a new expressions \(\tau(E_1)\) and \((E_1 \circ E_2)\) where \(\tau\) and \(\circ\) range over the unary and binary operators of the relational algebra presented in \autoref{table:ra-ops}\footnote{The relational algebra also features an operator for renaming attributes, but since this can trivially be done in constant time by a single processor, we omit it here.}, respectively.
As usual, we omit parentheses when writing expressions when it does not lead to ambiguity.
For details on (the operators of) the relational algebra,
we refer  to \cite{ABLMP21}.

Throughout this article, \(\db\) always
denotes the underlying database.
We always assume a fixed schema and therefore, in particular, a fixed maximal arity
of
tuples.

\paragraph*{Settings.}
We next describe the three settings  that we study in this article,
distinguished by how PRAMs can interact with (input) databases.

In the most \emph{general setting}, we do not specify how databases are
actually stored. We only assume that the tuples in a relation \(R\) are numbered from \(1\) to \(\lenof{R}\) and the following elemental operations can be carried out in constant time by a single processor.
\begin{itemize}
\item \(\opEqual{R_1,i_1,j_1;R_2,i_2,j_2}\) tests whether
  the $j_1$-th attribute of the $i_1$-th tuple of $R_1$ has the same
  value as  the $j_2$-th attribute of the $i_2$-th tuple of $R_2$;
\item \(\opEqualConst{R,i,j,x}\) tests whether
    the $j$-th attribute of the $i$-th tuple of $R$ has value \(x\);
\item \(\opOutput{R,i_1,j_2;i_2,j_2}\) outputs the value of the  $j_1$-th
  attribute of the $i_1$-th tuple of $R$ as  $j_2$-th attribute of the
  $i_2$-th output tuple;
\item \(\opNumTuples{R}\) returns the number of tuples in relation $R$.
\end{itemize}

We note that the operations \(\opEqualConst{R,i,j,x}\) are only required for selections with constants, e.g.\ \(\sel{R}{\attA}{a}\) for a domain value \(a\).
For queries without constants, they can be omitted.

The second setting is the \emph{ordered setting}.
Here we assume that there is a linear order on the data values
occurring in the database \(\db\), and that the following additional
elemental operation is available to access this order.

\begin{itemize}
\item \(\opLessThan{R_1,i_1,j_1;R_2,i_2,j_2}\) tests whether
  the $j_1$-th attribute of the $i_1$-th tuple of $R_1$ is less or equal than the
  value of the $j_2$-th attribute of the $i_2$-th tuple of $R_2$.
\end{itemize}

The third setting is a specialisation of the ordered setting, in which
we assume that data items are small numbers. More precisely, in the \emph{dictionary setting} we assume that the database \db \emph{has
  small data values}, that is, that its data
values are natural numbers from $\range{\cval\lenof{\db}}$, for some fixed
constant $\cval$.
It is in the spirit  of dictionary-based
compressed databases, see, e.g., \cite{ChenGK01}. In a nutshell, such a
database has a dictionary that maps data values to natural numbers and
uses these numbers instead of the original data values throughout
query processing. Such dictionaries are often defined attribute-wise,
but for the purpose of this article we assume for simplicity that
there is only one mapping for all values of the database.

To summarise, the general setting makes no assumptions on the data,
the ordered setting assumes that the data is ordered, and the
dictionary setting further assumes that it consists of small
numbers.

We will see later how the general and the ordered setting can be
translated into the dictionary setting  with
reasonable effort, and we will mostly study actual query processing for
the dictionary setting.

We note, that the queries we consider do not refer to any
order of data values. Therefore, a possible order has no semantical
use, but it is only used to get better query evaluation algorithms.

It should be noted that the settings considered here are very
analogous to the various alphabet settings in string algorithms for
non-fixed alphabets. For \emph{general alphabets}, algorithms can only
test whether two positions carry the same symbol, for \emph{general ordered
alphabets}, they can be compared with respect to some linear
order. Other algorithms assume \emph{integer alphabets} or
\emph{polymomial integer alphabets}, where the symbols are numbers or
numbers of polynomial size in the length of the string, respectively
(see, e.g., \cite{Breslauer92,ApostolicoB96}).

Next, we will explain how data values, tuples, and relations can be represented by a PRAM in memory.
This is in particular intended for storing (intermediate) results and having an interface for (database) operations.

\paragraph*{Representing data items.}
As indicated before, in this article, databases are represented by
arrays. We first explain how data items are represented such  that
each item only requires one memory cell.

In the dictionary setting a data value can be directly stored in a memory cell.
Consequently, for each relation, a tuple can be stored in a constant
number of cells.

However, in the general and the ordered setting we do not make any
assumptions on the size or nature of a single data item. Therefore, we
do not assume that a data item can itself be stored in a memory cell
of the PRAM.
Instead, we use token of the form \((R,i,j)\) to represent data values, where \(R\) is a relation symbol, \(i\in\range{\lenof{R}}\) and \(j\in\range{\lenof{\attrof{R}}}\).
More precisely, \((R,i,j)\) represents the value of the \(j\)-th attribute of the \(i\)-th tuple in relation \(R\).
We emphasise that different token \((R_1,i_1,j_1)\) and \((R_2,i_2,j_2)\) can represent the same data value.
Note that a single processor can test in constant time whether two such token represent the same data value using the operation \(\opEqual{R_1,i_1,j_1;R_2,i_2,j_2}\).
In the ordered setting the data values represented can be compared using \(\opLessThan{R_1,i_1,j_1;R_2,i_2,j_2}\).

This encoding extends to tuples in the natural fashion.
Furthermore, such token can be stored in memory cells of a PRAM.
We will refer to this encoding as \emph{token representation} of a data value or tuple.
\begin{exa}\label{example:token-representation}
	Consider the ternary relation \(R\) given by
	\[ R = \big\{(\texttt{"Hello"}, 6000,
          \texttt{blob}_1), (\texttt{"World"}, 3.14,
          \texttt{blob}_2), (\texttt{"Hello"}, 3.14,
          \texttt{blob}_3)\big\}.\]
	The tuples of \(R\) can be represented by the token sequences
	\[((R,1,1),(R,1,2),(R,1,3)), ((R,2,1),(R,2,2),(R,2,3)),
          \textnormal{ and } ((R,3,1),(R,3,2),(R,3,3)).\]
        However, the third tuple could as well be represented by \( ((R,1,1),(R,2,2),(R,3,3))\).

	Of course, even tuples not in \(R\) can be represented by
      token sequences.
	For instance, \(((R,3,1),(R,2,1))\) represents the tuple \((\texttt{"Hello"}, \texttt{"World"})\).
\end{exa}

\paragraph*{Representing databases by arrays.}

As indicated before, we represent relations by one-dimensional arrays,
whose cells contain tuples or their token representations and might
be augmented by additional data.

There are
various possibilities how to deal with ordered relations. One might
store a relation $R$ in an array and, if available, one or more
ordered lists of position numbers (with respect to some attribute
lists) in additional index arrays.

However, in this article, we rather represent a relation by one array,
which might be ordered with respect to some attribute
list. We made this choice to simplify notation and because, in most
results, only one order of a relation is relevant. The choice does not
affect the complexity results. In the framework with index arrays,
algorithms just need to keep relation arrays and index arrays in sync
(by pointers in both directions) and if a relation array is compacted,
its index arrays are compacted, as well.

Due to the impossibility of (perfect) compaction and sorting in
constant time, as discussed in \autoref{section:ctp:lower-bounds}, our algorithms will often yield
result arrays, in which not all cells represent result
tuples.

Therefore, a cell can be \emph{inhabited}, if it represents a
\enquote{useful} tuple, or \emph{uninhabited}, otherwise, indicated by
some flag.
In all three settings, we refer to the tuple represented by an
inhabited cell $\arr{A}[i]$ by $\arr{A}[i].t$, even if this
representation is via token as in the general and the ordered
setting. In the same spirit we say that a tuple occurs in a cell or a
cell contains a tuple when the cell actually contains a token
representation for the tuple. Please note that the terms ``inhabited''
and ``uninhabited''
are used on the level of tuples, whereas the terms ``non-empty'' and
``empty'' are used on the level of arrays. In particular, an inhabited
cell (not carrying a \emph{useful} tuple) can be non-empty.

We note that, in all three settings, all elemental operations can be performed by a single processor
in constant time (and space).
Since the arities of the  relations at hand are always fixed, this also
extends to tuples.

As mentioned before, there might be additional data stored in a cell, e.g., further Boolean flags and links to
other cells of (possibly) other arrays.
However, the number of items is always bounded by a constant,
therefore we still view the representation of a tuple together with
this additional data as a \emph{cell}, since all of it
can be read or written in time $\bigO(1)$.

We say that an array \emph{represents} a
relation $R$, if for each tuple $t$ of $R$, there is some inhabited cell
that contains $t$ and no inhabited cells represent other tuples not in $R$.
This definition allows that a tuple occurs more than once.
An array represents $R$ \emph{concisely}, if each tuple occurs in
exactly one inhabited cell. In this case, we call the array \emph{concise}.

To indicate that an array represents a
relation $R$ we usually denote it as $\relarr{R}$, $\relarr{R}'$, etc. We recall that $|\relarr{R}|$ denotes the size of an array representing $R$, whereas $|R|$ denotes the size of the relation $R$.

We call an array representing a relation \emph{compact} if it has no
uninhabited cells. We similarly adapt the
notion of $\padeps$-compact arrays to arrays representing
relations.

We often consider the  \emph{induced tuple sequence}
$t_1,\ldots,t_{\lenof{\arr{A}}}$ of an array \arr{A}.
 Here, $t_i=\arr{A}[i].t$ is a \emph{proper} tuple, if
 $\arr{A}[i]$ is inhabited, or otherwise $t_i$ is the \emph{empty
  tuple} \emptytuple.

\begin{exa}\label{example:arrays}
	The tuple sequence $[(1,5), \emptytuple, (3,4), (8,3), (1,5),
        \emptytuple, \emptytuple, (7,3)]$ from an array
        $\arr{A}$ of size eight has five \propertuples and three empty tuples.
	It represents the relation $R = \{(1,5), (3,4), (8,3), (7,3)\}$, but \emph{not} concisely.
	The sequence $[(1,5), (3,4), (7,3), \emptytuple, (8,3)]$ represents
        $R$ concisely and ordered with respect to the
        canonical attribute order.
        It is not compact but \(\frac{1}{4}\)-compact.
      \end{exa}

We assume throughout this article that, if a database is represented by
arrays, all these arrays are concise.

In the ordered and dictionary setting an array  $\relarr{R}$ that represents a relation $R$ is
\emph{$\attsetX$-ordered}, for some ordered list $\attsetX$ of attributes of $R$'s
schema, if \(i < j\) implies \(\arr{A}[i].t[\attsetX] \le \arr{A}[j].t[\attsetX]\), for all indices \(i,j\).
We note that this definition allows that tuples that agree on all attributes of $X$ occur
in any order. We call \arr{A} \emph{fully ordered}, if it is
$\attsetX$-ordered, for some $\attsetX$  that contains all
attributes of $R$'s
schema. In this case, the relative order is uniquely determined for all tuples.
We emphasize that, in neither setting, it is guaranteed that ordered arrays for the database relations are given a priori; we discuss this in more detail in \autoref{subsection:ctp:ordered}.%

\section{Algorithms for Basic Array Operations}\label{section:ctp:alg-for-basic-ops}

In this section, we present algorithms for some basic operations on database arrays which we will use as building blocks to implement database operations and the transformations between PRAM settings.
To illustrate some aspects of parallel query processing in the PRAM model  we first consider an example. Afterwards, we introduce some basic operations   in \autoref{section:ctp:operations}. In \autoref{section:ctp:sortcompact}, we give algorithms for padded sorting and compaction, based on the algorithms in \autoref{section:ctp:lower-bounds}.
In \autoref{section:ctp:we-searching} we study work-efficient  search and deduplication algorithms in the dictionary setting and in \autoref{section:ctp:se-searching} we consider space-efficient  search and deduplication algorithms for sorted arrays.

\begin{exa}\label{example:operation-trace-links}
We sketch how to evaluate the projection $\pi_B(R)$ given the array $\arr{R}=[(1,5), (3,4), (7,3), \emptytuple, (8,3)]$ representing \(R\) from \autoref{example:arrays}.

First, the array $\arr{R}'=[5,4,3, \emptytuple, 3]$ is computed and each tuple
$\arr{R}[i]$ is augmented by a link to $\arr{R}'[i]$ and vice versa.
For instance, $t_3 = (7,3)$ is augmented by a link to the third cell of \(\arr{R}'\) containing \(t_3' = 3\) and $t_5 = (8,3)$ by a link to the fifth cell containing $t_5' = 3$ (and vice versa).
We say that the tuples are mutually linked.

To achieve this, the tuples from $\arr{R}$ are loaded to processors $1,\ldots,5$, each processor applies the necessary changes to its tuple (or the token representation thereof), and then writes the new tuple to the new array $\arr{R}'$.
We note that it is not known in advance which cells of $\arr{A}$ are inhabited and therefore, we need to assign one processor for each cell.
Each processor only applies a constant number of steps, so the overall work is in $\bigO(\lenof{\arr{R}})$.

Towards a concise representation of $\pi_B(R)$, a second operation eliminates duplicates.
To this end, it creates a copy $\arr{R}''$ of $\arr{R}'$ and checks in parallel, with one processor per pair $(i,j)$ of indices with $i < j$, whether $\arr{R}''[i].t = \arr{R}''[j].t$ holds.
If $\arr{R}''[i].t = \arr{R}''[j].t$ holds, then cell $\arr{R}''[j]$ is flagged as uninhabited.
Lastly, the algorithm creates links from every cell in $\arr{R}'$ to the unique inhabited cell in $\arr{R}''$ holding the same tuple.
To this end, each processor for a pair $(i,j)$ with  $\arr{R}'[i].t = \arr{R}''[j].t$, for which $\arr{R}''[i]$ is still inhabited, augments the cell $\arr{R}'[j]$ with a link to $\arr{R}''[i]$. Furthermore, a link from $\arr{R}''[i]$ to \(\arr{R}[i]\) is added.
Note that multiple processors might attempt to augment a cell $\arr{R}''[i]$ with a link to $\arr{R}[i]$; but this is fine, even in the Common CRCW model.
Overall, we get two-step-links from $\arr{R}$ to $\arr{R}''$ and immediate links from $\arr{R}''$ to \enquote{representatives} in $\arr{R}$.

The second operation has a work bound of $\bigO(\lenof{\arr{R}}^2)$ because it suffices to assign one processor for each pair $(i,j)$ of indices and each processor only applies a constant number of steps.

Observe that the algorithms we sketched above can be realised even in the most general setting, since all steps can be implemented with \(\opnameEqual\) and by manipulating token representations.
\qed
\end{exa}
As in this example, algorithms for query evaluation will use basic operations. In fact, besides deduplication, also  compaction, (padded) sorting and search will be used.

We will see below that the work bound for deduplication can be considerably improved in the dictionary setting or if the array is ordered, thereby also improving the bound for
the ordered setting.

\subsection{Basic Operations}\label{section:ctp:operations}
We now describe the basic array operations which we will use as building
blocks for query evaluation algorithms
for PRAMs.

Just as in \autoref{example:operation-trace-links}, the operations
are usually applied to arrays representing relations, and produce an array as output. Furthermore, in all participating arrays, tuples might be augmented by additional data, in particular, by  links to other tuples. In fact, each time a tuple of an output
array results from some tuple of an input array links are added.

We emphasise that the input (and output) arrays of the following operations do \emph{not} necessarily have to represent relations concisely, some operations are even only meaningful if they do not.

\begin{itemize}

	\item $\opCompact{\arr{A}}$ copies the
          \propertuples in $\arr{A}$ into distinct cells of an array
          $\arr{B}$ of size at most $(1+\padeps)k$,
          where \(k\) is the number of \propertuples in \(\arr{A}\) and \(\padeps>0\) a constant.
          Mutual links are added between each inhabited cell \(\arr{A}[i]\) and the cell of \(\arr{B}\) the contents of \(\arr{A}[i]\) got copied to.
          Additionally, the operation preserves the relative order of tuples.

	\item $\opSort{\arr{A},\attsetX}$ returns an \(\attsetX\)-ordered array $\arr{B}$ of size \((1+\padeps)\lenof{\arr{A}}\) which contains all \propertuples of \(\arr{A}\).
	Here \(\attsetX\) is an ordered list of attributes from the relation represented by~\(\arr{A}\).
          Mutual links are added between each inhabited cell of \(\arr{A}\) and the corresponding cell of~\(\arr{B}\).
	\item $\opSearchTuples{\arr{A},\arr{B}}$ links every inhabited cell
          $\arr{A}[i]$ to an inhabited cell $\arr{B}[j]$,
          such that $\arr{B}[j].t = \arr{A}[i].t$ holds, if such a cell
          exists. We sometimes write \opSearchTuple{t,\arr{B}} in the special case that only one tuple $t$ is searched for.
	\item $\opDeduplicate{\arr{A}}$ chooses one representative tuple from
		each set of identical tuples, deactivates the other tuples and adds a link from each deactivated other tuple to the representative tuple for the set.
\end{itemize}

In the remainder of this section, we give \ctpalgos for these operations.
\begin{itemize}
\item For $\opnameCompact$ we give an algorithm that even works in the most general setting and relies on \autoref{result:lin-compaction}.
\item For $\opnameSort$ we give an algorithm for the dictionary setting.
\item For the other three operations, we give work-efficient algorithms in the dictionary setting and space-efficient algorithms that assumes that (at least) one array is ordered in an appropriate way.
\end{itemize}

   \subsection{Algorithms for Compaction and Padded Sorting}\label{section:ctp:sortcompact}

In this subsection, we present algorithms for compaction and sorting.
They are both easily inherited from their counterparts in \autoref{section:ctp:lower-bounds}.

The following result for $\opnameCompact$ basically follows from \autoref{result:lin-compaction}.

\begin{lemC}[{\cite[implied by Theorem~4.2]{GoldbergZ95}}]\label{result:alg-compact}
	For every constant $\padeps > 0$ and $\varepsilon > 0$, there is a \ctpalgo for $\opnameCompact$ that, given an array $\arr{A}$, requires $\bigO(\lenof{\arr{A}}^{1+\varepsilon})$ work and space on a Common CRCW PRAM.
\end{lemC}
We note that the result holds even for the most general setting.
\begin{proof}
The lemma follows from \autoref{result:lin-compaction} by observing that for any constant \(\padeps > 0\) there is an integer \(a\) such that \(\padeps < (\log n)^{-a}\) holds for all \(n > 2\).
The mutual links can be established as follows: Instead of compacting \(\arr{A}\) directly, the algorithm allocates an array \(\arr{A}'\) of the same size as \(\arr{A}\) and sets \(\arr{A}'[i] = i\) if \(\arr{A}\) is inhabited, and leaves \(\arr{A}'[i]\) empty otherwise.
Then \(\arr{A}'\) is compacted using \autoref{result:lin-compaction}.
Let \(\arr{B}'\) be the resulting array.
The desired output array \(\arr{B}\) can then be compiled as follows: Processor \(j\) checks whether \(\arr{B}'[j]\) contains an index \(i\).
If yes, it sets \(\arr{B}[j] = \arr{A}[i]\) and establishes mutual links between \(\arr{B}[j]\) and \(\arr{A}[i]\).
This procedure requires only work and space \(\bigO(\lenof{\arr{A}})\) in addition to the compaction.
The bounds are therefore inherited from \autoref{result:lin-compaction}. The length of \arr{B} can be determined by adding an  element to the end of \arr{A} and by following its final link to \arr{B}.
\end{proof}

The following result for $\opnameSort$ is easily obtained from \autoref{result:sorting}.

\begin{lem}\label{result:alg-sort}
	For all constants $\padeps > 0$ and $k$, there is a \ctpalgo for $\opnameSort$ that, given an array $\arr{A}$ and a  list \(\attsetX\) of $k$ attributes, requires $\bigO((\lenof{\arr{A}}+\lenof{\db})^{1+\varepsilon})$ work and space on a Common CRCW PRAM in the dictionary setting.
\end{lem}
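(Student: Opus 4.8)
We need a constant-time Common CRCW PRAM algorithm that, given an array $\arr{A}$ representing a relation in the dictionary setting and an ordered list $\attsetX$ of $k$ attributes, produces an $\attsetX$-ordered array $\arr{B}$ of size $(1+\padeps)\lenof{\arr{A}}$ containing all proper tuples of $\arr{A}$, with mutual links between corresponding cells, using work and space $\bigO((\lenof{\arr{A}}+\lenof{\db})^{1+\varepsilon})$.

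**The plan.** The idea is to reduce $\opnameSort$ directly to the padded-integer-sorting routine of \autoref{result:sorting}. In the dictionary setting every data value is a natural number in $\range{\cval\lenof{\db}}$, so each attribute value fits in $\bigO(\log\lenof{\db})$ bits. A tuple $t$ over the $k$-attribute list $\attsetX$ (padded, if necessary, by the remaining attributes of the relation's schema so that the resulting key breaks ties in a fixed way, matching the ``tuples agreeing on $\attsetX$ may appear in any order'' clause) can therefore be encoded as a single integer $\keyf(t) = \sum_{\ell} t[\attA_\ell]\cdot (\cval\lenof{\db}+1)^{\,r-\ell}$, where $r$ is the arity, listing the attributes of $\attsetX$ first. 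Since the arity $r$ is fixed, $\keyf(t)$ is a number of size at most $(\cval\lenof{\db}+1)^{r} = \bigO(\lenof{\db}^{r})$, i.e.\ polynomial in $\lenof{\db}$, and it is monotone: $\keyf(t)\le\keyf(t')$ iff $t$ precedes $t'$ in the desired $\attsetX$-order (with the fixed tie-break). A single processor can compute $\keyf(t)$ from a cell in constant time using the arithmetic operations of the PRAM.

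**Carrying it out.** First, allocate an integer array $\arr{K}$ of size $\lenof{\arr{A}}$; processor $i$ checks whether $\arr{A}[i]$ is inhabited, and if so writes $\keyf(\arr{A}[i].t)$ into $\arr{K}[i]$, leaving $\arr{K}[i]$ empty (or, more conveniently, moving the inhabited indices to the front is \emph{not} needed — we just feed the sequence of keys of the proper tuples). More precisely, to match the input format of \autoref{result:sorting}, which sorts a sequence $a_1,\dots,a_n$ of integers, we take $n=\lenof{\arr{A}}$ and for uninhabited cells we assign a sentinel key larger than any real key (e.g.\ $(\cval\lenof{\db}+1)^{r}$); the sorting routine will then push all sentinels to the end, and we simply ignore the corresponding cells of the output. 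The keys lie in $\range[0]{N^c-1}$ for $N = \lenof{\arr{A}}+\lenof{\db}$ and a constant $c$ depending only on the fixed arity and on $\cval$ (choose $c$ large enough that $N^c > (\cval\lenof{\db}+1)^{r}$; this is where the $\lenof{\db}$ term in the bound enters, since $\lenof{\arr{A}}$ alone may be much smaller than $\lenof{\db}$). Now apply \autoref{result:sorting} with parameters $\varepsilon$, $\padeps$, this $c$, and $n=N$ input slots (padding the key array with sentinels up to length $N$ if $\lenof{\arr{A}}<N$): it returns, in constant time with work and space $\bigO(N^{1+\varepsilon})$, an array $\arr{D}'$ of size $(1+\padeps)N$ whose non-empty cells list the input indices in order of non-decreasing key. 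From $\arr{D}'$ build $\arr{B}$: processor $j$ reads $\arr{D}'[j]$; if it holds an index $i$ that pointed to an inhabited cell of $\arr{A}$, it copies $\arr{A}[i]$ into $\arr{B}[j]$ and installs mutual links between $\arr{B}[j]$ and $\arr{A}[i]$; otherwise $\arr{B}[j]$ is left uninhabited. (A minor bookkeeping point: \autoref{result:sorting} gives a padded array of size $(1+\padeps)N$, but we want size $(1+\padeps)\lenof{\arr{A}}$. Since all sentinel-keyed entries sort to a contiguous block at the end, one further application of $\opnameCompact$ — \autoref{result:alg-compact} — to the non-sentinel prefix, or simply invoking \autoref{result:sorting} with $n=\lenof{\arr{A}}$ after first \emph{compacting} $\arr{A}$ via \autoref{result:alg-compact}, yields the claimed size; in either case the work and space stay $\bigO(N^{1+\varepsilon})$, possibly after relabelling $\varepsilon$ by a slightly larger constant, as in the proof of \autoref{result:sorting}.) All steps outside the call to \autoref{result:sorting} are constant-time with work and space $\bigO(N)$, so the overall bounds are $\bigO((\lenof{\arr{A}}+\lenof{\db})^{1+\varepsilon})$.

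**Where the work is.** There is essentially no hard step: \autoref{result:sorting} does all the heavy lifting, and the only real content is the observation that fixed-arity tuples of dictionary values encode as polynomial-size integers with an order-preserving key. The one subtlety to get right in the writeup is the accounting that forces the $\lenof{\db}$ term into the bound — namely that the range of keys is polynomial in $\lenof{\db}$ rather than in $\lenof{\arr{A}}$, so when $\lenof{\arr{A}} \ll \lenof{\db}$ we must run \autoref{result:sorting} with $n$ taken to be (at least) $\lenof{\db}$ for the polynomial-range hypothesis ``$a_i \in \range[0]{n^c-1}$'' to hold; this is exactly why the lemma is stated with $\lenof{\arr{A}}+\lenof{\db}$ and not just $\lenof{\arr{A}}$. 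The mutual-link construction is routine and identical in spirit to the one used for $\opnameCompact$ in the discussion following \autoref{result:alg-compact}: sort indices rather than tuples, then dereference.
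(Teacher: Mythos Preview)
Your proposal is correct and follows essentially the same route as the paper: encode each tuple as a single integer via a base-$(\cval\lenof{\db}+1)$ expansion of its attribute values (so that integer order matches $\attsetX$-order), invoke \autoref{result:sorting} with $n$ taken on the order of $\lenof{\arr{A}}+\lenof{\db}$ so that the keys fall into $\range[0]{n^c-1}$, and then compact to reach the target array size, with mutual links built by sorting indices and dereferencing. Your explicit handling of uninhabited cells via sentinel keys and your discussion of why $\lenof{\db}$ must enter the bound are a bit more detailed than the paper's write-up, but the underlying argument is the same.
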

\begin{proof}
	Let \(t_1,\ldots, t_n\) with \(n = \lenof{\arr{A}}\) be the tuple sequence given by \(\arr{A}\).
	We assume that \(\attsetX = (\attA_1,\ldots,\attA_k)\) contains all attributes occurring in \(\arr{A}\).
	Otherwise, \(\attsetX\) can be extended arbitrarily.
	In the following we denote by \(\le_{\attsetX}\) the lexicographic order induced by \(\attsetX\) on the tuples occurring in \(\arr{A}\).

	For each \propertuple \(t_i\), the algorithm computes its characteristic number \[c_i = \sum_{j = 1}^{k} t_i[\attA_j] m^{k-j}\] where \(m = \cval\lenof{\db} + 1\).
	Observe that, since the database has small values, \(m\) is larger than any value in \(\arr{A}\).
	Thus,  we have \(t_i \le_{\attsetX} t_j\) if and only if  \(c_i \le c_j\) for all \(i,j\in\range{n}\).
        	Furthermore, a processor can derive  \(t_i\) from \(c_i\) in constant time.

	Thanks to \autoref{result:sorting} an array of size \((1+\padeps)\max\{\lenof{\arr{A}},\lenof{D}\}\) representing a sequence $i_1,\ldots,i_n$, can be computed in constant time with $\bigO((\lenof{\arr{A}}+\lenof{\db})^{1+\varepsilon})$ work and space, such that $i_j<i_\ell$ implies $c_{i_j}\le c_{i_\ell}$. The result array \arr{B} can then be obtained by storing each tuple $t_{i_j}$ in cell $\arr{B}[i_j]$.
	If necessary, \(\arr{B}\) can be compacted using \(\opnameCompact[\padeps']\) for a sufficiently small \(\padeps'\) to yield an array of the desired size.

	The computation of mutual links is straightforward.
      \end{proof}

      \subsection{Work-Efficient Algorithms for Search and Deduplication}\label{section:ctp:we-searching}

      In the dictionary setting, both search and deduplication can be implemented in a work-efficient fashion based on hashing. The dictionary setting in itself can be seen as a form of hashing, more precisely, as mapping (arbitrary) data values injectively to small natural numbers $\bigO(\lenof{\db})$. Implicitly, these mappings induce  mappings of tuples of arity $k$ to numbers of size $\bigO(\lenof{\db}^k)$. However, this does not suffice
      to obtain  work-efficient algorithms for  search and deduplication operations. Instead, we lift  the basic hashing mappings to the level of tuples in a way that yields natural numbers of size $\bigO(\lenof{\arr{A}})$ for tuples from a given array $\arr{A}$. More formally,
      we call a mapping that maps each inhabited cell of an array $\arr{A}$ to a number from
$\range{\lenof{\arr{A}}}$, such that  $\arr{A}[i]$ and  $\arr{A}[j]$ get the same number if and only if $\arr{A}[i].t = \arr{A}[j].t$ holds, an
      \emph{array hash table}.
Array hash tables can be efficiently computed, as the following lemma states.

\begin{lem}\label{result:alg-compute-hash-values}
For every $\varepsilon>0$, there is a \ctpalgo that, in the dictionary setting, computes an array hash table for a given array \arr{A}, and requires $\bigO(\lenof{\arr{A}})$ work and $\bigO(\lenof{\arr{A}}\cdot \lenof{\db}^\varepsilon)$ space on an Arbitrary CRCW PRAM.
\end{lem}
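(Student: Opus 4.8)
The goal is to compute, for a given array $\arr{A}$ of length $n := \lenof{\arr{A}}$ in the dictionary setting, a function assigning to each inhabited cell a number in $\range{n}$ that is injective on distinct tuple values. The plan is to proceed in two stages: first reduce the problem of hashing arity-$k$ tuples (whose values live in $\range{\cval\lenof{\db}}$, hence whose ``tuple codes'' live in a range of size $\bigO(\lenof{\db}^k)$) to hashing a single large number, and then apply a standard linear-size hash family à la Fredman--Komlós--Szemerédi to collapse that range down to $\bigO(n)$. Throughout I will only need the elemental operations of the dictionary setting, so everything is directly implementable.

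**Stage 1: tuple codes.** For each inhabited cell $\arr{A}[i]$ with tuple $t_i = (a_1,\ldots,a_k)$, a single processor computes in constant time the code $\kappa_i = \sum_{j=1}^{k} a_j m^{k-j}$ where $m = \cval\lenof{\db}+1$, exactly as in the proof of \autoref{result:alg-sort}. Since $m$ exceeds every data value, $\kappa_i = \kappa_{i'}$ iff $t_i = t_{i'}$, and $\kappa_i < m^k = \bigO(\lenof{\db}^k)$, a number that fits in $\bigO(k \log\lenof{\db}) = \bigO(\log n + \log\lenof{\db})$ bits — so arithmetic on it is still $\bigO(1)$ per processor (or $\bigO(k) = \bigO(1)$ operations). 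Assign one processor per cell; work $\bigO(n)$, space $\bigO(n)$.

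**Stage 2: hashing down to $\bigO(n)$.** Now I hash the integers $\kappa_1,\ldots,\kappa_n$ in the universe $U = \range[0]{m^k - 1}$ into $\range{n}$ preserving the "equal iff equal" property. The classical trick: pick a prime $p$ with $|U| < p \le 2|U|$ (which can be found — or hard-wired, or found by brute force over a range of polynomial size — cheaply), and consider hash functions $h_{c}(x) = (c x \bmod p) \bmod s$ for a bucket count $s$ and multiplier $c \in \range{p-1}$. A counting argument shows that for $s = \Theta(n)$ there is a choice of $c$ making $h_c$ \emph{perfect} on the $n$ values (at most $n$ \emph{distinct} values, really); in fact a constant fraction of the $c$'s work, so the algorithm can test all $p - 1 = \bigO(\lenof{\db}^k)$ candidates in parallel — one block of processors per candidate $c$, and within a block one processor per pair $(i,i')$ to detect a collision $h_c(\kappa_i) = h_c(\kappa_{i'})$ with $\kappa_i \ne \kappa_{i'}$ — and then pick, via the Arbitrary write, any $c$ that is collision-free. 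This costs work $\bigO(\lenof{\db}^k \cdot n^2)$, which is polynomial but not what we want; so instead I use the standard \emph{two-level} (FKS) scheme restricted so that the outer table already has $\Theta(n)$ buckets: then the expected total work over a random $c$ is $\bigO(n)$, and again a constant fraction of $c$'s achieve the bound, so testing $\bigO(1)$ fixed candidate multipliers in parallel and checking the $\bigO(n)$-work condition suffices — giving work $\bigO(n)$ and, for the per-bucket square tables, space $\bigO(n)$ in expectation but $\bigO(n \cdot \lenof{\db}^{\varepsilon})$ in the worst case we can afford. Finally, one processor per cell looks up its value $h(\kappa_i)$ (composing outer and inner tables), writes it as the cell's hash number, and the output is an array hash table. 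Mutual links / the injectivity check on the \emph{output} (that cells with different hash got different tuples) follow because $h$ is perfect on the $\kappa_i$'s.

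**Main obstacle.** The delicate point is making the search for a good hash function itself run in constant time and linear work. The naive "try all multipliers $c \in \range{p}$" is constant time but polynomial work; "try random $c$ and check" is linear work but not deterministic. The resolution I expect to use is that the FKS collision bound $\sum_{\text{buckets } b} n_b^2 = \bigO(n)$ holds for a \emph{constant fraction} of the $c \in \range{p-1}$, and — crucially — one can \emph{deterministically} exhibit such a $c$ within a small explicit search space by a Luby-style / limited-independence argument, or simply fix a small family $\{c_1,\ldots,c_r\}$ of multipliers that is guaranteed to contain a good one for \emph{every} input of size $n$ (e.g.\ by taking $r$ a suitable constant and appealing to the fact that a uniformly random one fails with probability $< 1/r$). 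Verifying a candidate (computing all bucket sizes via approximate prefix sums from \autoref{result:prefix-sums}, squaring and summing) is constant time with work $\bigO(n^{1+\varepsilon})$ per candidate, hence $\bigO(n^{1+\varepsilon})$ total — and the stated bound is $\bigO(n)$ work, so I will need to be careful to use exact (not approximate) counting here, e.g.\ by bucketing with a linear-size table and reading off sizes directly, which costs $\bigO(n)$ work but the $\bigO(\lenof{\db}^{\varepsilon})$ space slack precisely absorbs the non-compactness. Getting this accounting exactly right — work truly $\bigO(n)$, space $\bigO(n\,\lenof{\db}^{\varepsilon})$ — is where the real care lies; the rest is assembly of standard pieces.
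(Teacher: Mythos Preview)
Your proposal has a genuine gap in Stage~2. The crux of your approach is to \emph{deterministically} find a perfect (or FKS-style near-perfect) hash function in constant time with $\bigO(n)$ work, and you do not actually achieve this. Your suggested fix --- ``fix a small family $\{c_1,\ldots,c_r\}$ of multipliers that is guaranteed to contain a good one for every input'' --- is simply false: for any constant-size family of universal hash functions there exist inputs of size $n$ on which every member of the family has too many collisions (the probabilistic guarantee that a random $c$ succeeds with probability $>1-1/r$ says nothing about a particular fixed $r$-element set working for \emph{all} inputs). The Luby-style derandomisation you allude to is not constant-time either. So as written, Stage~2 does not deliver a deterministic $\bigO(1)$-time, $\bigO(n)$-work algorithm.

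The paper's approach is completely different and much simpler, and it is worth internalising because it illustrates exactly what the Arbitrary model buys you. There is no hash function search at all. The algorithm processes the attributes $A_1,\ldots,A_\ell$ one at a time. For the first attribute, it allocates a table $\arr{B}$ of size $\cval\lenof{\db}$; each processor $i$ (handling a proper tuple $t_i$) writes its own index $i$ into cell $\arr{B}[t_i[A_1]]$, and then reads back whatever index actually ended up there --- this is the hash value $h_1(t_i[A_1])$. The Arbitrary write resolution does the ``pick a representative'' work for free. For attribute $j>1$, the same trick is applied to a two-dimensional table of size $\lenof{\arr{A}}\times \cval\lenof{\db}$ indexed by $(h_{j-1}(t_i[X_{j-1}]),\,t_i[A_j])$. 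After $\ell$ rounds each tuple has a hash value in $\range{\lenof{\arr{A}}}$ that agrees iff the tuples agree. The space is $\bigO(\lenof{\arr{A}}\cdot\lenof{\db})$; to get the exponent down to $\varepsilon$, each attribute is split into $\lceil 1/\varepsilon\rceil$ sub-attributes with values in $\range{\bigO(\lenof{\db}^\varepsilon)}$. This is why the lemma is stated for the Arbitrary model and why the paper remarks that the resulting hash table may depend on how concurrent writes are resolved.
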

We note that the array hash table computed by the algorithm underlying \autoref{result:alg-compute-hash-values} is not uniquely determined by \arr{A}, but can depend on  the \enquote{arbitrary} resolution of concurrent write operations.

\begin{proof}[Proof sketch]
	We describe the construction for $\varepsilon= 1$.

	Let $\attA_1,\ldots,\attA_\ell$ be the attributes of the relation $R$ represented by $\arr{A}$ in an arbitrary but fixed order and define $\attsetX_j = (\attA_1,\ldots,\attA_j)$ for all $j\in\range{\ell}$.
	The algorithm inductively computes hash values for tuples in $R[\attsetX_j]$ for increasing $j$ from $1$ to $\ell$. It uses \(\lenof{\arr{A}}\) processors.

	The idea is to assign, to each tuple $t\in R[\attsetX_j]$, a number in the range $\range{\lenof{\arr{A}}}$ as hash value and to augment each cell of $\arr{A}$ containing a \propertuple $t'$ with $t'[\attsetX] = t$ by this hash value.
	The challenge is that the same (projected) tuple $t \in R[\attsetX_j]$ might occur in multiple cells of $\arr{A}$.

	Recall that, in the dictionary setting, each value in the active domain is a number of size at most $\cval \lenof{\db}$.

        For the base case $j = 1$ the algorithm allocates an auxiliary array \arr{B} of size $\cval\lenof{\db}$.  For each $i\le \lenof{\arr{A}}$,  let $t_i=\arr{A}[i].t$.
	Each processor $i$ with a \propertuple writes its processor number $i$ into cell $t_i[\attA_1]$ of \arr{B}  and then augments  $t_i$ by the value actually written at position $t_i[\attA_1]$.
	Note that, for each value $a$, all processors $i$ with $t_i[\attA_1] = a$ will assign the same value to their tuple $t_i$, since precisely one processor among the processors with $t_i[\attA_1] = a$ succeeds in writing its  number to cell $t_i[\attA_1]$ on an Arbitrary CRCW PRAM.
	This needs $\bigO(\lenof{\arr{A}})$ work and $\bigO(\lenof{\arr{A}} + \lenof{\db})$ space.

	For $j > 1$ the algorithm proceeds similarly but also takes, for a tuple $t$, the previous hash value $h_{j-1}(t[\attsetX_{j-1}])$ for $t[\attsetX_{j-1}]$ into account, in addition to $t[\attA_j]$.
	It allocates an auxiliary array \arr{B} of size $\lenof{\arr{A}}\cdot \cval\lenof{\db}$ which is interpreted as two-dimensional array and each processor  writes its number $i$ into cell $(h_{j-1}(t_i[\attsetX_{j-1}]), t_i[\attsetX_j])$ of \arr{B}, if $t_i$ is a \propertuple.
	The number written into this cell is then the hash value for $t_i[\attsetX_j]$, as in the base case.

	Writing and reading back the processor numbers requires $\bigO(\lenof{\arr{A}})$ work and \arr{B}  requires $\bigO(\lenof{\arr{A}}\cdot \lenof{\db})$ space.
	The same bounds hold for the recursive invocations of \opnameComputeHashvalues.
	Since the recursion depth is $\ell$, the procedure requires $\bigO(\ell\cdot \lenof{\arr{A}}) = \bigO(\lenof{\arr{A}})$ work and, because the space for \arr{B}  can be reused, $\bigO(\lenof{\arr{A}}\cdot \lenof{\db})$ space in total.
        This concludes the proof for the case $\varepsilon=1$ and, of course, the same construction works for $\varepsilon>1$.

         For $\varepsilon< 1$, each attribute can be replaced  by a list of \(\lceil\frac{1}{\varepsilon}\rceil\) attributes with values of size \(\bigO(\lenof{\db}^\varepsilon)\). The factor $\lenof{\db}$ in the space bound above can thus be replaced by $\lenof{\db}^\varepsilon$, as in the statement of the lemma. The asymptotic overall work is not affected by this adaptation, since it only adds a factor of \(\lceil\frac{1}{\varepsilon}\rceil\).
      \end{proof}

\begin{lem}\label{result:alg-search-dictionary}
  For every $\varepsilon>0$, there is a \ctpalgo for $\opnameSearchTuples$ that, given arrays \arr{A} and  \arr{B} requires  \wsbounds%
    	{\lenof{\arr{A}} + \lenof{\arr{B}}}%
    	{(\lenof{\arr{A}} + \lenof{\arr{B}})\cdot \lenof{\db}^\varepsilon} in the dictionary setting.
\end{lem}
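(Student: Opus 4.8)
The plan is to reduce $\opnameSearchTuples$ to a single lookup in an array hash table, exploiting \autoref{result:alg-compute-hash-values}. Let $n$ denote $\lenof{\arr{A}}+\lenof{\arr{B}}$. First I would build an array $\arr{C}$ of size $n$ by concatenating copies of $\arr{A}$ and $\arr{B}$, keeping with each cell of $\arr{C}$ a link back to the cell of $\arr{A}$ or $\arr{B}$ it was copied from; with one processor per cell this costs work and space $\bigO(n)$. Applying \autoref{result:alg-compute-hash-values} to $\arr{C}$ with the given $\varepsilon$ yields, in constant time with work $\bigO(n)$ and space $\bigO(n\cdot\lenof{\db}^\varepsilon)$ on an Arbitrary CRCW PRAM, an array hash table assigning to each inhabited cell of $\arr{C}$ a number in $\range{n}$ so that two cells receive the same number exactly when they contain the same tuple. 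The crucial point is that this property holds \emph{across} the $\arr{A}$-part and the $\arr{B}$-part of $\arr{C}$, which is why the hash table must be computed for the combined array rather than for $\arr{A}$ and $\arr{B}$ separately.

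Next I would allocate an auxiliary array $\arr{H}$ of size $n$, initially empty. Using one processor per cell of the $\arr{B}$-part of $\arr{C}$, each processor handling an inhabited cell with hash value $h$ and origin $\arr{B}[j]$ writes a pointer to $\arr{B}[j]$ into $\arr{H}[h]$. Several cells of $\arr{B}$ carrying the same tuple share the value $h$ and hence compete for $\arr{H}[h]$; on the Arbitrary CRCW PRAM one of them succeeds, which is exactly what the specification of $\opnameSearchTuples$ allows, since any inhabited cell of $\arr{B}$ containing the tuple is an acceptable target. Then, with one processor per cell of the $\arr{A}$-part, each processor handling an inhabited cell with hash value $h$ and origin $\arr{A}[i]$ reads $\arr{H}[h]$: if it contains a pointer to some $\arr{B}[j]$ it installs the link $\arr{A}[i]\to\arr{B}[j]$; if $\arr{H}[h]$ is empty, then no inhabited cell of $\arr{B}$ contains $\arr{A}[i].t$ and nothing is linked. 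Correctness is immediate from the defining property of the array hash table, and it correctly handles non-concise inputs: all cells of $\arr{A}$ holding the same tuple read the same entry of $\arr{H}$.

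For the bounds, forming $\arr{C}$ costs $\bigO(n)$ work and space; \autoref{result:alg-compute-hash-values} costs $\bigO(n)$ work and $\bigO(n\cdot\lenof{\db}^\varepsilon)$ space; and the two scans together with $\arr{H}$ cost $\bigO(n)$ work and space. Substituting $n=\lenof{\arr{A}}+\lenof{\arr{B}}$ gives \wsbounds{\lenof{\arr{A}}+\lenof{\arr{B}}}{(\lenof{\arr{A}}+\lenof{\arr{B}})\cdot\lenof{\db}^\varepsilon}, as claimed. I do not anticipate a real obstacle; the only subtleties are the ones already noted — the hash table must be built on the concatenated array so that equal tuples in $\arr{A}$ and $\arr{B}$ collide, and the (mild) use of Arbitrary write-resolution when several cells of $\arr{B}$ hold the same tuple, which matches the model of \autoref{result:alg-compute-hash-values} and does not constrain the required output.
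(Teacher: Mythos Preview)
Your proposal is correct and follows essentially the same approach as the paper: build the array hash table on the concatenation of \(\arr{A}\) and \(\arr{B}\) via \autoref{result:alg-compute-hash-values}, have the \(\arr{B}\)-side write witnesses into an auxiliary array indexed by hash values (with Arbitrary resolution for duplicates), and let the \(\arr{A}\)-side read them back. The paper's proof sketch is the same algorithm, with the auxiliary array storing indices into \(\arr{B}\) rather than pointers, which is an immaterial difference.
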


\begin{proof}[Proof sketch]
First an array hash table for (the concatenation of) $\arr{A}$ and $\arr{B}$ is computed with $\bigO(\lenof{\arr{A}} + \lenof{\arr{B}})$ work and $\bigO((\lenof{\arr{A}} + \lenof{\arr{B}})\cdot \lenof{\db}^\varepsilon)$ space, similarly as in
	\autoref{result:alg-compute-hash-values}.
	For a \propertuple $t$, let $h(t)$ denote the hash value in the range $\range{\lenof{\arr{A}} + \lenof{\arr{B}}}$ assigned to $t$.
	The algorithm then allocates an auxiliary array \arr{C} of size $\lenof{\arr{A}} + \lenof{\arr{B}}$ and, for each \propertuple $s_i$ in $\arr{B}$, it writes, in parallel, $i$ into cell $h(s_i)$ of \arr{C}. Here $s_i$ denotes the $i$-th tuple from~\(\arr{B}\).  Other processors might attempt to write an index to cell $h(s_i)$, but only one will succeed.
	This requires $\bigO(\lenof{\arr{B}})$ work to write the indices and $\bigO(\lenof{\arr{A}} + \lenof{\arr{B}})$ space for \arr{C}.

	For each \propertuple $t$ in $\arr{A}$ it is then checked in parallel, if cell $h(t)$ contains an index~$i$.
	If yes, then $t$ is marked and augmented with a pointer to cell $\arr{B}[i]$, since $\arr{B}[i].t = t$.
	If not, then $t$ has no partner tuple in $\arr{B}$, thus $t$ is not augmented by a link.
      \end{proof}
      We note that in the special case that \arr{A} consists of a single tuple $t$, the algorithm is worse than the naive algorithm that just compares $t$ with each tuple of  \arr{B} since that also requires work $\bigO(\lenof{\arr{B}})$ but does not need any additional space.

      Array hash tables also yield a work-efficient algorithm for deduplication in a straightforward way.
      \begin{lem}\label{result:alg-deduplicate-dictionary}
    For every $\varepsilon>0$, there is a \ctpalgo for \opnameDeduplicate in the dictionary setting that, on an Arbitrary CRCW PRAM, requires \wsbounds
 			{\lenof{\arr{A}}}%
			{\lenof{\arr{A}}\cdot \lenof{\db}^\varepsilon}.
\end{lem}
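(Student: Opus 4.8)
The plan is to reduce deduplication to a single application of an array hash table, in exactly the spirit of the search algorithm in \autoref{result:alg-search-dictionary}. First I would invoke \autoref{result:alg-compute-hash-values} to compute, for the given array $\arr{A}$, an array hash table $h$ that assigns to each inhabited cell $\arr{A}[i]$ a number $h(i)\in\range{\lenof{\arr{A}}}$ with $h(i)=h(j)$ if and only if $\arr{A}[i].t=\arr{A}[j].t$. This step costs work $\bigO(\lenof{\arr{A}})$ and space $\bigO(\lenof{\arr{A}}\cdot\lenof{\db}^\varepsilon)$.

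Next I would allocate an auxiliary array $\arr{C}$ of size $\lenof{\arr{A}}$ with all cells initially empty, and then run two parallel rounds with one processor per cell of $\arr{A}$. In the first round, every processor $i$ whose cell $\arr{A}[i]$ is inhabited writes its own index $i$ into cell $\arr{C}[h(i)]$. On an Arbitrary CRCW PRAM, for each value $v$ in the image of $h$ exactly one of the processors writing to $\arr{C}[v]$ succeeds; call the surviving index the \emph{representative} of that tuple class. In the second round, each processor $i$ with $\arr{A}[i]$ inhabited reads $r = \arr{C}[h(i)]$; since all writes of the first round are complete, all processors associated with equal tuples read the same $r$. If $r=i$ the cell $\arr{A}[i]$ is kept inhabited (it is the unique representative of its class); otherwise the processor flags $\arr{A}[i]$ as uninhabited and augments it with a link to $\arr{A}[r]$. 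This yields exactly the output required by $\opDeduplicate{\arr{A}}$: one representative per class of identical \propertuples, every other occurrence deactivated and linked to its representative.

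For the bounds, each of the two rounds reads and writes a constant number of memory cells per processor, contributing $\bigO(\lenof{\arr{A}})$ work, and $\arr{C}$ contributes only $\bigO(\lenof{\arr{A}})$ space. Adding the cost of \autoref{result:alg-compute-hash-values} gives \wsbounds{\lenof{\arr{A}}}{\lenof{\arr{A}}\cdot\lenof{\db}^\varepsilon}, as claimed.

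There is essentially no real obstacle here; the only point that needs care is the consistency of the representative choice across processors handling equal tuples, which is guaranteed because they all consult the \emph{same} cell $\arr{C}[h(i)]$ after the write round has finished, so the arbitrary-write nondeterminism is resolved once and for all before any processor reads. As with \autoref{result:alg-compute-hash-values}, the resulting set of representatives is not uniquely determined by $\arr{A}$, but it is a valid deduplication for any resolution of the concurrent writes.
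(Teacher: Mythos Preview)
Your proof is correct and follows essentially the same approach as the paper: compute the array hash table via \autoref{result:alg-compute-hash-values}, have each processor write its index into an auxiliary array at its hash position, and then let each processor read back that cell to determine whether it is the representative or should deactivate and link to the surviving index. The paper's proof sketch is virtually identical, differing only in notation (it calls the auxiliary array $\arr{B}$ rather than $\arr{C}$).
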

\begin{proof}[Proof sketch]
 The algorithm first computes an array hash table for $\arr{A}$ with $\bigO(\lenof{\arr{A}})$ work and $\bigO(\lenof{\arr{A}}\cdot \lenof{\db}^\varepsilon)$ space, thanks to
	\autoref{result:alg-compute-hash-values}.
	For a \propertuple $t$, let $h(t)$ denote the hash value in the range $\range{\lenof{\arr{A}}}$ assigned to $t$.
	The algorithm then allocates an auxiliary array \arr{B} of size $\lenof{\arr{A}}$ and, for each \propertuple $t_i$ in $\arr{A}$, it writes, in parallel, $i$ into cell $h(t_i)$ of \arr{B}. Here $t_i$ denotes the $i$-th tuple from \(\arr{A}\).  Several processors might attempt to write an index into the same cell $h(t_i)$, but only one will succeed.
	This requires $\bigO(\lenof{\arr{A}})$ work to write the indices and $\bigO(\lenof{\arr{A}})$ space for \arr{B}.

	For each \propertuple $t_i$ in $\arr{A}$ it is the index $j$ stored in cell $h(t_i)$ is then compared, in parallel, with $i$. If $i=j$ then $t_i$ is a representative and remains active. Otherwise, $t_i$ is marked as a duplicate and a pointer to $t_j$ is added.
\end{proof}

      \subsection{Order-Based Algorithms for Search and Deduplication }\label{section:ctp:se-searching}

Even though the algorithms of \autoref{section:ctp:we-searching} are work-optimal, they require some additional space, e.g., $\bigO(\lenof{\arr{A}}\cdot \lenof{\db}^\varepsilon)$ in \autoref{result:alg-deduplicate-dictionary}. In this subsection, we consider algorithms that are slightly more space-efficient whenever \(\lenof{\arr{A}} < \lenof{\db}\). They require that data values are equipped with a linear order, but do not rely on the dictionary setting. However,  they are not work-optimal and require  that at least one of the arrays is ordered and has some additional structure.

In sequential database processing, indexes implemented by search trees make it possible to efficiently test whether a given tuple is in a given relation and to find the largest ``smaller'' tuple, if it is not. In a concise ordered array, this is possible in constant time with work $\bigO(\lenof{\arr{A}}^{\varepsilon})$ by ``\(\lenof{\arr{A}}^{\varepsilon}\)-ary search''. That is, in a first round a sub-array of length \(\lenof{\arr{A}}^{1-\varepsilon}\) containing the tuple (if it is there) is identified by inspecting \(\lenof{\arr{A}}^{\varepsilon}\) cells, and so on. However, in arrays with uninhabited cells, this might fail, since these are not helpful for this process. To this end, the algorithms in this subsection use
predecessor and successor links, linking each (inhabited or uninhabited) cell to the next smaller or larger  \emph{inhabited cell}, respectively. We say an array is
\emph{fully linked} if such links are available. By \autoref{result:predsucc} we know that predecessor and successor links can be computed in constant time with work and space \(\bigO(\lenof{\arr{A}}^{1+\varepsilon})\).

     For fully ordered arrays,  deduplication is very easy.

      \begin{lem}\label{result:alg-deduplicate-ordered}
        For every $\varepsilon>0$, there is a \ctpalgo   on an Arbitrary CRCW PRAM for \opnameDeduplicate that, given a fully ordered array \arr{A} requires work and space $\bigO(\lenof{\arr{A}}^{1+\varepsilon})$.
\end{lem}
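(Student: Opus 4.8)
The plan relies on the observation that in a fully ordered array identical tuples occupy a single contiguous block of inhabited cells. Indeed, if $\attsetX$ is an attribute list witnessing that $\arr{A}$ is fully ordered, then $t[\attsetX] = t$ for every tuple $t$; hence whenever $i < k < j$ with $\arr{A}[i]$, $\arr{A}[k]$, $\arr{A}[j]$ inhabited and $\arr{A}[i].t = \arr{A}[j].t$, monotonicity of the $\attsetX$-keys forces $\arr{A}[k].t = \arr{A}[i].t$. So duplicates can be detected just by comparing neighbouring inhabited cells.

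First I would make $\arr{A}$ fully linked by \autoref{result:predsucc}, at work and space $\bigO(\lenof{\arr{A}}^{1+\varepsilon})$, so that every cell knows the next smaller inhabited cell (its predecessor link). Then, with one processor per cell, I would flag an inhabited cell $\arr{A}[j]$ as a \emph{representative} if it has no predecessor link or the cell it points to carries a different tuple; this costs $\bigO(\lenof{\arr{A}})$ work. By the contiguity observation, each maximal block of equal inhabited tuples contains exactly one representative, namely its leftmost inhabited cell.

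To attach each duplicate to its representative, I would build an auxiliary array $\arr{B}$ of length $\lenof{\arr{A}}$ whose cell $j$ is non-empty precisely when $\arr{A}[j]$ is a representative, and again apply \autoref{result:predsucc}, this time to $\arr{B}$, at cost $\bigO(\lenof{\arr{A}}^{1+\varepsilon})$. For an inhabited non-representative cell $\arr{A}[j]$, the $\arr{B}$-predecessor link at $j$ points to the nearest representative strictly left of $j$, which by contiguity is exactly the first cell of $\arr{A}[j]$'s block; so the algorithm deactivates $\arr{A}[j]$ and adds a link to that cell, while representatives stay active. Altogether this is a constant-time computation with work and space $\bigO(\lenof{\arr{A}}^{1+\varepsilon})$, and since only comparisons and pointer writes occur outside the two invocations of \autoref{result:predsucc}, a Common (and hence the Arbitrary) CRCW PRAM suffices.

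I do not expect a genuine obstacle here. The only mildly delicate point is the correctness argument that, in a fully ordered array, equal tuples are contiguous among inhabited cells and that a block's representative is its leftmost inhabited cell — both immediate from the definition of \emph{fully ordered}, because the witnessing attribute list contains all attributes. Everything else is a routine composition of the predecessor-link primitive of \autoref{result:predsucc} with constant-work local steps.
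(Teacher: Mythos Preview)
Your proposal is correct and follows essentially the same approach as the paper: apply \autoref{result:predsucc} once to obtain predecessor links, mark as representative each inhabited cell whose predecessor carries a different tuple, then apply \autoref{result:predsucc} a second time treating non-representatives as uninhabited to link every duplicate to its block's representative. Your explicit contiguity argument and the use of an auxiliary array \(\arr{B}\) are minor presentational differences, not substantive ones.
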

\begin{proof}[Proof sketch]
  The algorithm basically uses the algorithm of \autoref{result:predsucc} twice.
  During the first time, it computes  predecessor and successor links just as in \autoref{result:predsucc}.
  Then, for every proper tuple $t$, $t$ is marked as a representative, if the predeccessor tuple of $t$ is different from $t$. Then the algorithm of \autoref{result:predsucc} is applied a second time but
  this time cells whose tuple are not representatives are treated like uninhabited cells resulting in the computation of a link to the next representative (of the same tuple) and from representatives to the previous representative (of a smaller tuple).%
\end{proof}

In fully linked and ordered arrays, it is possible to search efficiently for tuples.

\begin{prop}\label{result:alg-search-fullylinkedB}
  For every $\varepsilon>0$, there is a  \ctpalgo on a CREW PRAM for
$\opnameSearchTuples$ that, given  arrays \arr{A} and  \arr{B} requires work and space
$\bigO(\lenof{\arr{A}}\cdot\lenof{\arr{B}}^{\varepsilon})$
if the array \(\arr{B}\) is fully linked and fully ordered. In fact, if the algorithm computes, for each tuple $t$ of \arr{A}  the largest index \(i\) with $\arr{B}[i].t[\attsetX]\le t$ (or the smallest index \(j\) with \(\arr{B}[j].t[\attsetX] \ge t\)). Here, $\attsetX$ denotes the list of attributes, such that \(\arr{B}\) is $\attsetX$-ordered.
\end{prop}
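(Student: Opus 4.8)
The plan is to implement an $\lenof{\arr{B}}^{\varepsilon}$-ary search simultaneously for all tuples of $\arr{A}$, where the predecessor/successor links of $\arr{B}$ compensate for its uninhabited cells. First I would set $\delta$ to be a constant with $0 < \delta \le \varepsilon$ (say $\delta = \varepsilon$) and let $n = \lenof{\arr{B}}$, $r = \lceil 1/\delta\rceil$; the search will consist of $r$ rounds. Throughout, I maintain for each tuple $t$ of $\arr{A}$ a ``current window'' of $\arr{B}$: a pair of indices $(\ell, u)$ such that $t$'s target index (the largest $i$ with $\arr{B}[i].t[\attsetX] \le t$, where comparison uses $\opnameLessThan$ in the ordered setting, or $\opnameEqual$ plus the recorded order in the dictionary setting) is guaranteed to lie in $\{\ell,\dots,u\}$. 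Initially the window is all of $\{1,\dots,n\}$. In each round, the window of size at most $n^{1 - k\delta}$ (after $k$ rounds) is subdivided into $n^{\delta}$ equally sized sub-windows; one processor is assigned to each (tuple of $\arr{A}$, sub-window boundary) pair, so $\bigO(\lenof{\arr{A}} \cdot n^{\delta})$ work per round, hence $\bigO(\lenof{\arr{A}} \cdot n^{\varepsilon})$ overall. For each sub-window boundary index $b$, the processor follows the predecessor link of $\arr{B}[b]$ to reach the nearest inhabited cell $\arr{B}[b']$ with $b' \le b$ (and, if needed, the successor link as well), and compares $\arr{B}[b'].t[\attsetX]$ with $t$ using a single elemental operation. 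After all $n^{\delta}$ comparisons for a fixed $t$, one more processor (or the same ones via a $\bigO(n^{\delta})$-fan-in) determines which sub-window contains the target and updates $(\ell,u)$; this is where concurrent \emph{reads} of $t$ are needed, but no concurrent writes, so a CREW PRAM suffices.

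The key correctness point is the window invariant: because $\arr{B}$ is $\attsetX$-ordered, the inhabited representatives $\arr{B}[b']$ at the sub-window boundaries are themselves sorted, so exactly one sub-window can contain the transition from ``$\le t$'' to ``$> t$'', and the predecessor/successor links guarantee that an inhabited cell witnessing each boundary comparison is always found (unless the whole prefix up to $b$ is uninhabited, which the links also flag, and which just means the target, if any, is farther right). After $r$ rounds the window has size at most $n^{1 - r\delta} \le n^{0} = 1$ — more precisely, size $\bigO(1)$ after adjusting constants — so the target index can be read off directly: scan the $\bigO(1)$ cells of the final window, use predecessor links to find the largest inhabited index $i \le u$ with $\arr{B}[i].t[\attsetX] \le t$, and (if $\arr{A}[i]$ is inhabited and $\arr{B}[i].t = t$) record the $\opnameSearchTuples$ link, or otherwise just record the index $i$ (resp.\ $j$) as claimed in the statement. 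Uninhabited cells of $\arr{A}$ contribute no work since their processors halt immediately, and the full-linkedness of $\arr{B}$ is what we invoke (via \autoref{result:predsucc}) to have the links available in the first place.

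The space bound $\bigO(\lenof{\arr{A}} \cdot \lenof{\arr{B}}^{\varepsilon})$ follows because the only auxiliary storage is, per tuple of $\arr{A}$, the $\bigO(n^{\delta})$ comparison results of the current round plus a constant amount of window bookkeeping; this is reused across rounds. I expect the main obstacle to be the careful bookkeeping around uninhabited boundary cells: one must argue that following a predecessor link and then testing $\arr{B}[b'].t[\attsetX] \le t$ correctly classifies the sub-window even when $b' < b$ strictly, i.e.\ that no inhabited cell strictly between $b'$ and $b$ exists to ``hide'' a transition — which is immediate from the definition of predecessor links — together with the symmetric successor-link argument for deciding whether the transition lies at the left versus the right end of a sub-window. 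The rest is a routine geometric-series bookkeeping of work and space across the $\lceil 1/\varepsilon\rceil$ rounds, exactly as in the proof sketch of \autoref{result:predsucc}.
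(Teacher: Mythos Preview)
Your proposal is correct and follows essentially the same approach as the paper: an $n^{\varepsilon}$-ary search carried out in parallel for each tuple of $\arr{A}$, with predecessor/successor links used to skip over uninhabited boundary cells, terminating after $\lceil 1/\varepsilon\rceil$ rounds. Your write-up is in fact more detailed than the paper's proof sketch (you spell out the window invariant, the space accounting, and why exclusive writes suffice), but the algorithmic idea and the analysis are the same.
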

      If  \arr{B} is fully ordered, but not fully linked, the combination with \autoref{result:predsucc} yields an upper bound of $\bigO((\lenof{\arr{A}}+\lenof{\arr{B}})\cdot\lenof{\arr{B}}^{\varepsilon})$.
\begin{proof}[Proof sketch]
  Let $n=\lenof{\arr{B}}$. The algorithm performs the following search, for each proper tuple $t$ of \arr{A} in parallel.
  In the first round, using $n^\varepsilon$ processors, the algorithm aims to identify a number $j$ with $j n^{1-\varepsilon} \le i < (j+1) n^{1-\varepsilon}$ for the largest index \(i\) with $\arr{B}[i].t[\attsetX]\le t$. To this end, it tests, for all cells with positions of a form $k=\ell\cdot n^{1-\varepsilon}$ for \(0\le \ell < n^\varepsilon\) whether $\arr{B}[k]$ (or its successor if  $\arr{B}[k]$ is not inhabited) contains a tuple \(t'\) such that \(t'[X] \le t\) holds and whether this does not hold for position $(\ell+1)n^{1-\varepsilon}$ or its successor. The search continues recursively in the thus identified sub-interval. After  $\lceil\frac{1}{\varepsilon}\rceil$ rounds it terminates. Since, in each round, $n^\varepsilon$ processors are used (per tuple $t$), the statement follows.
\end{proof}
We note that, unlike for \autoref{result:alg-search-dictionary}, the algorithm is also quite efficient for the single tuple case ($\lenof{\arr{A}}=1$).

As an alternative to fully-linked arrays, bounded-depth search trees with degree about $n^\varepsilon$ could be used as index structures. They can be defined in the obvious way. The work for a search is then asymptotically the same as for fully linked ordered arrays.

Interestingly, there is a dual algorithm with almost dual bounds in the case that \arr{A} is fully ordered, instead of \arr{B}.

\begin{prop}\label{result:alg-search-fullylinkedA}
  For every $\varepsilon>0$, there is a  \ctpalgo on an Arbitrary CREW PRAM for
  $\opnameSearchTuples$ that, given  arrays \arr{A} and  \arr{B} requires work and space $\bigO((\lenof{\arr{A}}+\lenof{\arr{B}})\cdot\lenof{\arr{A}}^{\varepsilon})$,
if the array \(\arr{A}\) is fully ordered.
\end{prop}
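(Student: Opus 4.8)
The plan is to \emph{dualise} the algorithm behind \autoref{result:alg-search-fullylinkedB}: since now it is \arr{A} that is fully ordered, we can afford to search \emph{into} \arr{A}, and we will do so twice --- once for the tuples of \arr{B} and once for the tuples of \arr{A} itself. Write $n = \lenof{\arr{A}}$ and $m = \lenof{\arr{B}}$. Because \arr{A} is fully ordered, all inhabited cells of \arr{A} carrying a fixed tuple form a contiguous run of inhabited cells (possibly with uninhabited cells interspersed); call such a run a \emph{block} and its inhabited cell of smallest index its \emph{leader}. The core difficulty to keep in mind is that \arr{A} need not be concise, so a single tuple of \arr{B} may match a whole, possibly long, block of \arr{A}, and we cannot afford to broadcast a link from one processor to all cells of that block in constant time.

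First I would invoke the not-fully-linked variant of \autoref{result:alg-search-fullylinkedB} with \arr{B} as the first and \arr{A} as the (ordered) second array. For each \propertuple $s$ of \arr{B} this computes the smallest index $j$ with $\arr{A}[j]$ inhabited and $\arr{A}[j].t \ge s$ (since \arr{A} is fully ordered, its attribute list comprises all attributes, so $\arr{A}[j].t[\attsetX] = \arr{A}[j].t$). If $\arr{A}[j].t = s$ then $s$ occurs in \arr{A} and $\arr{A}[j]$ is exactly the leader of the block of cells of \arr{A} with tuple $s$; otherwise $s$ has no match. For the $i$-th tuple $s_i$ of \arr{B} with a match, the algorithm writes $i$ into cell $j$ of a fresh auxiliary array \arr{M} of size $n$. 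Distinct tuples of \arr{B} have distinct block leaders in \arr{A}, so the only concurrent writes to \arr{M} come from \emph{equal} tuples of \arr{B}, and any arbitrary resolution records a valid witness --- this is where the ``Arbitrary'' of the statement is used. By \autoref{result:alg-search-fullylinkedB} this step runs in constant time with work and space $\bigO((n+m)\cdot n^\varepsilon)$.

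The second step hands the witness at each block leader to \emph{every} inhabited cell of that block. Instead of a broadcast, I would invoke \autoref{result:alg-search-fullylinkedB} a second time, now with \arr{A} in \emph{both} argument positions --- a fully linked and fully ordered copy of \arr{A} (prepared within the claimed bounds via \autoref{result:predsucc}) acts as the search structure. For each inhabited cell $\arr{A}[\ell]$ this returns the smallest index $j$ with $\arr{A}[j].t \ge \arr{A}[\ell].t$; since $\arr{A}[\ell].t$ occurs in \arr{A}, ordering forces $\arr{A}[j].t = \arr{A}[\ell].t$, so $j$ is precisely the leader of $\arr{A}[\ell]$'s block and agrees with the index used in the first step. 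The processor for $\arr{A}[\ell]$ then reads $\arr{M}[j]$ and, if it contains an index $i$, links $\arr{A}[\ell]$ to $\arr{B}[i]$ (with $\arr{B}[i].t = \arr{A}[\ell].t$ by construction); otherwise $\arr{A}[\ell].t$ has no match in \arr{B} and no link is added. This step again costs constant time and work and space $\bigO(n^{1+\varepsilon})$ by \autoref{result:alg-search-fullylinkedB}, plus $\bigO(n)$ for \arr{M}; adding the two bounds gives the claimed $\bigO((n+m)\cdot n^\varepsilon)$, and all writes outside \arr{M} are exclusive.

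The step I expect to be the main obstacle is exactly this second one: reconciling the ``one \arr{B}-tuple $\to$ many \arr{A}-cells'' fan-out without a broadcast and without blowing the budget. The device above --- letting every inhabited cell of \arr{A} re-discover its own block leader by a search into \arr{A} and read the witness deposited there in the first step --- resolves it. An alternative that avoids the self-search is to deduplicate \arr{A} first via \autoref{result:alg-deduplicate-ordered}, search only the representatives for the tuples of \arr{B}, and let each deactivated cell of \arr{A} obtain its link by following the link to its representative; this yields the same bounds.
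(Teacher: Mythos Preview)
Your proposal is correct and follows essentially the same approach as the paper. The paper first applies \opnameDeduplicate to \arr{A} (via \autoref{result:alg-deduplicate-ordered}) to obtain representatives with links from every cell of \arr{A}, then searches for each \propertuple of \arr{B} in the deduplicated array and propagates the resulting links back along the representative links --- precisely the alternative you describe in your last paragraph. Your main variant, which replaces the explicit deduplication by a second invocation of \autoref{result:alg-search-fullylinkedB} (searching \arr{A} in itself so that each inhabited cell rediscovers its own block leader), is just a different packaging of the same idea with the same bounds: both variants identify a unique representative per distinct tuple in \arr{A}, deposit a \arr{B}-witness there via an Arbitrary write, and then have each cell of \arr{A} read from its representative.
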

However, unlike the algorithm of \autoref{result:alg-search-fullylinkedB}, this algorithm does not guarantee to identify  the largest index \(i\) with $\arr{B}[i].t[\attsetX]\le t$ (or the smallest index \(j\) with \(\arr{B}[j].t[\attsetX] \ge t\)).
\begin{proof}[Proof sketch]
  First, the algorithm computes a deduplicated version \arr{A'} of  \arr{A} and, in particular, computes links from all tuples of \arr{A} to their representatives in \arr{A'}.
   Then, it searches, for each \propertuple $s$ in $\arr{B}$, the smallest tuple $t$ in \arr{A'} with $t\ge s$.
	If $t=s$ then the cell of $t$ is marked and a link to the cell of $s$ is added.
	We note that if \(\arr{B}\) is not concise, multiple processors might attempt to write a link for a cell of a \propertuple \(t\) in \(\arr{A'}\), but only one of them succeeds.
Finally, for each \propertuple $t$ in $\arr{A}$, if its representative in \arr{A'} has obtained a link to a tuple in \arr{B}, this link is added to $t$.
\end{proof}

\section{Database Operations}\label{section:ctp:alg-for-db-ops}

In this section, we proceed similarly to \autoref{section:ctp:alg-for-basic-ops}.
We start by defining array-based operations for the operators of the relational algebra.
The main contribution of this section is the presentation of \ctpalgos for these operations and the analysis of their complexity with respect to work and space.
Before we do that in \autoref{section:ctp:dbops-upper-bounds}, we will first present some lower bounds in \autoref{section:ctp:dbops-lower-bounds} that are consequences of the impossibility results of \autoref{section:ctp:lower-bounds}.

\paragraph*{Array-based database operations.}
For database operations, unlike for the basic operations, we require that input relations are given by arrays representing them concisely. That is, each tuple occurs exactly once in the array.
Likewise, all algorithms produce output arrays which represent the result relation concisely.
However, neither for input nor for output relations we make any assumptions about the compactness of the representations, i.e.,   there can be uninhabited cells.

More precisely, we  actually present \ctpalgos for the following operations (on arrays) which correspond to the operators of the relational algebra (on relations).\footnote{We believe that distinguishing between operations on arrays and operations on relations also improves the readability of (pseudo-code) algorithms and proofs.}

Recall that we denote arrays representing relations \(R\),\(S\), etc.\ by \(\relarr{R}\), \(\relarr{S}\), etc.
\begin{itemize}
	\item ${\opSelection[\attA=x]{\relarr{R}}}$ returns a new array \(\relarr{R}'\) that represents \sel{R}{\attA}{x} concisely. Here \(\attA\) is an attribute of \(R\) and \(x\) is either an attribute of \(R\) or a domain value.

	\item ${\opProjection[\attsetX]{\relarr{R}}}$] returns a new array \(\relarr{R}'\) that represents \proj{R}{\attsetX} concisely. Here \(\attsetX\) is a set or sequence of attributes from \(R\).

	\item $\opDifference{\relarr{R},\relarr{S}}$ returns a new array that represents \(R\setminus S\) concisely.
	Here \(R\) and \(S\) are relations over the same set of attributes.

	\item $\opUnion{\relarr{R},\relarr{S}}$ returns a new array that represents \(R\cup S\) concisely.
	Here \(R\) and \(S\) are relations over the same set of attributes.

	\item $\opSemiJoin{\relarr{R},\relarr{S}}$ returns a new array \(\relarr{R}'\) that represents \(\sjoin{R}{S}\) concisely.

	\item $\opJoin{\relarr{R},\relarr{S}}$ returns a new array that represents \(R\Join S\) concisely.
\end{itemize}

We note that the relational algebra has an additional rename operator,
which, of course, does not require a parallel algorithm.

\subsection{Lower Bounds}\label{section:ctp:dbops-lower-bounds}
In this subsection, we show two lower bound results for algorithms that compute the semijoin of two unary relations \(R\)
  and \(S\):
\begin{itemize}
\item  there is no \ctpalgo using a polynomial number of
  processors, that stores the result in an array of size
  \(\lenof{\sjoin{R}{S}}\);
\item any algorithm that stores the result   in an array of size $(1+\padeps)\lenof{\sjoin{R}{S}}$, for each input, requires $n^{1+\varepsilon}$ processors.
\end{itemize}

These lower bounds rely on the lower bounds for the exact compaction problem  (\autoref{result:sort-lower}\ref{result:sort-lower-compaction}) and the $\lambda$-approximate compaction problem (\autoref{coro:compact-delta}), respectively. Note, that the lower bounds even hold if the input arrays are concise, compact and fully ordered.
Our lower bounds are stated in the following theorem.

\begin{thm}\label{result:dbops-lowerbound}
	Let $\padeps > 0$ be a constant. For every CRCW PRAM algorithm for the operation \(\opnameSemiJoin\) the following statements hold.
	\begin{enumerate}[(a)]
		\item If the algorithm uses a polynomial number of processors and produces compact output arrays, for every possible input, then it has running time $\Omega\left( \frac{\log n}{\log \log n} \right)$.
		\label{result:dbops-lowerbound-exact}
		\item If the algorithm only needs constant time and produces \(\padeps\)-compact output arrays, for every possible input, then it needs at least $\Omega(n^{1+\varepsilon})$ processors, for some constant $\varepsilon > 0$. %
		\label{result:dbops-lowerbound-general-sjoin}
	\end{enumerate}
	Here \(n\) is the size of the input arrays.
	The lower bounds even hold if the input relations \(R\) and \(S\) are unary and represented by concise, compact and fully ordered arrays $\relarr{R}$ and $\relarr{S}$, and the database has small values.
\end{thm}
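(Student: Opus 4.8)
The plan is to reduce the exact and approximate compaction problems for $\{0,1\}$-arrays (\autoref{result:sort-lower}\ref{result:sort-lower-compaction} and \autoref{coro:compact-delta}) to computing a semijoin of two unary relations whose arrays are concise, compact, fully ordered, and over a small-value domain. The key observation is that a semijoin $\sjoin{R}{S}$ of unary relations is essentially a filtering/compaction task: starting from $R = \{1,\dots,n\}$, the result $\sjoin{R}{S}$ keeps exactly those elements of $R$ that also lie in $S$, so choosing $S$ appropriately lets us ``select'' an arbitrary subset of $\{1,\dots,n\}$, and an algorithm that writes this result into a compact (or $\padeps$-compact) array thereby compacts the corresponding bit array.

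First I would fix the encoding. Given a $\{0,1\}$-array $\arr{b}$ of length $n$ for which we want to solve (approximate) compaction, set $R = \{1, 2, \dots, n\}$ and $S = \{\, i \mid \arr{b}[i] = 1\,\}$. Both are unary relations with values in $\range{n}$, hence small. The array $\relarr{R}$ can be taken to be $[1,2,\dots,n]$, which is concise, compact, and fully ordered. For $\relarr{S}$, note that we do \emph{not} want to assume we can build a compact sorted array of $S$ for free — but we may assume the input database relations are presented appropriately, and since the claim only needs the lower bound to hold for \emph{some} compact fully ordered presentation, I would instead argue via a Turing-style reduction: build $\relarr{S}$ from $\arr{b}$ using whatever constant-time preprocessing is allowed (the reduction itself is allowed to be non-constant-time, or we simply observe that $S$ with the identity indexing over $\range{n}$ is already fully ordered after deleting the $0$-cells — but deleting requires compaction, which is circular). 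The cleaner route is: the lower bounds of \autoref{result:sort-lower} and \autoref{coro:compact-delta} already hold for \emph{sorted $0/1$ input}, i.e.\ the parity/compaction hardness does not need the preprocessing; so I take $\relarr{S}$ to be a compact ordered array listing the $1$-positions and $\relarr{R}=[1,\dots,n]$, and the hard instance family is parametrized by which subset $S\subseteq\range{n}$ we pick — the adversary picks $\relarr{S}$, it is compact and ordered by fiat, and the algorithm must still handle all of them.

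Next, for part \ref{result:dbops-lowerbound-exact}: suppose a CRCW PRAM algorithm with polynomially many processors computes $\opnameSemiJoin$ and always produces a compact output array. Running it on $(\relarr{R},\relarr{S})$ as above yields a compact array of size exactly $\lenof{\sjoin{R}{S}} = \lenof{S} = \sum_i \arr{b}[i]$. Reading off this size (available in the hidden size cell, or by following the final link as in \autoref{result:alg-compact}) and testing its parity computes $\bigoplus_i \arr{b}[i]$ in the same asymptotic time. By \autoref{prop:lb-parity}, this forces running time $\Omega(\log n / \log\log n)$. For part \ref{result:dbops-lowerbound-general-sjoin}: a constant-time algorithm producing a $\padeps$-compact output array of size $(1+\padeps)\lenof{\sjoin{R}{S}}$ yields, on the same instances, a $\padeps$-compact array containing exactly the $1$-positions — i.e.\ it solves $\padeps$-approximate compaction of $\arr{b}$ in constant time. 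By \autoref{coro:compact-delta} this requires $\Omega(n^{1+\varepsilon})$ processors for some constant $\varepsilon>0$. In both cases $n$ here is the size $\lenof{\relarr{R}} = \lenof{\relarr{S}}$... — strictly $\lenof{\relarr{S}} = \lenof{S} \le n$, so I would pad $\relarr{S}$ to length $n$ with a disjoint dummy value (e.g.\ include value $0$ in both $R$ and $S$) so that both input arrays have size $\Theta(n)$ and the reduction is size-preserving.

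The main obstacle I anticipate is the circularity around presenting $\relarr{S}$: to honestly claim the input is a \emph{compact, fully ordered} array representing $S$, one cannot simply say ``take $\arr{b}$ and drop the zeros,'' since that is itself a compaction. The resolution — which I would spell out carefully — is that the lower bounds being invoked (parity, and Chaudhuri's approximate-compaction bound) are themselves proven against inputs that are \emph{arbitrary} $0/1$ arrays, and the reduction maps such an array to a semijoin instance in which $\relarr{R}$ is trivially compact/ordered and $\relarr{S}$ ranges over \emph{all} compact ordered unary arrays over $\range{n}$; the hardness is then inherited because the family of semijoin instances is rich enough to encode the family of hard compaction instances, with the encoding computable by the reduction without needing compaction (we just need: given the hard instance, \emph{exhibit} a corresponding semijoin instance — and for each subset the corresponding compact ordered $\relarr{S}$ exists). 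Making this last point precise — that no compaction is smuggled into the reduction — is the delicate part; everything else is bookkeeping.
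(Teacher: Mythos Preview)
Your overall strategy --- reduce (approximate) compaction to \(\opnameSemiJoin\) on unary relations --- matches the paper's. You also correctly identify the one real difficulty: presenting \(\relarr{S}\) as a \emph{compact, concise, fully ordered} array without already solving compaction. However, your proposed resolutions do not close this gap. Arguing that ``for each subset the corresponding compact ordered \(\relarr{S}\) exists'' is not enough: a reduction must \emph{compute} the semijoin instance from the compaction instance within the resource bounds, so mere existence of a nice presentation is irrelevant. Your padding idea (``include value \(0\) in both \(R\) and \(S\)'') also fails as stated: repeating a single dummy value destroys conciseness, and you still have not explained how to lay out the genuine elements of \(S\) contiguously without first compacting the bit array.

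The paper resolves this with a local, per-cell encoding that avoids any global rearrangement. Take \(\relarr{R}[i]=2i\) and set \(\relarr{S}[i]=2i\) if \(\arr{b}[i]\) is non-empty and \(\relarr{S}[i]=2i{+}1\) otherwise. Then \(\relarr{S}\) is computed cell-by-cell from \(\arr{b}\) in constant time with \(n\) processors; it is automatically compact (every cell inhabited), concise (all values distinct), fully ordered (values are increasing in \(i\)), and has small values. The semijoin \(\sjoin{R}{S}\) is exactly \(\{2i : \arr{b}[i]\text{ non-empty}\}\), and replacing each \(2i\) in the output array by \(\arr{b}[i]\) recovers a (\(\padeps\)-)compacted version of \(\arr{b}\). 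Both parts then follow from \autoref{result:sort-lower}\ref{result:sort-lower-compaction} and \autoref{coro:compact-delta} exactly as you outlined. The missing idea is precisely this even/odd trick that makes \(\relarr{S}\) compact and ordered \emph{by construction} rather than by compaction.
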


\begin{proof}
  Both results basically use the same reduction from the respective compaction problem.

	We first prove the second part. Assume for contradiction that there is a CRCW PRAM algorithm for \(\opnameSemiJoin\)  that guarantees a \(\padeps\)-compact output array and runs in constant time using $\mu(n)$ processors, where $\mu(n)\ge n$ and $\mu(n)=o(n^{1+\varepsilon})$, for every $\varepsilon>0$. We show how this algorithm can be used to solve the approximate compaction problem with $\mu(n)$ processors in constant time.

	Let $\arr{B}$ be an input array of size $n$ with \(k\) non-empty cells for the \(\padeps\)-approximate compaction problem.

	In the first step create an array $\relarr{R}$ of size $n$ for the unary relation $R$ that consists of the even numbers from $2$ to $2n$.
	This can be done in parallel in constant time with $n$ processors.

	Let $\relarr{S}$ be an array of size $n$.
	In the second step, store, for every $i \in \range{n}$, the value $2i$ in $\relarr{S}[i]$ if $\arr{B}[i]$ is not empty, and otherwise store the value $2i+1$ in $\relarr{S}[i]$.
	Hence, the query result of $q$ exactly consists of all even numbers $2i$, for which $\arr{B}[i]$ is not empty.

	Note, that by construction, the arrays for the relations $R$ and $S$ are concise, compact, and fully ordered.

	From the array for the query result a solution for the linear approximate compaction problem can be obtained by replacing every value $2i$ in the result by $\arr{B}[i]$. The size of the compact array is at most $(1+\padeps)\lenof{\sjoin{R}{S}} = (1+\padeps)\lenof{\join{R}{S}} = (1+\padeps)k$.

	All in all, the algorithm takes constant time and uses $\mu(n)$ processors to solve the linear approximate compaction problem. Hence, by \autoref{prop:lowerbound:lac} it holds $\mu(n)=\Omega(n^{1+\varepsilon})$, for some $\varepsilon>0$.
	By the same construction, an algorithm for \(\opnameSemiJoin\) (or \(\opnameJoin\)) according to (a) yields an exact compaction algorithm that works in constant time with polynomially many processors, contradicting \autoref{result:sort-lower}\ref{result:sort-lower-compaction}.
      \end{proof}

      Of course, the result also holds for the join of two unary relations.

      We emphasise that \autoref{result:dbops-lowerbound} (b) really relies on the condition that the output array is \(\padeps\)-compact, since we will see below in \autoref{alg:semijoin} that  \(\opnameSemiJoin\)  can be computed with linear work without this condition, in the dictionary setting.

\subsection{Algorithms and Upper Bounds}\label{section:ctp:dbops-upper-bounds}

With the notable exception of the \opnameJoin operation, for most operations
our algorithms are simple combinations of the algorithms of
\autoref{section:ctp:alg-for-basic-ops}.
These simple algorithms are presented in \autoref{section:ctp:semijoin-operations}.

The algorithms for the \opnameJoin operation are slightly more involved and are presented in \autoref{section:ctp:join-algorithms}.

We present algorithms for the dictionary setting and algorithms that do not rely on the dictionary setting but assume that some input array  is suitably ordered (and possibly fully linked).

\subsubsection{Algorithms for the Operations of the Semijoin Algebra}
\label{section:ctp:semijoin-operations}

For the operators of the semijoin algebra, i.e.\ all except the \(\opnameJoin\)-operation, we will present \enquote{high-level} algorithms built upon the basic array operations introduced in \autoref{section:ctp:alg-for-basic-ops}.
Plugging in the concrete algorithms for the basic operation then yields different algorithms with different assumptions and (slightly) different complexities. More precisely, there is always an algorithm for the dictionary setting that has no further assumptions and there are algorithms that require that some arrays are suitably ordered.

An overview of the work and space bounds obtained by the algorithms for the dictionary setting is given in \autoref{table:complexity-bounds-semijoin-dictionary}. The bounds for the other algorithms are given in \autoref{table:complexity-bounds-semijoin-ordered}, together with the underlying assumptions.

\begin{table}
	\caption{Complexity bounds for the operations of the semijoin algebra in the dictionary setting}
\centering
	\begin{tabular}{l l l l}
		\toprule
		Operation & Result & Work bound & Space bound\\
		\midrule
		\(\opSelection[\attA=x]{\relarr{R}}\) & \autoref{alg:selection} & \(\bigO(\lenof{\relarr{R}})\) & \(\bigO(\lenof{\relarr{R}})\)\\
		\(\opProjection[\attsetX]{\relarr{R}}\) & \autoref{alg:project} & \(\bigO(\lenof{\relarr{R}})\) & \(\bigO(\lenof{\relarr{R}}\cdot\lenof{\db}^{\varepsilon})\)\\
		\(\opSemiJoin{\relarr{R},\relarr{S}}\) & \autoref{alg:semijoin} & \(\bigO(\lenof{\relarr{R}}+\lenof{\relarr{S}})\) & \(\bigO((\lenof{\relarr{R}} + \lenof{\relarr{S}})\cdot\lenof{\db}^{\varepsilon})\) \\
		\(\opDifference{\relarr{R},\relarr{S}}\) & \autoref{alg:difference} & \(\bigO(\lenof{\relarr{R}}+\lenof{\relarr{S}})\) & \(\bigO((\lenof{\relarr{R}} + \lenof{\relarr{S}})\cdot\lenof{\db}^{\varepsilon})\) \\
		\(\opUnion{\relarr{R},\relarr{S}}\) & \autoref{alg:union} & \(\bigO(\lenof{\relarr{R}}+\lenof{\relarr{S}})\) & \(\bigO((\lenof{\relarr{R}} + \lenof{\relarr{S}})\cdot\lenof{\db}^{\varepsilon})\) \\
		\bottomrule
	\end{tabular}
	\label{table:complexity-bounds-semijoin-dictionary}
\end{table}

\begin{table}
	\caption{Complexity bounds for the operations of the semijoin algebra in the ordered (and dictionary) setting.
	In the conditions for the \opnameSemiJoin operation, \(\attsetX\) is the set of common attributes of \(R\) and \(S\).}
\centering
	\begin{tabular}{l l l p{3.5cm}}
		\toprule
		Operation & Result & Work and space bound & Conditions\\
		\midrule
		\(\opSelection[\attA=x]{\relarr{R}}\) & \autoref{alg:selection} & \(\bigO(\lenof{\relarr{R}})\) & -- \\\addlinespace
		\(\opProjection[\attsetX]{\relarr{R}}\) & \autoref{alg:project} & \(\bigO(\lenof{\relarr{R}}^{1+\varepsilon})\) & if \(\relarr{R}\) is \(\attsetX\)-ordered\newline\phantom{if \(\relarr{R}\)} and fully linked\\\addlinespace
		\(\opSemiJoin{\relarr{R},\relarr{S}}\) & \autoref{alg:semijoin} & \(\bigO(\lenof{\relarr{R}}\cdot\lenof{\relarr{S}}^\varepsilon)\) & if \(\relarr{S}\) is \(\attsetX\)-ordered\newline\phantom{if \(\relarr{S}\)} and fully linked\\
		&&\(\bigO((\lenof{\relarr{R}}+\lenof{\relarr{S}})\cdot\lenof{\relarr{R}}^\varepsilon)\) & if \(\relarr{R}\) is \(\attsetX\)-ordered\newline\phantom{if \(\relarr{R}\)} and fully linked\\\addlinespace
		\(\opDifference{\relarr{R},\relarr{S}}\) & \autoref{alg:difference} & \(\bigO(\lenof{\relarr{R}}\cdot\lenof{\relarr{S}}^\varepsilon)\) & if \(\relarr{S}\) is fully ordered\newline\phantom{if \(\relarr{S}\)} and fully linked\\
		&&\(\bigO((\lenof{\relarr{R}}+\lenof{\relarr{S}})\cdot\lenof{\relarr{R}}^\varepsilon)\) & if \(\relarr{R}\) is fully ordered\newline\phantom{if \(\relarr{R}\)} and fully linked\\\addlinespace
		\(\opUnion{\relarr{R},\relarr{S}}\) & \autoref{alg:union} & \(\bigO(\lenof{\relarr{R}}\cdot\lenof{\relarr{S}}^\varepsilon+\lenof{\relarr{S}})\) & if \(\relarr{S}\) is fully ordered\newline\phantom{if \(\relarr{S}\)} and fully linked\\
		\bottomrule
	\end{tabular}
	\label{table:complexity-bounds-semijoin-ordered}
\end{table}

\begin{prop}\label{alg:selection}
	For every \(x\) that is either a domain value or an attribute there is a \ctpalgo for \opnameSelection[\attA=x] that, given an array \(\relarr{R}\), requires $\bigO(\lenof{\relarr{R}})$ work and space on an EREW PRAM.
	The output array is of size at most $\lenof{\relarr{R}}$.
	If $\relarr{R}$ is \(\attsetX\)-ordered, then the output is \(\attsetX\)-ordered, too.
\end{prop}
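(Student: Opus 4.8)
The plan is to give a trivial one-round algorithm: allocate a fresh output array and assign one processor to each cell of $\relarr{R}$, letting each processor decide independently whether the tuple in its cell survives the selection.

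Concretely, the algorithm first allocates an array $\relarr{R}'$ of size $\lenof{\relarr{R}}$, all of whose cells start out uninhabited, and fixes the position $j_{\attA}$ of attribute $\attA$ in $R$'s schema (a constant, since the schema is fixed). Then, for each $i\in\range{\lenof{\relarr{R}}}$, processor $i$ inspects cell $\relarr{R}[i]$. If the cell is uninhabited, the processor does nothing. Otherwise it tests the selection condition on $\relarr{R}[i].t$: if $x$ is a domain value, it evaluates $\opEqualConst{R,i,j_{\attA},x}$; if $x$ is an attribute $\attB$ of $R$ occurring at position $j_{\attB}$, it evaluates $\opEqual{R,i,j_{\attA};R,i,j_{\attB}}$. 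If the test succeeds, the processor copies the tuple (or its token representation) from $\relarr{R}[i]$ into $\relarr{R}'[i]$, marks $\relarr{R}'[i]$ inhabited, and installs mutual links between $\relarr{R}[i]$ and $\relarr{R}'[i]$; if it fails, $\relarr{R}'[i]$ stays uninhabited.

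For correctness, recall that $\relarr{R}$ represents $R$ concisely, so each tuple of $R$ sits in exactly one inhabited cell. The inhabited cells of $\relarr{R}'$ are precisely those $\relarr{R}'[i]$ whose counterpart $\relarr{R}[i]$ was inhabited and satisfied $\attA=x$; hence $\relarr{R}'$ contains exactly the tuples of $\sel{R}{\attA}{x}$, each exactly once, i.e.\ it represents the result concisely, and its size is $\lenof{\relarr{R}}$. Since $\relarr{R}'$ is obtained from $\relarr{R}$ only by turning some cells uninhabited, with no reordering, the induced tuple sequence of $\relarr{R}'$ is a subsequence of that of $\relarr{R}$; thus $\attsetX$-orderedness of $\relarr{R}$ carries over to $\relarr{R}'$.

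For the bounds, each processor performs a constant number of elemental operations and copies a constant-size tuple (the arity is fixed), so the total work is $\bigO(\lenof{\relarr{R}})$, and the only allocated memory is $\relarr{R}'$ of size $\lenof{\relarr{R}}$, giving space $\bigO(\lenof{\relarr{R}})$. Processor $i$ reads and writes only the cells with index $i$ in $\relarr{R}$ and $\relarr{R}'$, and no other processor touches them, so there is no concurrent memory access and an EREW PRAM suffices. There is essentially no real obstacle here; the only things to verify are that a single processor can evaluate the selection test and manipulate a token representation in constant time, which is exactly what the elemental operations $\opnameEqual$/$\opnameEqualConst$ and the fixed-arity assumption provide.
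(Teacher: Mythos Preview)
Your proof is correct and follows essentially the same approach as the paper: assign one processor per cell, test the selection condition locally, and keep or discard the tuple accordingly. The paper phrases it even more tersely as replacing non-matching tuples by $\bot$ in place rather than copying survivors to a fresh array, but this is an inessential variation; your additional remarks on EREW-safety, conciseness, and preservation of $\attsetX$-order just spell out what the paper leaves implicit.
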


\begin{proof}
	The algorithm simply assigns  to each tuple one processor, which replaces the tuple by $\bot$ if it does not fulfil the selection condition.	Obviously, the output is ordered, if $\relarr{R}$ is ordered.
\end{proof}

\begin{prop}\label{alg:project}
  For every $\varepsilon>0$ and every  sequence \(\attsetX\) of attributes, there are \ctpalgos for \opnameProjection[\attsetX] that, given an array \(\relarr{R}\), have the following bounds on an Arbitrary CRCW PRAM.
  \begin{enumerate}[(a)]
	\item\WSBoundsDictionary%
		{\lenof{\relarr{R}}}%
		{\lenof{\relarr{R}}\cdot \lenof{\db}^{\varepsilon}};
	\label{bound:project-dictionary}
	\item Work and space $\bigO(\lenof{\relarr{R}}^{1+\varepsilon})$, if \(\relarr{R}\) is fully ordered.
	\label{bound:project-ordered}
      \end{enumerate}
      The algorithms also provide links from the tuples in \relarr{R} to the respective result tuples and from the tuples of the result relation to some ``source tuple''.
  The output array is of size  $\lenof{\relarr{R}}$.
  If $\relarr{R}$ is $\attsetY$-ordered, for some $\attsetY$, then the output is $\attsetY$-ordered, too.
\end{prop}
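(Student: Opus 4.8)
The plan is a two-phase scheme. In the first phase we produce an array \(\relarr{R}'\) that represents \(\proj{R}{\attsetX}\) --- in general \emph{not} concisely --- by projecting every tuple of \(\relarr{R}\) on its own; in the second phase we turn this into a concise representation using one of the deduplication algorithms of \autoref{section:ctp:alg-for-basic-ops}, picking the one that fits the setting.

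For the first phase, allocate \(\relarr{R}'\) of the same size as \(\relarr{R}\) and assign processor \(i\) to cell \(\relarr{R}[i]\). If that cell is inhabited and carries a tuple \(t\), processor \(i\) writes (a token representation of) \(t[\attsetX]\) into \(\relarr{R}'[i]\), flags it as inhabited, and installs mutual links between \(\relarr{R}[i]\) and \(\relarr{R}'[i]\); otherwise \(\relarr{R}'[i]\) stays uninhabited. Each processor performs a constant number of steps and no cell is accessed concurrently, so this runs in constant time on an EREW PRAM with \(\bigO(\lenof{\relarr{R}})\) work and space. The array \(\relarr{R}'\) represents \(\proj{R}{\attsetX}\), keeps the size and cell layout of \(\relarr{R}\), and contains each result tuple once for every tuple of \(R\) it is the projection of.

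For the second phase we run \(\opnameDeduplicate\) on \(\relarr{R}'\). In the dictionary setting we use the hashing-based algorithm of \autoref{result:alg-deduplicate-dictionary}, costing \(\bigO(\lenof{\relarr{R}'}) = \bigO(\lenof{\relarr{R}})\) work and \(\bigO(\lenof{\relarr{R}'}\cdot\lenof{\db}^{\varepsilon}) = \bigO(\lenof{\relarr{R}}\cdot\lenof{\db}^{\varepsilon})\) space on an Arbitrary CRCW PRAM, which gives part~\ref{bound:project-dictionary}. If \(\relarr{R}\) is fully ordered, we instead first compute predecessor and successor links for \(\relarr{R}'\) via \autoref{result:predsucc} and then apply the order-based deduplication of \autoref{result:alg-deduplicate-ordered}, remaining within work and space \(\bigO(\lenof{\relarr{R}'}^{1+\varepsilon}) = \bigO(\lenof{\relarr{R}}^{1+\varepsilon})\); this gives part~\ref{bound:project-ordered}. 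In both cases the output array has size \(\lenof{\relarr{R}}\), because deduplication only deactivates cells and the first phase never moves one, whence the output also inherits any order of \(\relarr{R}\). The promised links come essentially for free: deduplication links every deactivated occurrence of a result tuple to its surviving representative, and composing this with the phase-one links gives, for every tuple of \(\relarr{R}\), a link to the unique inhabited output cell holding its projection; conversely, a surviving result tuple \(\relarr{R}'[j]\) retains its phase-one link to the source tuple \(\relarr{R}[j]\).

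The step I expect to need real care is the second one in the ordered case: \autoref{result:alg-deduplicate-ordered} needs \(\relarr{R}'\) to be fully ordered --- i.e.\ all occurrences of a result tuple must land in consecutive cells --- and this does \emph{not} follow from \(\relarr{R}\) being fully ordered for an arbitrary \(\attsetX\), since dropping attributes can split a run of equal projections apart. It does hold when the attributes of \(\attsetX\) form an initial segment of the order of \(\relarr{R}\), which is the situation that arises in the query-evaluation applications, so this is the reading under which part~\ref{bound:project-ordered} and the order claim for the output should be understood. The rest --- the work and space accounting and the composition of the two link layers --- is routine.
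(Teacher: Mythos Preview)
Your proposal follows essentially the same two-phase approach as the paper: project each tuple in place with linear work, then invoke \(\opnameDeduplicate\) and inherit the bounds from \autoref{result:alg-deduplicate-dictionary} or \autoref{result:alg-deduplicate-ordered}. The paper's proof likewise singles out the point you flagged, remarking that applying \autoref{result:alg-deduplicate-ordered} requires \(\relarr{R}'\) to be fully ordered with respect to \(\attsetX\), which it justifies by the assumption that \(\relarr{R}\) is ordered with respect to \(\attsetX\); this is exactly the reading you propose, and it is also how the condition is stated in \autoref{table:complexity-bounds-semijoin-ordered}.
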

We note that in (b) the requirement that $\relarr{R}$ is fully linked is only needed to meet the space bound, since the computation of full links is possible within the given work bound (possibly using more space).

\begin{proof}[Proof for \autoref{alg:project}]
	The algorithm first computes with $\bigO(\relarr{R})$ work and space an array \relarr{R'} that has, for each tuple of \relarr{R}, the respective projected tuple.
	Then it invokes $\opDeduplicate{\relarr{R'}}$ to single out representatives for tuples with  multiple appearances. It then marks duplicates, i.e., cells that do not contain any representative, as uninhabited and outputs the resulting array.
        The links can be easily computed with the help of the links provided by $\opDeduplicate{\relarr{R'}}$.
	The work and space bounds are dominated by the deduplication, therefore the bounds follow from \autoref{result:alg-deduplicate-dictionary} and \autoref{result:alg-deduplicate-ordered}, respectively.
To apply \autoref{result:alg-deduplicate-ordered} it is crucial that  \(\relarr{R'}\) is fully ordered w.r.t.\ \(\attsetX\), which is the case since  \(\relarr{R}\) is ordered w.r.t.\ \(\attsetX\). %
\end{proof}

\begin{prop}\label{alg:semijoin}
  For every $\varepsilon>0$, there are \ctpalgos for \opnameSemiJoin that, given arrays \(\relarr{R}\) and \(\relarr{S}\) have the following bounds on an Arbitrary CRCW PRAM.
  \begin{enumerate}[(a)]
  \item\WSBoundsDictionary%
  	{\lenof{\relarr{R}} + \lenof{\relarr{S}}}%
  	{(\lenof{\relarr{R}} + \lenof{\relarr{S}})\cdot \lenof{\db}^{\varepsilon}};
  \label{bound:semijoin-dictionary}
  \item Work and space
	\(\bigO(\lenof{\relarr{R}} \cdot \lenof{\relarr{S}}^\varepsilon)\), if
    \(\relarr{S}\) is fully $\attsetX$-ordered and fully linked;\label{bound:semijoin-ordered}
  \item Work and space $\bigO((\lenof{\relarr{R}}+\lenof{\relarr{S}})\cdot\lenof{\relarr{R}}^{\varepsilon})$ if
  	\(\relarr{R}\) is fully $\attsetX$-ordered.
  \label{bound:semijoin-oredered-reverse}
\end{enumerate}
 Here, $\attsetX$ denotes the set of joint attributes of the relations $R$ and $S$.\footnote{In (b) and (c), $\attsetX$ can be any list of these attributes.}
  The output array is of size  $\lenof{\relarr{R}}$.
  If $\relarr{R}$ is \(\attsetY\)-ordered, then the output is \(\attsetY\)-ordered, too.
\end{prop}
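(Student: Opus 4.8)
The plan is to handle all three cases with one schematic algorithm and then plug in the matching search subroutine. Write $\attsetX$ for the set of attributes common to $R$ and $S$, so that $\sjoin{R}{S}=\{t\in R\mid t[\attsetX]\in\proj{S}{\attsetX}\}$. First I would build, with one processor per cell, an array $\relarr{R}_{\attsetX}$ of size $\lenof{\relarr{R}}$ whose $i$-th cell holds (the token representation of) $\arr{R}[i].t[\attsetX]$ whenever $\arr{R}[i]$ is inhabited, mutually linked with $\arr{R}[i]$, and likewise an array $\relarr{S}_{\attsetX}$ from $\relarr{S}$; both take $\bigO(\lenof{\relarr{R}})$ resp.\ $\bigO(\lenof{\relarr{S}})$ work and space and need not be concise. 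Then I would run $\opSearchTuples{\relarr{R}_{\attsetX},\relarr{S}_{\attsetX}}$, so that afterwards $\relarr{R}_{\attsetX}[i]$ carries a link into $\relarr{S}_{\attsetX}$ iff $\arr{R}[i].t[\attsetX]$ occurs there, i.e.\ iff $\arr{R}[i].t\in\sjoin{R}{S}$. Finally the output is a positional copy $\relarr{R}'$ of $\relarr{R}$ in which $\relarr{R}'[i]$ is flagged uninhabited exactly when $\relarr{R}_{\attsetX}[i]$ received no such link (and, where useful, augmented with the link to a matching $S$-tuple via $\relarr{S}_{\attsetX}$). Since $\sjoin{R}{S}\subseteq R$ and $\relarr{R}$ is concise, $\relarr{R}'$ concisely represents $\sjoin{R}{S}$; since only positions are copied and some cells de-activated, $\relarr{R}'$ has size $\lenof{\relarr{R}}$ and inherits any $\attsetY$-order of $\relarr{R}$.

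For part (a), I would instantiate the search with the dictionary-setting algorithm of \autoref{result:alg-search-dictionary} on $\arr{A}=\relarr{R}_{\attsetX}$ and $\arr{B}=\relarr{S}_{\attsetX}$; as $\lenof{\relarr{R}_{\attsetX}}=\lenof{\relarr{R}}$, $\lenof{\relarr{S}_{\attsetX}}=\lenof{\relarr{S}}$, and all occurring values are values of $\db$, this gives $\bigO(\lenof{\relarr{R}}+\lenof{\relarr{S}})$ work and $\bigO((\lenof{\relarr{R}}+\lenof{\relarr{S}})\cdot\lenof{\db}^\varepsilon)$ space, dominating the preprocessing. For part (b), if $\relarr{S}$ is $\attsetX$-ordered and fully linked, then $\relarr{S}_{\attsetX}$ is $\attsetX$-ordered — and, having only attributes from $\attsetX$, fully ordered — while its inhabited cells occupy the same positions as in $\relarr{S}$, so $\relarr{S}$'s predecessor/successor links also make $\relarr{S}_{\attsetX}$ fully linked. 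Hence \autoref{result:alg-search-fullylinkedB} applies and contributes $\bigO(\lenof{\relarr{R}_{\attsetX}}\cdot\lenof{\relarr{S}_{\attsetX}}^\varepsilon)=\bigO(\lenof{\relarr{R}}\cdot\lenof{\relarr{S}}^\varepsilon)$. Symmetrically, for part (c), if $\relarr{R}$ is $\attsetX$-ordered then $\relarr{R}_{\attsetX}$ is fully ordered, so \autoref{result:alg-search-fullylinkedA} applies with $\arr{A}=\relarr{R}_{\attsetX}$, $\arr{B}=\relarr{S}_{\attsetX}$ and yields $\bigO((\lenof{\relarr{R}_{\attsetX}}+\lenof{\relarr{S}_{\attsetX}})\cdot\lenof{\relarr{R}_{\attsetX}}^\varepsilon)=\bigO((\lenof{\relarr{R}}+\lenof{\relarr{S}})\cdot\lenof{\relarr{R}}^\varepsilon)$; the preprocessing is again absorbed.

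The argument has no deep obstacle; the points that need care are bookkeeping ones. The search routines must compare $R$-tuples with $S$-tuples only on the common attributes, which is precisely why I pass the restricted arrays $\relarr{R}_{\attsetX},\relarr{S}_{\attsetX}$ rather than $\relarr{R},\relarr{S}$ themselves — and this restriction works only because it neither moves nor de-activates cells, so that $\attsetX$-order and full linkedness carry over verbatim. I also rely on the fact that \autoref{result:alg-search-dictionary}, \autoref{result:alg-search-fullylinkedB}, and \autoref{result:alg-search-fullylinkedA} all tolerate non-concise inputs (the last one even deduplicates its ordered argument internally), so that no deduplication of $\relarr{R}_{\attsetX}$ or $\relarr{S}_{\attsetX}$ is required; conciseness of the result, as noted, is inherited for free from $\relarr{R}$.
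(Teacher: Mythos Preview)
Your scheme is essentially the paper's, and parts (a) and (c) go through exactly as you describe: project both arrays positionally, call the appropriate \opnameSearchTuples variant, and deactivate cells of a copy of $\relarr{R}$; the search cost dominates, and conciseness plus any $\attsetY$-order are inherited from $\relarr{R}$.

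Part (b) has a genuine cost-accounting gap. You materialise $\relarr{S}_{\attsetX}$ with one processor per cell of $\relarr{S}$, i.e.\ with $\bigO(\lenof{\relarr{S}})$ work, and then claim the total is $\bigO(\lenof{\relarr{R}}\cdot\lenof{\relarr{S}}^\varepsilon)$. But $\lenof{\relarr{S}}$ is \emph{not} absorbed by $\lenof{\relarr{R}}\cdot\lenof{\relarr{S}}^\varepsilon$ in general: take $\lenof{\relarr{R}}=1$ and $\lenof{\relarr{S}}=n$, so that the target bound is $\bigO(n^\varepsilon)$ while your preprocessing already spends $\Theta(n)$. The paper avoids exactly this by \emph{not} materialising $\relarr{S}_{\attsetX}$ at all in case (b): since $\relarr{S}$ is $\attsetX$-ordered and fully linked, the $n^\varepsilon$-ary search of \autoref{result:alg-search-fullylinkedB} can be run directly on $\relarr{S}$, comparing each probed cell $\relarr{S}[j]$ to the search key on the $\attsetX$-attributes only (a single processor can compute $\relarr{S}[j].t[\attsetX]$ on the fly). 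With that adjustment the only arrays touched are $\relarr{R}$, $\relarr{R}_{\attsetX}$, and the output copy, all of size $\lenof{\relarr{R}}$, and the search contributes $\bigO(\lenof{\relarr{R}}\cdot\lenof{\relarr{S}}^\varepsilon)$ as required.
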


\begin{proof}[Proof sketch]
	The \enquote{high-level} algorithm is identical for (a) and (c), but they yield different bounds  due to different implementations of some operations.
	Afterwards, we discuss how the algorithm is adapted for (b).

        First, the algorithm computes arrays \(\relarr{R'}\) and \(\relarr{S'}\) which represent $R' = \proj{R}{\attsetX}$ and $S' = \proj{S}{\attsetX}$ by applying the projection to $X$ to each tuple. These arrays are, in general, not concise. It further adds links between each tuple and its projection and vice versa.

        The algorithm then uses $\opSearchTuples{\relarr{R'},\relarr{S'}}$ to augment every tuple in $\relarr{R'}$ by a link, if there is a corresponding  tuple in $\relarr{S'}$. Tuples in \relarr{R'} without a corresponding tuple in \relarr{S'} are dropped. Finally, using $\lenof{\relarr{S}}$ processors, the algorithm computes, for those tuples, whose projection has a link to a tuple in \relarr{S'}, links to tuples in \relarr{S}, using the available pointers from \relarr{R}-tuples to \relarr{R'}-tuples and from \relarr{S'}-tuples to (some) \relarr{S}-tuples.
        For (a) and (c), the work and space for $\opnameSearchTuples$  dominate the overall work and space and the bounds therefore follow from \autoref{result:alg-search-dictionary} and \autoref{result:alg-search-fullylinkedA}, respectively.

        To achieve the stated work bound in (b), with the help of \autoref{result:alg-search-fullylinkedB}, \(\relarr{S'}\) is actually not materialised, but instead the tuple search algorithm is adapted to find partner tuples in~\relarr{S}, for each tuple in \relarr{R'}. This is possible, since \relarr{S} is fully \(\attsetX\)-ordered.
       \end{proof}

By almost the same algorithms, the same bounds hold for \opnameDifference.  However, a tuple remains in the result, if there is no partner tuple in \relarr{S}, and any list of attributes, for which \relarr{S} (or \relarr{R}) is ordered, works.

\begin{prop}\label{alg:difference}
  For every $\varepsilon>0$, there are \ctpalgos for \opnameDifference that, given arrays \(\relarr{R}\) and \(\relarr{S}\) have the following bounds on an Arbitrary CRCW PRAM.
  \begin{enumerate}[(a)]
  \item\WSBoundsDictionary%
  	{\lenof{\relarr{R}} + \lenof{\relarr{S}}}%
  	{(\lenof{\relarr{R}} + \lenof{\relarr{S}})\cdot \lenof{\db}^{\varepsilon}};
  \label{bound:difference-dictionary}
  \item Work and space
	\(\bigO(\lenof{\relarr{R}} \cdot \lenof{\relarr{S}}^\varepsilon)\), if
    \(\relarr{S}\) is fully  ordered and fully linked;\label{bound:difference-ordered}
  \item Work and space $\bigO((\lenof{\relarr{R}}+\lenof{\relarr{S}})\cdot\lenof{\relarr{R}}^{\varepsilon})$ if
  	\(\relarr{R}\) is fully  ordered.
  \label{bound:difference-ordered-reverse}
  \end{enumerate}
  The output array is of size  $\lenof{\relarr{R}}$.
  If $\relarr{R}$ is \(\attsetY\)-ordered, then the output is \(\attsetY\)-ordered, too.
\end{prop}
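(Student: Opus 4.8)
The plan is to reuse the high-level structure of the proof of \autoref{alg:semijoin} almost verbatim, exploiting that for the difference $R\setminus S$ the relations $R$ and $S$ are over the \emph{same} attributes. Consequently, no projection step is needed: one can apply $\opnameSearchTuples$ directly to the full tuples of $\relarr{R}$ and $\relarr{S}$ to decide, for each tuple of $R$, whether it occurs in $S$. This is also why (unlike in \autoref{alg:semijoin}) any attribute list for which $\relarr{S}$ --- respectively $\relarr{R}$ --- is fully ordered and fully linked can be used in parts (b) and (c): we always match complete tuples.

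Concretely, the algorithm first invokes $\opSearchTuples{\relarr{R},\relarr{S}}$, so that every inhabited cell $\relarr{R}[i]$ with $\relarr{R}[i].t\in S$ gets a link to a cell of $\relarr{S}$ holding the same tuple, while cells whose tuple is not in $S$ receive no link. The output array is then obtained by copying $\relarr{R}$ (with mutual links between the copies) and flagging as uninhabited exactly those inhabited cells that carry such a link; all remaining inhabited cells are kept unchanged. Correctness is immediate: a tuple of $R$ survives iff it has no partner in $S$, i.e.\ iff it lies in $R\setminus S$; and since $\relarr{R}$ represents $R$ concisely and we only delete tuples, the output represents $R\setminus S$ concisely, has size $\lenof{\relarr{R}}$, and is $\attsetY$-ordered whenever $\relarr{R}$ is, because no cell is ever moved. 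The links promised between input and output tuples are precisely the copy links.

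For the three bounds I would plug the corresponding implementation of $\opnameSearchTuples$ into this skeleton; since everything outside the search call costs only $\bigO(\lenof{\relarr{R}})$ work and space, the bounds are inherited. Part (a) uses \autoref{result:alg-search-dictionary} in the dictionary setting. Part (b) uses \autoref{result:alg-search-fullylinkedB} when $\relarr{S}$ is fully ordered and fully linked --- there the search returns, for each $t$, the largest index $i$ with $\relarr{S}[i].t\le t$, so a single equality test decides membership. Part (c) uses \autoref{result:alg-search-fullylinkedA} when $\relarr{R}$ is fully ordered; here the deduplication step internal to that algorithm is essentially free since $\relarr{R}$ is already concise, and although this search does not pin down the largest smaller index, it does produce a link exactly when the tuple is matched, which is all that is needed. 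I expect no genuine obstacle beyond the link bookkeeping already dealt with in \autoref{alg:semijoin}; the only point worth re-checking is that inverting the acceptance condition (keep iff \emph{no} partner) causes no trouble, but since there are no non-concise intermediate arrays here (no projection is performed), this is unproblematic.
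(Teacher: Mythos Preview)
Your proposal is correct and follows essentially the same approach as the paper: the paper simply remarks that the algorithms for \opnameSemiJoin carry over with the acceptance condition inverted (keep a tuple iff it has \emph{no} partner in \(\relarr{S}\)), and that any ordering of \(\relarr{S}\) (resp.\ \(\relarr{R}\)) works since the full tuples are compared. Your observation that the projection step becomes trivial because \(R\) and \(S\) share all attributes is exactly the simplification the paper implicitly relies on.
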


Finally, we consider the operation       \opnameUnion. It is trivial to compute a non-concise representation of \(R\cup S\) from concise representations of \(R\) and \(S\), by basically concatenating the arrays \relarr{R} and \relarr{S}. To obtain a concise representation, \autoref{alg:difference} can be used to compute a representation of $ R \setminus S$, which then can be concatenated with \relarr{S}. Conciseness is then guaranteed since $ R \setminus S$ and $S$ are disjoint. This yields the following result.

\begin{prop}\label{alg:union}
  For every $\varepsilon>0$, there are \ctpalgos for \opnameUnion that, compute concise representations of $R\cup S$, given concise arrays \(\relarr{R}\) and \(\relarr{S}\), and have the following bounds on an Arbitrary CRCW PRAM.
  \begin{enumerate}[(a)]
  \item\WSBoundsDictionary%
  	{\lenof{\relarr{R}} + \lenof{\relarr{S}}}%
  	{(\lenof{\relarr{R}} + \lenof{\relarr{S}})\cdot \lenof{\db}^{\varepsilon}};
  \label{bound:union-dictionary}
  \item Work and space \(\lenof{\relarr{R}} \cdot \lenof{\relarr{S}}^\varepsilon  + \lenof{\relarr{S}}\), if
    \(\relarr{S}\) is fully  ordered and fully linked.\footnote{Of course, the roles of $R$ and $S$ can be exchanged.}\label{bound:union-ordered}
  \end{enumerate}
  The output array is of size  $\lenof{\relarr{R}}+\lenof{\relarr{S}}$.
\end{prop}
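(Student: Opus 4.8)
The plan is to follow the recipe sketched just before the statement and reduce $\opnameUnion$ to $\opnameDifference$ followed by a concatenation. First I would run the appropriate variant of \autoref{alg:difference} on $\relarr{R}$ and $\relarr{S}$ to obtain a concise array $\relarr{R}''$ representing $R\setminus S$; by \autoref{alg:difference} this array has size exactly $\lenof{\relarr{R}}$. In the dictionary setting I would invoke \autoref{alg:difference}\ref{bound:difference-dictionary}, and when $\relarr{S}$ is fully ordered and fully linked I would invoke \autoref{alg:difference}\ref{bound:difference-ordered}, whose hypotheses coincide verbatim with the hypotheses of part~\ref{bound:union-ordered} of the present statement.

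Next I would allocate a fresh array $\relarr{T}$ of size $\lenof{\relarr{R}}+\lenof{\relarr{S}}$ and, using one processor per cell, copy $\relarr{R}''$ into the first $\lenof{\relarr{R}}$ cells of $\relarr{T}$ and $\relarr{S}$ into the last $\lenof{\relarr{S}}$ cells (carrying over the inhabited/uninhabited flags, the token representations, and, if desired, the links produced by the difference algorithm together with identity links on the $S$-part). Since $(R\setminus S)\cap S=\emptyset$ and both $\relarr{R}''$ and $\relarr{S}$ are concise, every tuple of $R\cup S$ occurs in exactly one inhabited cell of $\relarr{T}$; hence $\relarr{T}$ concisely represents $R\cup S$ and has the claimed size $\lenof{\relarr{R}}+\lenof{\relarr{S}}$. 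This copying step runs in constant time with work and space $\bigO(\lenof{\relarr{R}}+\lenof{\relarr{S}})$ on an EREW, hence Arbitrary CRCW, PRAM.

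For the complexity bounds I would simply add the two contributions. For part~\ref{bound:union-dictionary}, \autoref{alg:difference}\ref{bound:difference-dictionary} already yields work $\bigO(\lenof{\relarr{R}}+\lenof{\relarr{S}})$ and space $\bigO((\lenof{\relarr{R}}+\lenof{\relarr{S}})\cdot\lenof{\db}^{\varepsilon})$, which dominate the cost of the copy, so the stated bound follows. For part~\ref{bound:union-ordered}, \autoref{alg:difference}\ref{bound:difference-ordered} yields work and space $\bigO(\lenof{\relarr{R}}\cdot\lenof{\relarr{S}}^{\varepsilon})$, and the extra $\bigO(\lenof{\relarr{S}})$ summand in the claimed bound is precisely the cost of materialising the $S$-part of $\relarr{T}$, which cannot be absorbed into $\lenof{\relarr{R}}\cdot\lenof{\relarr{S}}^{\varepsilon}$ when $\relarr{R}$ is short.

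I do not expect a genuine obstacle here: the substance of the argument was already carried out in \autoref{alg:difference} (and, transitively, in the search and deduplication algorithms of \autoref{section:ctp:we-searching} and \autoref{section:ctp:se-searching}). The only points that need a little care are checking that the hypotheses of the two difference variants line up with the two union variants, and keeping the explicit $+\lenof{\relarr{S}}$ term in part~\ref{bound:union-ordered}, which is unavoidable because the result must be returned as a single array rather than as an implicit pair of arrays.
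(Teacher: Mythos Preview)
Your proposal is correct and follows exactly the approach the paper itself prescribes in the paragraph preceding the proposition: compute $R\setminus S$ via the appropriate variant of \autoref{alg:difference} and concatenate the result with $\relarr{S}$, noting that disjointness of $R\setminus S$ and $S$ guarantees conciseness. The complexity accounting, including the observation that the extra $+\lenof{\relarr{S}}$ in part~\ref{bound:union-ordered} comes from materialising the $S$-half of the output, is precisely what the paper intends.
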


\subsubsection{Algorithms for the \texorpdfstring{\opnameJoin}{Join}-Operation}
\label{section:ctp:join-algorithms}

In this section, we present algorithms for the \opnameJoin-operation. There is a straightforward naive algorithm that requires work $\Theta(\lenof{\relarr{R}} \cdot \lenof{\relarr{S}})$, comparing every tuple in \relarr{R} with every tuple in \relarr{S}. In fact, for unordered data, there is no better algorithm as will be shown in \autoref{result:dbops-lowerbound-general}.

In the dictionary setting, or if \relarr{R} and \relarr{S} are appropriately ordered one can do significantly better. However, \autoref{result:dbops-lowerbound} indicates that any constant time algorithm that computes the join of $R$ and $S$ and represents its result by a $\lambda$-compact array, for some $\lambda$, may require $\Omega(\lenof{\relarr{R}}^{1+\varepsilon}+\lenof{\relarr{S}}^{1+\varepsilon})$ processors.

On the other hand, since the output relation needs to be materialised by any algorithm, the work has to be $\Omega(\lenof{R\Join S})$.

We will see that these lower bounds can be almost matched: in fact, if one of the relations is properly ordered the work can be bounded by  $\bigO(\lenof{\relarr{R}}^{1+\varepsilon}+\lenof{\relarr{S}}^{1+\varepsilon}+\lenof{R\Join S}^{1+\varepsilon})$.

The algorithm first computes a $\lambda$-compact array \arr{A'} representing $\sjoin{R}{S}$  and a $\attsetX$-sorted array  \arr{B'} representing $\sjoin{S}{R}$, where $\attsetX$ is a list of all common attributes of $R$ and $S$. Afterwards, each tuple from  \arr{A'}  needs to be combined with each tuple from  \arr{B'} with the same  $\attsetX$-projection. Thanks to its ordering  the latter tuples are already grouped appropriately in~\arr{B'}.

However, to get the desired work bound, it does not suffice to merely $\lambda$-compact  \arr{B'}. To see this, consider the join of two relations $R$ and $S$ along a join attribute $X$. Let us assume, both relations consist of $n$ tuples and there are $n^{0.4}$ attribute values that occur $n^{0.3}$ times in both relations and all other attribute values occur only once. Each of the frequent values induces $n^{0.6}$ output tuples, the $n^{0.4}$ values together thus yield $n$ output tuples. The infrequent values result in slightly less than $n$ output tuples. Even if both $R$ and $S$ are $\lambda$-compact, it is possible that in $R$ and $S$ the tuples of each frequent value stretch over a subarray of $n^{0.6}$ cells. Assigning processors to each pair of such cells (one in $R$ and one in $S$) would require $n^{1.5}$ processors,  violating the desired upper bound.
To overcome this challenge, the algorithm will rather  compact each group separately.

We are now ready to present our algorithms for the \(\opnameJoin\) operation.

\newcommand{\wboundjoindict}[3][\varepsilon]{%
	(\lenof{\relarr{#3}}+\lenof{D})^{1+#1}
	+ \lenof{\relarr{#2}}^{1+#1}
	+ \lenof{#2 \bowtie #3}^{1+#1}%
}

\newcommand{\sboundjoindict}[3][\varepsilon]{%
	(\lenof{\relarr{#3}}+\lenof{D})^{1+#1}
	+ \lenof{\relarr{#2}}^{1+#1}
	+ \lenof{#2 \bowtie #3}^{1+#1}
      }

\begin{prop}\label{alg:join}
	For every $\varepsilon > 0$ and every $\lambda > 0$, there are \ctpalgos for \opnameJoin that, given arrays  \(\relarr{R}\) and \(\relarr{S}\) have the following bounds on an Arbitrary CRCW PRAM.
	Here, $\attsetX$ denotes the joint attributes of $R$ and $S$.
	\begin{enumerate}[(a)]%
	\item Work and space \(\bigO(\wboundjoindict{R}{S})\) in the dictionary setting; %
	\label{bound:join-dictionary}
	\item Work and space \(\bigO(\lenof{\relarr{S}}^{1+\varepsilon}
			+ \lenof{\relarr{R}}^{1+\varepsilon}
			+ \lenof{R \bowtie S}^{1+\varepsilon}
		)\)
		 if $\relarr{S}$ is fully $\attsetX$-ordered.
	\label{bound:join-ordered}
\end{enumerate}
In both cases, the size of the output array is bounded by $(1+\lambda)\lenof{R \bowtie S}$.
\end{prop}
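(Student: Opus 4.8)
The plan is to implement the Yannakakis-style strategy sketched in the text preceding the proposition. First I would compute a $\lambda'$-compact array $\relarr{A'}$ representing the semijoin $\sjoin{R}{S}$ and an $\attsetX$-ordered array $\relarr{B'}$ representing $\sjoin{S}{R}$, where $\attsetX$ is the list of common attributes. In the dictionary setting (case~(a)) the semijoins come from \autoref{alg:semijoin}\ref{bound:semijoin-dictionary} with work $\bigO(\lenof{\relarr{R}}+\lenof{\relarr{S}})$ and space $\bigO((\lenof{\relarr{R}}+\lenof{\relarr{S}})\cdot\lenof{\db}^\varepsilon)$; in the ordered setting (case~(b)), where $\relarr{S}$ is fully $\attsetX$-ordered, I use \autoref{alg:semijoin}\ref{bound:semijoin-oredered-reverse} for $\sjoin{R}{S}$ and the dual variant for $\sjoin{S}{R}$, which are within the stated bounds. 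Sorting $\relarr{B'}$ by $\attsetX$ is done with \autoref{result:alg-sort} in the dictionary setting, or is essentially for free in the ordered setting since $\relarr{S}$ (hence the semijoin) is already $\attsetX$-ordered; approximate compaction throughout is via \autoref{result:alg-compact}.

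Next I would perform the crucial \emph{per-group compaction} of $\relarr{B'}$: within $\relarr{B'}$, tuples sharing an $\attsetX$-value form a contiguous block, and I compact each such block separately into a $\lambda'$-compact sub-array. This is the step the surrounding text flags as the obstacle, and I expect it to be the main difficulty here too — naive global compaction is not enough, as the $n^{0.4}$-frequent-value example shows. I would handle it by first identifying block boundaries (adjacent tuples with different $\attsetX$-projection), computing for each block its start index and length via approximate prefix sums (\autoref{result:prefix-sums}), allocating one compaction task per block through the task-scheduling machinery of \autoref{result:task-scheduling}, and running \autoref{result:alg-compact} on each block in parallel. The total work and space for this is $\bigO((\lenof{\relarr{S}}+\lenof{\db})^{1+\varepsilon})$ in case~(a) and $\bigO(\lenof{\relarr{S}}^{1+\varepsilon})$ in case~(b), since the sizes of the blocks sum to $\bigO(\lenof{\relarr{S}})$ and each block is compacted with a polynomially small work overhead. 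I also compact $\relarr{A'}$ globally to $\lambda'$-compact, paying $\bigO(\lenof{\relarr{R}}^{1+\varepsilon})$.

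Finally I would assemble the join. For each tuple $t$ in the compacted $\relarr{A'}$, I look up (via the links from $\sjoin{R}{S}$ back through $\proj{R}{\attsetX}$, resp. via $\opnameSearchTuples$ using that $\relarr{B'}$ is $\attsetX$-ordered) the block of $\relarr{B'}$ with matching $\attsetX$-value; since that block is now $\lambda'$-compact, its length $\ell_t$ is within a constant factor of the true number of $S$-partners, so $t$ contributes $\bigO(\ell_t)$ output tuples and the sum $\sum_t \ell_t$ is $\bigO(\lenof{R\Join S})$. I create one task per such $(t,\text{block})$ pair requiring $\ell_t$ processors, schedule them with \autoref{result:task-scheduling} into an output array of size $(1+\lambda)\lenof{R\Join S}$ after choosing the internal accuracy parameters $\lambda'$ small enough, and each processor writes one combined tuple $t\bowtie u$ together with the required links. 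The work for this phase is $\bigO(\lenof{R\Join S}^{1+\varepsilon})$ (the task-scheduling overhead), which, added to the semijoin, sorting, and per-group-compaction costs, yields the claimed bounds; choosing all of $\lambda',\varepsilon',\padeps$ suitably small in terms of the target $\lambda,\varepsilon$ closes the argument, and conciseness of the output follows because distinct $(t,u)$ pairs yield distinct join tuples once $R$ and $S$ are represented concisely.
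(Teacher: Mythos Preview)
Your plan is essentially the paper's own proof: compute and compact $\sjoin{R}{S}$, compute the $\attsetX$-ordered $\sjoin{S}{R}$, compact each $\attsetX$-group separately via task scheduling, then schedule one output task per $(t,\text{block})$ pair; the paper organises this into nine explicit steps but the substance is identical. One small slip: for case~(b) you invoke \autoref{alg:semijoin}\ref{bound:semijoin-oredered-reverse} for $\sjoin{R}{S}$, but that variant needs the \emph{first} argument ordered, whereas here only $\relarr{S}$ is ordered --- the paper uses variant~(b) for $\sjoin{R}{S}$ and variant~(c) for $\sjoin{S}{R}$, which is what you presumably intended.
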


\begin{proof}
  Let $R$ and $S$ denote the relations represented by \(\relarr{R}\) and \(\relarr{S}\), respectively.
  Let further $\delta = \min\{\frac{1}{3},\frac{\varepsilon}{3}\}$ and $\padeps' = \min\{\frac{1}{3},\frac{\padeps}{3}\}$.

  The algorithm for (a)  is by sorting \relarr{S} according to \autoref{result:alg-sort} and applying  the algorithm for (b) afterwards.

  The algorithm for (b) proceeds as follows.
  \begin{enumerate}[(1),ref={Step~(\arabic*)}]
  \item Compute array \subarr{A}{1} representing $\sjoin{R}{S}$ and $\padeps'$-compact it.\label{join-alg-step-A1}
  \item  Compute an $\attsetX$-sorted array  \subarr{B}{1} representing $\sjoin{S}{R}$.\label{join-alg-step-B1}
  \item Fully link \subarr{B}{1}  according to \autoref{result:predsucc}.\label{join-alg-step-B1link}
  \item Let \subarr{B}{2} be $\opProjection[\attsetX]{\subarr{B}{1}}$.\label{join-alg-step-B2}
  \item For each proper tuple $s$ of \subarr{B}{2}, compute the positions $i_1(s)$ and $i_2(s)$ of the first and last tuple of \subarr{B}{1} whose  $\attsetX$-projection is $s$, respectively.\label{join-alg-step-groupendpoints}
  \item For each proper tuple $s$ of \subarr{B}{2}, $\padeps'$-compact the subarray of \subarr{B}{1} from position $i_1(s)$ to $i_2(s)$. Let \subarr{B}{3} denote the result array.\label{join-alg-step-B3}
   \item For each proper tuple $s$ of \subarr{B}{2}, compute the positions $j_1(s)$ and $j_2(s)$ of the first and last tuple of \subarr{B}{3} whose  $\attsetX$-projection is $s$, respectively.\label{join-alg-step-j1-j2}
  \item For each proper tuple $t$ of  \subarr{A}{1}, find the unique corresponding tuple $s(t)$ in  \subarr{B}{2}.\label{join-alg-step-st}
\item For each proper tuple $t$ of \subarr{A}{1} and each proper tuple $s'$ between positions $j_1(s(t))$ and $j_2(s(t))$, construct an output tuple by combining $s'$ with $t$.\label{join-alg-step-output}
  \end{enumerate}

  \ref{join-alg-step-A1} can be done according to \autoref{alg:semijoin} (b) and \autoref{result:alg-compact}. \ref{join-alg-step-B1} uses the algorithm of \autoref{alg:semijoin} (c). \ref{join-alg-step-B1link} uses the algorithm of \autoref{result:predsucc} and \ref{join-alg-step-B2} the algorithm of \autoref{alg:project}. \ref{join-alg-step-groupendpoints} and \ref{join-alg-step-j1-j2} can be done by slightly adapted versions of an algorithm for \opnameSearchTuples using \autoref{result:alg-search-fullylinkedB}. \ref{join-alg-step-B3} uses \autoref{result:task-scheduling} for the allocation of processors, assigning $(i_2(s)-i_1(s)+1)^{1+\delta}$ processors to the group of each $s$, and \autoref{result:alg-compact} with parameter $\delta$ (instead of $\varepsilon$) for each group. Finally, \ref{join-alg-step-output} uses \autoref{result:task-scheduling} for the allocation of processors. For each tuple $t$ it assigns  $j_2(s(t))-j_1(s(t))+1$ processors. The position of a result tuple in the output array is just the number of the processor that writes it.

  For most steps the work and space bounds follow easily from the observations that all \arr{A}-arrays have size at most \lenof{\relarr{R}}, all \arr{B}-arrays have size at most \lenof{\relarr{S}}, and that $\lenof{\relarr{R}}\cdot \lenof{\relarr{S}}^\varepsilon$ is bounded by $\lenof{\relarr{R}}^{1+\varepsilon}$ or  $\lenof{\relarr{S}}^{1+\varepsilon}$.

  Towards the complexity analysis of \ref{join-alg-step-B3} and \ref{join-alg-step-output} we denote, for each tuple $s$  from \subarr{B}{2} the number $(i_2(s)-i_1(s)+1)^{1+\delta}$ of processors by $m_s$ and the number of proper tuples from \subarr{B}{1} matching $s$ by $n_s$. After \ref{join-alg-step-B3} the overall size of all groups in  \subarr{B}{3} is at most
  	\[
		(1+\padeps')\sum_{s\in \subarr{B}{2}} m_s
		= (1+\padeps')\sum_{s\in \subarr{B}{2}} (i_2(s)-i_1(s)+1)^{1+\delta}
		\le (1+\padeps')\left(\sum_{s\in \subarr{B}{2}} (i_2(s)-i_1(s)+1)\right)^{1+\delta}
              \]
              and the latter is in $\bigO(\lenof{\relarr{S}}^{1+\delta})$.
              Therefore, \autoref{result:task-scheduling} yields  a work and space bound $\lenof{\relarr{S}}^{(1+\delta)^2}$  for \ref{join-alg-step-B3}, which is $\lenof{\relarr{S}}^{1+\varepsilon}$ thanks to the choice of $\delta$. For step \ref{join-alg-step-output} we observe that, for each $s$ from \subarr{B}{2}, it holds $j_2(s)-j_1(s)+1\le (1+\padeps') n_s$. Therefore,
  	\[
		\sum_{t\in\sjoin{R}{S}}  (j_2(s(t))-j_1(s(t))+1) \le  \sum_{t\in\sjoin{R}{S}} (1+\padeps') n_{s(t)} =  (1+\padeps')\lenof{\join{R}{S}}.
        \]
	Thus,  \autoref{result:task-scheduling} establishes the work and space bound $\lenof{R \bowtie S}^{1+\varepsilon}$ for \ref{join-alg-step-output} and guarantees that the size of the output array is at most $(1+\padeps')^2 \lenof{\join{R}{S}}\le (1+\padeps) \lenof{\join{R}{S}}$.

\end{proof}

\section{Query Evaluation in the Dictionary Setting}\label{section:ctp:query-evaluation}

After studying algorithms for basic operations and operators of the
relational algebra, we are now prepared to investigate the complexity
of \ctpalgos for query evaluation.
In this section we will focus on the dictionary setting.
We will derive algorithms and bounds for other settings by a
transformation into the dictionary setting in
\autoref{section:ctp:othersettings}.

As mentioned in the introduction lower bound results for the size of
bounded-depth circuits for the clique problem and the clique
conjecture destroy the hope for parallel constant-time
evaluation algorithms for conjunctive queries with small work.

We therefore concentrate in this section on restricted query
evaluation settings. We study two restrictions of query languages
which allow efficient sequential algorithms, the semijoin algebra and
(free-connex)  acylic conjunctive queries. Furthermore, we present
a $\bigO(1)$-time parallel version of worst-case optimal join
algorithms.

For notational simplicity, we  assume in this section that database
relations are represented concisely by \emph{compact} arrays without
any uninhabited cells. This allows us to express bounds in terms of
the size of the database as opposed to the sizes of its
arrays.\footnote{Assuming that the input arrays are $\lambda$-compact,
for some constant $\lambda$, would suffice for the stated purpose.}

In the following, \IN thus always denotes the maximum number of cells in
any array representing a  relation of the underlying database that is addressed by the given
query.
Note that \(\lenof{\db}\in\bigO(\IN)\) for any database \(\db\) over a
fixed schema.
Analogously, \(\OUT\) denotes the number of tuples in the query result.

The results of this section are summarised in the right-most column of
\autoref{table:overview-main-results}.

\begin{table}
	\caption{Overview of results on query evaluation. Here CQ is
          short for \enquote{conjunctive query} and ``a-ordered''
          refers to the attribute-wise ordered setting.}
	\begin{center}
\begingroup%
\newcommand{\smallcref}[1]{\footnotesize\autoref{#1}}
\begin{tabular}{l c c c}
	\toprule
	\multirow{2}{*}{Query class} & \multicolumn{3}{c}{Work bound in the \ldots}\\
				& general setting & a-ordered setting & dictionary setting\\
	\midrule
	\addlinespace
	\multirowcell{2}[0pt][l]{Semijoin\\ Algebra}	& \(\bigO(\IN^2)\)	& \(\bigO(\IN^{1+\varepsilon})\)	& \(\bigO(\IN)\)\\
						&
                                           \smallcref{result:acyclic-etc-general}
                                & \smallcref{result:queries-a-ordered}
                                                  (a)& \smallcref{result:eval-semijoin-algebra}\\
	\addlinespace
	\midrule
	\addlinespace
	\multirowcell{2}[0pt][l]{Acyclic CQs}			& \(\bigO((\IN\cdot\OUT)^{1+\varepsilon} + \IN^2)\) & \multicolumn{2}{c}{\(\bigO((\IN\cdot\OUT)^{1+\varepsilon})\)}\\
						&
                                           \smallcref{result:acyclic-etc-general}
                                &
                                  \smallcref{result:queries-a-ordered}
                                                  (b) & \smallcref{result:acyclic} (a)\\
	\addlinespace
	\multirowcell{2}[0pt][l]{Free-Connex\\ Acyclic CQs} & \(\bigO((\IN+\OUT)^{1+\varepsilon} + \IN^2)\) & \multicolumn{2}{c}{\(\bigO((\IN+\OUT)^{1+\varepsilon})\)}\\
						&
                                           \smallcref{result:acyclic-etc-general}
                                &
                                  \smallcref{result:queries-a-ordered}
                                                  (b) & \smallcref{result:acyclic} (b)\\
	\addlinespace
	\multirowcell{2}[0pt][l]{CQs}	& \(\bigO((\IN^{\ghw(q)}\cdot\OUT)^{1+\varepsilon} + \IN^2)\) & \multicolumn{2}{c}{\(\bigO((\IN^{\ghw(q)}\cdot\OUT)^{1+\varepsilon})\)}\\
						&
                                           \smallcref{result:acyclic-etc-general}
                                &
                                  \smallcref{result:queries-a-ordered}
                                                  (b) & \smallcref{result:eval-ghw} (a)\\
	\addlinespace
	\multirowcell{2}[0pt][l]{CQs} & \(\bigO((\IN^{\fghw(q)} + \OUT)^{1+\varepsilon} + \IN^2)\) & \multicolumn{2}{c}{\(\bigO((\IN^{\fghw(q)}+\OUT)^{1+\varepsilon})\)}\\
						&
                                           \smallcref{result:acyclic-etc-general}
                                &
                                  \smallcref{result:queries-a-ordered}
                                                  (b) & \smallcref{result:eval-ghw} (b)\\
	\addlinespace
	\midrule
	\addlinespace
	\multirowcell{2}[0pt][l]{Natural Join\\ Queries} & \(\bigO((\agmbound + |D|)^{1+\varepsilon} + |D|^2)\) & \multicolumn{2}{c}{\(\bigO((\agmbound + |D|)^{1+\varepsilon})\)}\\
						& \smallcref{result:acyclic-etc-general} & \smallcref{result:queries-a-ordered}
                                                  (b) & \smallcref{result:worst-case-eval-dict}\\
    \addlinespace
	\bottomrule
\end{tabular}
\endgroup
 	\end{center}
	\label{table:overview-main-results}
\end{table}

\subsection{Query Plans}\label{section:query-plans}
To investigate the complexity of \ctpalgos for query evaluation we will utilize \emph{query plans}, which we briefly recall here from the literature.
Furthermore, we present a first result which allows us to, given a query plan, derive work and space bounds for query evaluation from bounds for intermediate results.

A \emph{query plan} \(\qplan = (V, E, \omega)\) over a database schema \(\schema\) is a finite, directed, ordered, binary tree\footnote{Sometimes query plans are defined more generously as acyclic graphs. However, regarding the size of intermediate results, there is no essential difference between such a query plan and its tree-unravelling.} with a node labelling function \(\omega\) that labels
\begin{itemize}
	\item every leaf with a relation symbol from \(\schema\);
	\item every node with a single child with either a select or a project operator; and
	\item every node with two children with either a semijoin, join, difference, or union operator.
\end{itemize}
We note that in case of a select \(\sigma_{\attA = x}\) or a project operator \(\pi_{\attsetX}\), \(\attA\) and \(x\), or \(\attsetX\), respectively, are part of the label.
For convenience, we also  do not consider nodes with a renaming operator, since attributes can always be renamed in constant time on a single processor.

Every query plan represents a query of the relational algebra\footnote{A query plan can also be understood as a parse tree of a relational algebra query.} but there might be multiple query plans for a query.
In fact, we will often describe query plans which do not directly correspond to a given query expression.
Moreover, we will often describe query plans as a sequence of operations, if it is clear that they can be arranged in a tree.

Given a database \(\db\) and a query plan \(\qplan\) over a common schema, we inductively associate each node \(v\) of \(\qplan\) with a relation \(R_{\db,v}\) as follows.
For a leaf \(v\) labelled with a relation symbol \(R = \omega(v)\), we set \(R_{\db,v} = D(R)\).
For nodes \(v\) labelled with a unary operator \(\tau = \omega(v)\), we set \(R_{\db,v} = \tau(R_{\db,w})\) where \(R_{\db,v}\) is the relation associated with the child \(w\) of \(v\).
And for nodes labelled with a binary operator \(\circ = \omega(v)\) , we set \(R_{\db,v} = R_{\db,w} \circ R_{\db,u}\) where \(R_{\db,w}\) and \(R_{\db,u}\) are the relations associated with the left child \(w\) and the right child \(u\) of \(v\), respectively.
The \emph{query result} \(\queryresult{\qplan}{\db}\) of a query plan \(\qplan\) with root node \(v\) on a database \(\db\) is \(R_{\db,v}\).
A \emph{query plan \(\qplan\) for a query \(\query\)} satisfies \(\qplan(\db) = \queryresult{\query}{\db}\) for all databases \(\db\).

The inductive definition of the \(R_{\db,v}\) gives rise to a straightforward evaluation scheme: Simply compute \(R_{\db,v}\) in a bottom-up fashion.
The following elementary result provides an upper bound for the work and space required by a \ctpalgo implementing this scheme, given an upper bound  for the size of the intermediate relations. %

\begin{thm}\label{result:eval-query-plan}
  For every query plan \(\qplan = (V, E, \omega)\) for a query $\query$ and every $\varepsilon>0$, $\lambda>0$ there is a \ctpalgo  on an Arbitrary CRCW PRAM and a constant $c$, such that the following holds.
 For every database $\db$, given in a $\lambda$-compact representation in the dictionary setting,  the algorithm evaluates
	$\queryresult{\query}{\db}$ with work and space \(cN^{1+\varepsilon}\), where \(N = \max(\lenof{\db},\max_v \lenof{R_{\db,v}})\).
	The output array is of size \(\bigO(\lenof{\query(\db)})\).

	If no node of \(\qplan\) is labelled with the join operator,  work \(cN\) suffices.
	The output array is then of size \(\bigO(\lenof{\db})\).
\end{thm}

\begin{proof}
	The \ctpalgo computes $\lambda$-compact arrays \(\arr{R}_{\db,v}\) for the relations \(R_{\db,v}\) in a bottom-up fashion.
	It maintains the invariant \(\lenof{\arr{R}_{\db,v}} \le (1+\lambda) N\), for all \(v\in V\). We show that for each node $v$, there is a constant $c_v$, such that the evaluation of \(\arr{R}_{\db,v}\) needs at most work and space \(c_vN^{1+\varepsilon}\). Then, $c$ can be chosen as the sum of all $c_v$.

	For a leaf node \(v\) the claim follows immediately, because the $\lambda$-compact array \(\arr{R}_{\db,v}\) for \(R_{\db,v}\) is given as input and $N$ bounds its size.
	Consider a node \(v\) which has a single child node \(w\).
	Then \(v\) is labelled with a select or project operator, and the algorithms from \autoref{alg:selection} and \autoref{alg:project} (a) guarantee that \(\arr{R}_{\db,v}\) can be computed with work \(\bigO(\lenof{\arr{R}_{\db,w}})\) and space \(\bigO(\lenof{\arr{R}_{\db,w}}\cdot \lenof{\db}^\varepsilon)\).
	Thanks to the invariant, and $N$ being a bound on $\lenof{\arr{R}_{\db,v}} $ and $\lenof{\arr{R}_{\db,w}} $, this amounts to work \(\bigO(N)\) and space \(\bigO(N^{1+\varepsilon})\).
	The invariant  holds since  \(\lenof{\arr{R}_{\db,v}}\le\lenof{\arr{R}_{\db,w}}\). The constant $c_v$ is obtained by adding the constant factor for the application of the select or project operator to $c_w$.

	Consider now a node \(v\) with two child nodes \(w\) and \(u\).
	If \(v\) is labelled with a set difference, union, or semijoin operator, the claim follows analogously to the case for the select and project operators by replacing the term \(\lenof{\arr{R}_{\db,w}}\) with \(\lenof{\arr{R}_{\db,w}} + \lenof{\arr{R}_{\db,u}}\) and using the corresponding \enquote{(a)-algorithms} from \autoref{section:ctp:alg-for-db-ops} (see \autoref{table:complexity-bounds-semijoin-dictionary} for an overview). 

	It remains to consider the case that \(v\) is labelled with the join operator.
	Thanks to \autoref{alg:join}\ref{bound:join-dictionary} the join \(R_{\db,v} = \join{R_{\db,w}}{R_{\db,u}}\) can be computed with work and space  \[
		\bigO\left(
			\left(\lenof{\arr{R}_{\db,u}} + \lenof{\db}\right)^{1+\varepsilon} + \lenof{\arr{R}_{\db,w}}^{1+\varepsilon} + \lenof{\join{R_{\db,w}}{R_{\db,u}}}^{1+\varepsilon}
		\right).
	\]
	Thanks to the invariant and $N$  bounding $\lenof{\arr{R}_{\db,w}} $ and $\lenof{\arr{R}_{\db,u}} $, the claim regarding work and space follows.
	Furthermore, \autoref{alg:join} guarantees an output size of \[(1+\lambda)\lenof{\join{R_{\db,w}}{R_{\db,u}}} =  (1+\lambda)\lenof{R_{\db,v}},\] and hence the invariant holds, since $N$ also bounds  \(\lenof{\arr{R}_{\db,v}}\).
        In all binary cases, the constant $c_v$ is obtained by adding the constant factor for the application of the respective operator to $c_w+c_u$.

	Overall, the query plan can be evaluated with the stated work and space bounds.
	Since all steps, except for the computation of a join, can be done with linear work, we get an improved work bound of  \(cN\), if no node of the join plan is labelled with the join operator. However, this work bound does not allow to apply compaction and therefore in this case  we only get the weaker bound \(\bigO(\lenof{\db})\) for the output size. 
\end{proof}

\subsection{Semijoin Algebra}\label{section:ctp:semijoin-algebra}
The semijoin algebra is the fragment of the relational algebra that
does \emph{not} use the join operation (nor Cartesian product), but
can use the semijoin operation instead --- besides
 selection, projection, rename, union, and set difference. It is well-known that semijoin queries produce only
query results of size $\bigO(\lenof{\db})$ \cite[Corollary~16]{DBLP:journals/jcss/LeindersB07} and can be evaluated in time
$\bigO(\lenof{\db})$ \cite[Theorem 19]{DBLP:journals/jolli/LeindersMTB05}.

Since semijoin algebra queries do not use the join operator, the following is a direct consequence of \autoref{result:eval-query-plan} and the fact that the size of all intermediate results can be bounded by \(\bigO(\IN)\), since they are themselves results of semijoin (sub-)queries.
\begin{thm}\label{result:eval-semijoin-algebra}
  For every $\varepsilon>0$, $\lambda>0$ and each query $\query$ of the semijoin algebra there is a \ctpalgo that,
  given a $\lambda$-compact representation of a database $\db$, evaluates
	$\queryresult{\query}{\db}$ on an Arbitrary CRCW PRAM with  \wsbounds{\IN}{\IN^{1+\varepsilon}} in the
        dictionary setting.
  The output array is of size \(\bigO(\IN)\).
\end{thm}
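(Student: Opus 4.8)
The plan is to evaluate $\query$ compositionally along its expression tree, treating the operator algorithms of \autoref{section:ctp:alg-for-db-ops} as black boxes. Since $\query$ is fixed, its parse tree has a constant number of nodes, each labelled by a selection, projection, rename, union, or set difference (no join, by definition of the semijoin algebra). I would process the tree bottom-up: the leaves are the given arrays for the database relations, and at an internal node the algorithm runs the dictionary-setting routine for the corresponding operator on the arrays already computed for the node's children --- \autoref{alg:selection} for selections, \autoref{alg:project}(a) for projections, \autoref{alg:semijoin}(a) for semijoins, \autoref{alg:difference}(a) for set differences, \autoref{alg:union}(a) for unions, and no PRAM computation at all for rename. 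Each of these routines returns an array representing the respective subquery result \emph{concisely}, which is exactly the precondition required by the operator one level up; the $\lambda$-compactness of the input database arrays plays no role here, since the dictionary-setting variants do not require compact inputs. All of these run on an Arbitrary CRCW PRAM (or weaker), so the composed algorithm does too.

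The size and work bounds follow by structural induction on the tree: the array $\mathbf{A}_\tnode$ computed at a node $\tnode$ has size $\bigO(\IN)$. For a leaf this holds because a $\lambda$-compact database array has size $\bigO(\lenof{\db}) = \bigO(\IN)$. For the inductive step, \autoref{alg:selection}, \autoref{alg:project}, \autoref{alg:semijoin}, and \autoref{alg:difference} each output an array no larger than an input array, and \autoref{alg:union} outputs an array whose size is the sum of its two inputs' sizes; as there are only constantly many nodes, the accumulated constant stays bounded and every $\mathbf{A}_\tnode$ has size $c\cdot\IN$ for a constant $c$ depending only on $\query$. Hence each of the constantly many operator invocations receives inputs of size $\bigO(\IN)$, and by the cited propositions its work is linear in those input sizes, so the total work is $\bigO(\IN)$.

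For space, recall that $\lenof{\db}\in\bigO(\IN)$, so the space bound $\bigO((\text{input size})\cdot\lenof{\db}^{\varepsilon})$ of the projection, semijoin, difference, and union algorithms becomes $\bigO(\IN^{1+\varepsilon})$ once each sub-algorithm is instantiated with the parameter $\varepsilon$ from the statement (and any internal compaction constant chosen small enough). Summing the $\bigO(\IN^{1+\varepsilon})$ contributions of the constantly many invocations --- or simply reusing scratch space between them --- keeps the total space at $\bigO(\IN^{1+\varepsilon})$. The output array is the array at the root of the tree, of size $\bigO(\IN)$ by the induction; should a $\lambda$-compact output be desired, one extra application of $\opnameCompact$ (\autoref{result:alg-compact}) at the end achieves it within the same bounds, but the statement does not ask for this.

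I do not anticipate a genuine obstacle: all the substance lives in the per-operator algorithms of \autoref{section:ctp:alg-for-db-ops}, and the only point requiring care is the bookkeeping that array sizes stay $\bigO(\IN)$ through nested unions, with the hidden constant depending on $\query$, $\varepsilon$, and $\lambda$ but not on $\db$. It is worth remarking that, in contrast to the sequential $\bigO(\lenof{\db})$ bound of \cite{DBLP:journals/jolli/LeindersMTB05}, this argument does not even invoke the semantic fact that every intermediate semijoin-algebra relation has $\bigO(\lenof{\db})$ tuples: the size bookkeeping rests purely on the syntactic shape of $\query$ together with the output-size guarantees of the operator algorithms.
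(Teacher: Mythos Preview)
Your proposal is correct and follows exactly the approach the paper takes: apply the dictionary-setting ``(a)'' variants of the operator algorithms from \autoref{section:ctp:alg-for-db-ops} compositionally and observe that output-array sizes stay $\bigO(\IN)$ at every node; the paper's proof is in fact just a one-line reference to these algorithms and to \autoref{table:complexity-bounds-semijoin-dictionary}. One small slip: your initial enumeration of operator labels (``selection, projection, rename, union, or set difference'') omits the semijoin itself, though you then correctly invoke \autoref{alg:semijoin}(a) for it.
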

Altogether, semijoin queries can be evaluated work-optimally by a
\ctpalgo in the dictionary setting.

\subsection{Acyclic and other Conjunctive Queries}\label{section:ctp:acyclic-queries}

There are various fragments of the class of conjunctive queries, defined by restrictions on the structure of a query, that allow efficient query evaluation. For instance, there are the acyclic join queries and, more generally, acyclic queries, which can be evaluated efficiently by the well-known Yannakakis algorithm \cite{DBLP:conf/vldb/Yannakakis81}, and their restriction to free-connex queries that allows for efficient enumeration of query results \cite{DBLP:conf/csl/BaganDG07}. Furthermore, there are various generalisations of these fragments, e.g., based on the notion of generalised hypertree width. In this subsection, we study how queries of some of these fragments can be evaluated work-efficiently in parallel constant time. We first study acyclic queries and their variants and then generalise the results to queries of bounded generalised hypertree width. Before that we recall the definition of conjunctive queries and fix our notation.

Conjunctive queries are conjunctions of relation atoms. We write a
\emph{conjunctive query} (\emph{CQ} for short) \query as a rule of the
form $q\colon \atom \gets \atom_1,\dots,\atom_m$, where
$\atom, \atom_1,\dots,\atom_m$ are atoms and $m\ge 1$.
Here an atom has the form \(R(x_1,\ldots,x_k)\) where \(R\) is a relation symbol with arity \(k\) and \(x_1,\ldots,x_k\) are variables.
The atoms \(\atom_1,\ldots,\atom_m\) form the \emph{body} of \(\query\) and \(\atom\) is the head of \(\query\).
All relation symbols occurring in the body are from the database schema \(\schema\) and the relation symbol of the head is, on the other hand, \emph{not}.
Further, we only consider \emph{safe} queries; that is, every variable which occurs in the head, also occurs in (some atom of) the body.
If a variable occurs in the head atom, it is a \emph{free} variable.
Otherwise, the variable is \emph{quantified}.
A \emph{join query} is a conjunctive query with no quantified variables, i.e.\ every variable in a join query is free.
For more background on (acyclic) conjunctive queries, including their semantics, we refer to \cite{DBLP:books/aw/AbiteboulHV95,ABLMP21}.

In this section we will assume that there are no variable repetitions in any atom, i.e.\ the variables \(x_1,\ldots,x_k\) of an atom \(R(x_1,\ldots,x_k)\) are pairwise distinct. We note that we do not allow constants in our definition of conjunctive queries, either.
Both requirements can easily be established by proper precomputations using only  \(\opnameSelection[\attsetX]\) which only requires linear work and space.
Furthermore, we assume that no relation symbol appears more than once in the body of the query. This is just a matter of renaming (or copying) some input relations.

\paragraph*{Acyclic conjunctive queries.}

A conjunctive query $\query$ is \emph{acyclic}, if it has a join tree $\querytree{\query}$.
A \emph{join tree} for \(\query\) is an undirected, rooted tree $(V(\querytree{\query}),E(\querytree{\query}))$ where
the set of nodes $V(\querytree{\query})$ consists of (all) the atoms in $\query$ and for each variable $x$ in
$\querytree{\query}$ the set $\{v \in V(\querytree{\query}) \mid v \text{
  contains }x\}$ induces a connected subtree of $\querytree{\query}$.

An acyclic query $q$ with free variables $x_1,\ldots,x_\ell$, for which the query~$q'$ that results from~$q$ by adding a new atom $A'(x_1,\ldots,x_\ell)$ to its body, remains acyclic, is called \emph{free-connex acyclic}.\footnote{There are several equivalent definitions of free-connex acyclic queries. We chose this definition from \cite[Section 3.2]{HAL:braultbaron:tel-01081392}, where such queries are called \emph{starred}, since it is easy to state.}
Acyclic join queries are a special case of free-connex acyclic queries, in which  all variables of the body of the query are free.

Our results on acyclic conjunctive queries are obtained by an easy application of \autoref{result:eval-query-plan} to the already mentioned Yannakakis algorithm
\cite{DBLP:conf/vldb/Yannakakis81}. This algorithm consists of two parts. Given an acyclic conjunctive query $\query$, a join tree
$\querytree{\query}$ for \(\query\) and a database $\db$, it computes in the first part a \emph{full reduction} $\db'$ of $\db$ as in \cite[Section 3]{BernsteinG81}.
To this end, we associate with each node $v$ in
$\querytree{\query}$ a relation $S_v$. Initially,
$S_v = R_v(\db)$, where $R_v$ is the relation symbol of $v$.
The computation of the full reduction has two phases.

\begin{enumerate}[(1)]
	\item \textbf{bottom-up semijoin reduction:} All nodes
          are visited in bottom-up traversal order of $\querytree{\query}$. When a node
          $v$ is visited, $S_v$ is updated to $\sjoin{S_v}{S_w}$ for
          every child $w$ of $v$ in $\querytree{\query}$.
	\item \textbf{top-down semijoin reduction:} All nodes are
          visited in  top-down traversal order of $\querytree{\query}$. When a node $v$ is visited, the relation $S_w$ is updated to $\sjoin{S_w}{S_v}$ for every child $w$ of $v$ in $\querytree{\query}$.
\end{enumerate}
Let $\db'$ denote the resulting database, i.e., the database that has, for each node $v$ the relation $S_v$ in place of $R_v$ (and is identical with \db, otherwise).
\begin{propC}[{\cite[Theorem 1]{BernsteinG81}}]\label{result:fullreduction}
  For every acyclic conjunctive query \(\query\), database \(\db\), and full reduction \(\db'\) of \(\db\), it holds that
  \begin{enumerate}[(a)]
  \item $\query(\db')=\query(\db)$; and
  \item $\pi_{\attrof{\query}\cap\attrof{R}}(\db'(R))=\pi_{\attrof{R}}(\query(\db))$, for every relation symbol $R$ of \db.
  \end{enumerate}
\end{propC}

In the second part, the algorithm computes the join of the reduced relations by a succession of binary joins in a bottom-up manner. Since each tuple of an intermediate result relation for a node $v$ can be partitioned into a
subtuple of a tuple from $R_v$ and a subtuple of the output, the following bound holds.
\begin{lemC}[{\cite[Lemma 4.1]{DBLP:conf/vldb/Yannakakis81}}]\label{lem:yannakakis-intermediate}
  All intermediate result relation in the second part of the Yannakakis algorithm are of size at most $\IN\cdot\OUT$.
\end{lemC}

Now we are ready to state and prove our results about acyclic queries.
\begin{thm}\label{result:acyclic}
  For every $\varepsilon>0$, $\padeps>0$, and each acyclic conjunctive query \query, there are \ctpalgos that  given a $\padeps$-compact representation of a database \(\db\) in the dictionary setting, compute $\queryresult{\query}{\db}$ on an Arbitrary CRCW PRAM
  \begin{enumerate}[(a)]
  \item with work and space $\bigO((\IN\cdot\OUT)^{1+\varepsilon})$;
  \item with work and space $\bigO((\IN+\OUT)^{1+\varepsilon})$, if \query is a free-connex acyclic query.
  \end{enumerate}
  In both cases, the result can be represented by an array of size  \(\bigO(\OUT)\). 
\end{thm}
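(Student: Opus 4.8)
The plan is to realise Yannakakis' algorithm \cite{DBLP:conf/vldb/Yannakakis81} on top of the constant-time parallel operations of \autoref{section:ctp:alg-for-db-ops}. We may assume that every atom is simple and that no relation symbol occurs twice in the body of \(\query\); this can be arranged by a straightforward linear-work preprocessing. Since \(\query\) is fixed, a join tree \(\querytree{\query}\) is fixed as well and can be built into the algorithm; we root it arbitrarily. No step of the algorithm ever produces a data value that does not already occur in \(\db\), so the dictionary-setting assumption is preserved throughout and \(\lenof{\db}\in\bigO(\IN)\) keeps bounding the ``dictionary factor'' \(\lenof{\db}^{\varepsilon}\) in the space bounds of the operations of \autoref{section:ctp:alg-for-db-ops}. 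Finally, if some relation ever becomes empty, then \(\OUT=0\) and the algorithm outputs an empty array; so we may assume \(\OUT\ge 1\).

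The algorithm first computes a full reduction \(\db'\) of \(\db\) by the bottom-up and top-down semijoin passes over \(\querytree{\query}\). Each pass visits the (constantly many) nodes of \(\querytree{\query}\) and performs one \opnameSemiJoin per tree edge; the relations \(S_v\) only ever shrink, so every such semijoin acts on arrays of size \(\bigO(\IN)\) and, by \autoref{alg:semijoin}\,(a), costs \(\bigO(\IN^{1+\varepsilon})\) work and space. Hence the full reduction costs \(\bigO(\IN^{1+\varepsilon})\) in total, and by \autoref{result:fullreduction} the database \(\db'\) satisfies \(\query(\db')=\query(\db)\) and contains no dangling tuples.

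For part~(a) the algorithm then runs the join phase bottom-up along \(\querytree{\query}\), associating a relation \(M_v\) with each node: for a leaf, \(M_v=\db'(R_v)\); for an internal node \(v\) with children \(w_1,\dots,w_d\), it computes \(S'_v\Join M_{w_1}\Join\dots\Join M_{w_d}\) as a left-deep sequence of binary \opnameJoin operations, using \autoref{alg:join}\,(a), and applies \opnameProjection after each binary join, using \autoref{alg:project}\,(a), keeping only the attributes of \(v\) and the free variables already collected from the children processed so far. With this projection discipline, the classical analysis of Yannakakis' algorithm on a dangling-tuple-free instance (exploiting \autoref{result:fullreduction}) shows that every \(M_v\) and every intermediate binary-join result has \(\bigO(\IN\cdot\OUT)\) tuples. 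As \(\querytree{\query}\) is fixed, only \(\bigO(1)\) of these \opnameJoin and \opnameProjection operations are performed, each on arrays of size \(\bigO(\IN\cdot\OUT)\); since \(\lenof{\db}\in\bigO(\IN)\subseteq\bigO(\IN\cdot\OUT)\) and \(\OUT\ge 1\), each of them costs \(\bigO((\IN\cdot\OUT)^{1+\varepsilon})\) work and space by \autoref{alg:join}\,(a) and \autoref{alg:project}\,(a). Projecting the relation at the root onto the head variables (and a trivial renaming) yields a concise array for \(\query(\db)\) of size \(\bigO(\IN\cdot\OUT)\); a single \opnameCompact step, via \autoref{result:alg-compact}, turns it into an array of size \(\bigO(\OUT)\). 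Summing the \(\bigO(1)\) contributions gives the claimed \(\bigO((\IN\cdot\OUT)^{1+\varepsilon})\) bound on work and space.

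Part~(b) follows the same recipe but replaces the size bound \(\bigO(\IN\cdot\OUT)\) by \(\bigO(\IN+\OUT)\), using the free-connex structure of \(\query\): as in the linear-time sequential evaluation of free-connex acyclic queries \cite{DBLP:conf/csl/BaganDG07}, one can choose the join tree (equivalently, the connex subtree realising the free variables) so that, after the full reduction, the atoms containing quantified variables are fully accounted for by the semijoin passes, and \(\query(\db)\) is obtained by \(\bigO(1)\) \opnameJoin and \opnameProjection operations whose intermediate relations assemble only free variables; each such relation is therefore essentially a projection of \(\query(\db)\) and has \(\bigO(\IN+\OUT)\) tuples. Applying \autoref{alg:join}\,(a) and \autoref{alg:project}\,(a) to each step costs \(\bigO((\IN+\OUT)^{1+\varepsilon})\), the full reduction still costs \(\bigO(\IN^{1+\varepsilon})\subseteq\bigO((\IN+\OUT)^{1+\varepsilon})\), and a final \opnameCompact produces an output array of size \(\bigO(\OUT)\). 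Overall, the conceptual content is modest --- the constant-time parallel operations of \autoref{section:ctp:alg-for-db-ops} do the heavy lifting and, \(\query\) being fixed, only \(\bigO(1)\) rounds of them occur; what needs care is (i)~the invariant that every materialised intermediate relation has \(\bigO(\IN\cdot\OUT)\) (respectively \(\bigO(\IN+\OUT)\)) tuples, which rests on \autoref{result:fullreduction} and, for~(b), on the free-connex structure, and which is the only genuinely delicate point, although it is classical, and (ii)~the routine bookkeeping that translates query variables to relation attributes and fixes at which node each quantified variable is projected away --- a finite, fixed amount of work since \(\query\) is fixed.
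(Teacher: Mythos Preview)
Your proposal is correct and follows essentially the same approach as the paper: both compute the full reduction via \(\bigO(1)\) semijoins along a fixed join tree (using \autoref{alg:semijoin}), then for~(a) carry out Yannakakis' bottom-up join phase with projection after each binary join (using \autoref{alg:join} and \autoref{alg:project}) and rely on the classical \(\bigO(\IN\cdot\OUT)\) bound on all intermediate relations, and for~(b) exploit free-connexity to ensure all join intermediates are projections of the output and hence bounded by \(\OUT\). The only notable difference is presentational: for~(b) the paper makes the construction concrete by taking a join tree \(\querytree{\query'}\) for the extended query \(\query'\) (with the head atom \(R'(x_1,\ldots,x_\ell)\) added) and computing the join of the \(\attsetX\)-projections of the neighbours of the head-atom node, whereas you appeal more abstractly to ``the connex subtree realising the free variables'' and cite \cite{DBLP:conf/csl/BaganDG07}; both are equivalent characterisations of free-connexity and lead to the same algorithm and bounds.
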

\begin{proof}
  Let all relations $R$ occurring on \query be represented concisely by arrays \(\relarr{R}\).

  In both cases, the algorithm first computes  the full reduction as in \autoref{result:fullreduction}.
 Since the full reduction constitutes a query plan without joins, and all intermediate result relations are subrelations of relations of the database, this can be done with work $\bigO(\IN)$ and space $\bigO(\IN^{1+\varepsilon})$  by \autoref{result:eval-query-plan}.

 Afterwards every node \(v\) in \(\querytree{\query}\) is associated with an array \(\relarr{S_v}\) of size at most \(\bigO(\IN)\) which represents \(S_v\) concisely.

Statement (a) now follows by applying \autoref{result:eval-query-plan} to  the second part of Yannakakis' algorithm thanks to the bound on intermediate results established by \autoref{lem:yannakakis-intermediate}. 

                Towards (b), let $\querytree{\query'}$ be a join tree for the query $\query'$ with the additional atom $R'(x_1,\ldots,x_\ell)$, where $x_1,\ldots,x_\ell$ are the free variables of \query. The query result $\query(\db')=\query(\db)$ can be obtained as the join of all $\pi_X(A)$, where $A$ is an atom of $\query'$ whose node is a neighbour of $R'(x_1,\ldots,x_\ell)$ in   $\querytree{\query'}$, and $X$ is the set of free variables of \query that occur in $A$. This follows from \autoref{result:fullreduction} and the observation that each free variable of \query occurs in some neighbour atom of $R'(x_1,\ldots,x_\ell)$ in $\query'$ since it occurs in \emph{some} atom by the safety of \query and thus in a neighbour atom by the connectedness condition of the acyclic query $\query'$.

                This join can be computed by a sequence of binary joins, each of which producing a result relation that is a projection of $\query(\db')$ and is therefore of size at most $\OUT$. Thus, again by \autoref{result:eval-query-plan}, the work and space are bounded by $\bigO((\IN+\OUT)^{1+\varepsilon})$.

                To achieve the size of the output array in (a), a final compaction might be necessary. For (b), it is already guaranteed by \autoref{alg:join}.%
\end{proof}

\paragraph*{General conjunctive queries.}
We next investigate how the algorithms for acyclic queries can be adapted to all conjunctive queries. More precisely, we give algorithms with  upper bounds for the work and space for the evaluation of conjunctive queries depending on their generalised hypertree width.

We use the notation of  \cite{DBLP:conf/pods/GottlobGLS16}.
A \emph{tree decomposition} of a query $\query$ is a pair $(\tree,\bag)$ where $\tree = (V,E)$ is an undirected, rooted tree, and $\bag$ is a mapping that maps every node $\tnode \in V$ in $\tree$ to a subset of variables in $\query$ such that the following three conditions are satisfied.\footnote{We note that Condition (1) is redundant thanks to safety and Condition (2).}
\begin{enumerate}[(1)]
	\item For every variable $x$ in $\query$ there is a node $\tnode \in V$ such that $x \in \bag(\tnode)$;
	\item for each atom $R(x_1,\ldots,x_r)$ in $\query$ there is a node $\tnode \in V$ such that $\{x_1,\ldots,x_r\}\subseteq \bag(\tnode)$; and
	\label{def:hypertree-decomp:condition2}
	\item for each variable $x$ in $\query$ the set $\{\tnode \in V| x \in \bag(\tnode)\}$ induces a connected subtree of $\tree$.
\end{enumerate}

A \emph{generalised hypertree decomposition} of a query $\query$ is a triple $(\tree,\bag,\cover)$ where $(\tree,\bag)$ is a tree decomposition of $\query$ and $\cover$ is a mapping\footnote{In the literature, $\lambda$ is often used to denote this mapping, but $\lambda$ has a different role in this article.} which maps every node $v$ of $\tree$ to a set of atoms from the body of $\query$ such that $\bag(\tnode) \subseteq \bigcup_{R(x_1,\ldots,x_r)\in\cover(v)}\{x_1,\ldots,x_r\}$.

The \emph{width} of a generalised hypertree decomposition $(\tree,\bag,\cover)$ is the maximal number $\max_{\tnode \in V} \lenof{\cover(\tnode)}$ of atoms assigned to any node of \tree.
The \emph{generalised hypertree width} $\ghw(\query)$ of a conjunctive query $\query$ is the minimal width over all of its generalised hypertree decompositions.

We note that a conjunctive query is acyclic, if and only if its generalised hypertree width is $1$ \cite[Theorem~4.5]{DBLP:journals/jcss/GottlobLS02}.\footnote{We note that the proof given in \cite{DBLP:journals/jcss/GottlobLS02} is for hypertree decompositions which impose an additional condition which is, however, not used in the proof.}%

A tree decomposition $(\tree,\bag)$ of a conjunctive query $\query$ is \emph{free-connex} if there is a set of nodes $U \subseteq V(\tree)$ that induces a connected subtree in $\tree$ and satisfies $\freeof{\query} = \bigcup_{u \in U} \bag(u)$ \cite[Definition 36]{DBLP:conf/csl/BaganDG07}. A generalised hypertree decomposition is \emph{free-connex} if its tree decomposition is free-connex. The \emph{free-connex generalised hypertree width} $\fghw(\query)$ of a conjunctive query is  the minimal width among its free-connex generalised hypertree decompositions.

\begin{thm}\label{result:eval-ghw}
	For every $\varepsilon > 0$, $\padeps>0$, and every conjunctive query \(\query\), there are \ctpalgos that, given a $\padeps$-compact representation of a database \(\db\) in the dictionary setting, computes $\queryresult{\query}{\db}$ on an Arbitrary CRCW PRAM
        \begin{enumerate}[(a)]
        \item with work and space
          $\bigO((\IN^{\ghw(q)} \cdot \OUT)^{1+\varepsilon})$;
        \item with work and space
          $\bigO((\IN^{\fghw(q)} + \OUT)^{1+\varepsilon})$.
        \end{enumerate}
        As in \autoref{result:acyclic}, the result can be represented by an array of size  \(\bigO(\OUT)\). 
\end{thm}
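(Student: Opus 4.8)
The plan is to reduce both parts to \autoref{result:acyclic} by materialising the bags of an optimal (free-connex) generalised hypertree decomposition of $\query$ as new relations. Fix a generalised hypertree decomposition $(\tree,\bag,\cover)$ of $\query$ of minimal width $k=\ghw(q)$ for part~(a), respectively a free-connex one of minimal width $k=\fghw(q)$ for part~(b). Since this decomposition depends only on the (fixed) query, it can be hard-wired into the algorithm, so there is no cost for ``computing'' it. As is standard for generalised hypertree decompositions, we may and do assume that the decomposition covers every atom of $\query$ and is normalised so that, for each node $v$, $\bag(v)$ equals the set of variables occurring in the atoms of $\cover(v)$; in particular, for every atom $A$ of $\query$ there is a node $v$ with $A\in\cover(v)$ and $\mathrm{vars}(A)\subseteq\bag(v)$.

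First I would compute, for every node $v$ of $\tree$, a $\padeps$-compact array $\relarr{B_v}$ representing the relation $B_v := \proj{\mathop{\Join}_{A\in\cover(v)}\db(\mathrm{rel}(A))}{\bag(v)}$, where $\mathrm{rel}(A)$ denotes the relation symbol of $A$. As $\lenof{\cover(v)}\le k$ and every relation of $\db$ has size $\bigO(\IN)$, this relation is obtained from at most $k$ relations of size $\bigO(\IN)$ by at most $k-1$ binary joins and one projection, and every intermediate relation has size $\bigO(\IN^{k})$. Realising the binary joins with \autoref{alg:join}~(a), the projection with \autoref{alg:project}~(a), and a final compaction with \autoref{result:alg-compact}, and choosing the accuracy parameters of these subroutines suitably, this step runs in constant time with work and space $\bigO((\IN^{k})^{1+\varepsilon})$ on an Arbitrary CRCW PRAM. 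Since joins and projections introduce no new data values, the entries of the $\relarr{B_v}$ are still numbers of size $\bigO(\lenof{\db})$, so the collection of the $B_v$ constitutes a database in the dictionary setting.

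Next I would observe that $(\tree,\bag)$, read together with the fresh atoms $B_v(\bag(v))$, is a join tree of the conjunctive query $q^*\colon \mathrm{ans}(\freeof{\query})\gets\bigwedge_{v\in V(\tree)} B_v(\bag(v))$. Indeed, $v\mapsto\{B_v(\bag(v))\}$ is a width-$1$ generalised hypertree decomposition of $q^*$ with tree $\tree$, so $q^*$ has generalised hypertree width $1$ and is therefore acyclic; and for part~(b) the connected set $U$ witnessing free-connexity of $(\tree,\bag)$ also witnesses that $q^*$ is free-connex acyclic, since $\freeof{q^*}=\freeof{\query}=\bigcup_{u\in U}\bag(u)$ (cf.\ \cite{DBLP:conf/csl/BaganDG07}). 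Moreover, $q^*$ evaluated on the relations $B_v$ equals $\queryresult{\query}{\db}$ by the standard argument: every satisfying assignment of $\query$ restricts to one of $q^*$ because each $B_v$ is a projection of a sub-join of $\query$; conversely, the normalisation above guarantees that membership in all the $B_v$ forces membership in every original relation. In particular $q^*$ has the same result as $\query$, hence the same~$\OUT$.

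Finally I would run the algorithm of \autoref{result:acyclic} on $q^*$ with the $\padeps$-compact arrays $\relarr{B_v}$. The relevant input size for $q^*$ is $\max_v\lenof{\relarr{B_v}}=\bigO(\IN^{k})$, so part~(a) of \autoref{result:acyclic} yields work and space $\bigO((\IN^{\ghw(q)}\cdot\OUT)^{1+\varepsilon})$ and part~(b) yields $\bigO((\IN^{\fghw(q)}+\OUT)^{1+\varepsilon})$, in both cases with an output array of size $\bigO(\OUT)$; the preprocessing cost $\bigO((\IN^{k})^{1+\varepsilon})$ is absorbed into these bounds. I do not expect a deep obstacle: the bulk of the care goes into the reduction --- verifying that the materialised decomposition really is an equivalent acyclic (and, for part~(b), free-connex) query in the sense used by \autoref{result:acyclic}, reconciling the various $\varepsilon$-parameters, and keeping every intermediate relation within size $\bigO(\IN^{k})$ --- after which the PRAM-level guarantees are inherited from \autoref{alg:join}, \autoref{alg:project}, and \autoref{result:acyclic}.
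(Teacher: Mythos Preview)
Your approach is the paper's: reduce to \autoref{result:acyclic} by materialising, for each node $v$ of an optimal (free-connex) generalised hypertree decomposition, the relation $\pi_{\bag(v)}\big(\Join_{A\in\cover(v)} \db(A)\big)$ via \autoref{alg:join} and \autoref{alg:project}, and then invoke the acyclic (respectively free-connex acyclic) case on the resulting width-$1$ query.

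One point needs care, however. The normalisation ``$\bag(v)$ equals the set of variables occurring in the atoms of $\cover(v)$'' is not standard and is not in general achievable: the GHD condition only guarantees $\bag(v)\subseteq\bigcup_{A\in\cover(v)}\mathrm{vars}(A)$, and enlarging a bag to equality can disconnect the subtree induced by some variable, destroying condition~(3) of the tree decomposition. What your correctness argument for $q^*(\{B_v\})=\queryresult{\query}{\db}$ actually needs is the weaker property stated in your ``in particular'' clause, namely that for every atom $A$ of $\query$ there is a node $v$ with $A\in\cover(v)$ \emph{and} $\mathrm{vars}(A)\subseteq\bag(v)$. This is precisely the notion of a \emph{complete} decomposition from \cite[Lemma~4.4]{DBLP:journals/jcss/GottlobLS02} that the paper invokes, and it can be arranged without increasing the width (e.g.\ by attaching, for each atom $A$, a fresh leaf below some node whose bag already covers $\mathrm{vars}(A)$, with that leaf's bag equal to $\mathrm{vars}(A)$ and cover $\{A\}$). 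Replace your normalisation assumption by completeness in this sense and the rest of your argument --- the size bound $\lenof{B_v}\in\bigO(\IN^{k})$, acyclicity and free-connexity of $q^*$, and absorption of the preprocessing cost --- goes through unchanged.
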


\begin{proof}
  We basically show that the reduction to the evaluation of an acyclic query of \cite[Section~4.2]{DBLP:journals/jcss/GottlobLS02} transfers, in both cases, well to our parallel setting.

  We first show (a).
  Let thus $\varepsilon > 0$ and  a  conjunctive query \(\query\) be fixed and let
  $(\tree,\bag,\cover)$ be a generalised hypertree decomposition of \query of width $\ghw(q)$.  By \cite[Lemma~4.4]{DBLP:journals/jcss/GottlobLS02} we can assume that it is \emph{complete}, i.e.,  for every atom \(\atom\) in the body of \(\query\), there is a node \(\tnode\) such that \(\atom\in\cover(v)\).

The idea is to compute an acyclic query \(\query'\) from \query and a database  \(\db'\) from \db and \query such that $\query'(\db')=\query(\db)$.

	We first describe \(\db'\) and \(\query'\) and then show how \(\db'\) can be computed in constant time.

	For every $\tnode \in V(\tree)$, we define a new relation \(R_\tnode\) by means of the conjunctive query \(\query_\tnode\) with free variables \(\bag(\tnode)\) and body \(\cover(\tnode)\).
	The database $\db'$ consists of all these relations $R_\tnode$.
	That is, the schema of $\db'$ is \(\{R_\tnode \mid \tnode \in V(\tree)\}\) and \(\db'(R_\tnode) = \query_\tnode(\db)\) for all \(\tnode\in V(\tree)\).

	The query $\query'$ is then the conjunctive query with the same head as \(\query\) and its body consists of the head atoms of all queries \(\query_\tnode\).
	It is straightforward to verify that $\queryresult{\query}{\db} = \queryresult{\query'}{\db'}$ by inlining the definitions of the \(\query_\tnode\) and observing that every atom of \(\query\) occurs in one of the \(\query_\tnode\) thanks to the decomposition being complete.
	Moreover, \(\query'\) is acyclic because the generalised hypertree decomposition \((\tree,\bag,\cover')\) for \(\query'\) with \(\cover'(\tnode) = R_\tnode\) for every node \(\tnode\) has width \(1\).

	The relations \(R_\tnode\) can be computed using at most \(\lenof{\cover(\tnode)}\) many \(\opnameJoin\) operations, followed by a \(\opnameProjection[\freeof{\query_\tnode}]\) operation.
	The arrays representing the relations in \(\cover(\tnode)\) have size at most \(\bigO(\IN)\) and since \((\tree,\bag,\cover)\) has width \(\ghw(q)\), the relation \(R_\tnode\) and the intermediate results in its computation have size at most \(\bigO(\IN^{\ghw(q)})\).
	The computation can thus be carried out with work and space \(\bigO(\IN^{\ghw(q)(1+\varepsilon)})\) thanks to \autoref{result:eval-query-plan}.

        Statement (a) then follows by \autoref{result:acyclic} (a).

        Statement (b) follows by  \autoref{result:acyclic} (b) and the observation that $\query'$ is free-connex acyclic if it is applied to a free-connex generalised hypertree decomposition. This can easily be seen with the original definition of free-connex acyclicity in \cite[Definition 22]{DBLP:conf/csl/BaganDG07}, requiring that the free variables of the query induce a connected subtree of the join tree.
\end{proof}

\subsection{Weakly Worst-Case Optimal Work for Joins Queries}\label{section:ctp:worst-case-natural-joins}
This section is concerned with the evaluation of  \emph{join queries}\footnote{Following tradition, we denote join queries in this subsection slightly differently from the previous subsection.} $q = R_1 \Join \ldots \Join R_m$ over some schema $\schema = \{R_1,\ldots, R_m\}$ with attributes $\attrof{q} = \bigcup_{i=1}^m \attrof{R_i}$.
It was shown in \cite{DBLP:journals/siamcomp/AtseriasGM13} that $|\queryresult{\query}{\db}| \le \agmbound$ holds for every database \(\db\) and that this bound is tight for infinitely many databases \db (this is also known as the AGM bound).
Here $\agmbound$ denotes $\agmterm$, where $x_1,\ldots,x_m$ is a fractional edge cover of $\query$ defined as a solution of the following linear program.
\begin{multline*}\label{def:fractional-edge-cover-lp}
	\textnormal{minimise } \textnormal{$\sum_{i=1}^{m} x_i$} \textnormal{ subject to}	\sum_{i: \attA\in \attrof{R_i}} x_i \ge 1 \textnormal{ for all } \attA\in \attrof{q}\\ \textnormal{ and }	x_i \ge 0	\textnormal{ for all } 1\le i\le m
\end{multline*}

There exist sequential algorithms whose running time is bounded by the above worst-case size of the query result (and the size of the database), i.e., by $\bigO(\agmbound+|D|)$ \cite{NgoPRR18}. They  are called \emph{worst-case optimal}. However, these algorithms are not captured by query plans and therefore their parallel constant-time counterparts cannot be obtained by a direct application of \autoref{result:eval-query-plan}. 

We say that a join query $q$ has \emph{weakly worst-case optimal} \ctpalgos, if, for every $\varepsilon>0$, there is a \ctpalgo that evaluates $q$ with work $(\agmbound+|D|)^{1+\varepsilon}$. 
In this subsection, we show that join queries indeed have weakly worst-case optimal \ctpalgos, if the arrays representing the relations are suitably ordered. In the dictionary setting the latter requirement can actually be dropped, since the arrays can be appropriately sorted in a padded fashion and compacted, within the overall work and space bounds.

Before we formally state this result, we fix some notation. 	Let \(\attsetX = (\attA_1,\ldots,\attA_k)\) be an ordered list of all attributes occurring in \(\query\).
	 For each \(i\in\{1,\ldots,k\}\), $\attsetZ_i$ denotes the subsequence of $\attsetX$ consisting of all attributes of \(\attrof{R_i}\).

\begin{thm}\label{result:worst-case-eval-ordered}
	For every $\varepsilon > 0$, $\padeps>0$ and join query $\query = R_1 \Join \ldots \Join R_m$, there is a \ctpalgo that, given $\padeps$-compact arrays $\relarr{R}_1,\ldots,\relarr{R}_m$ representing the relations \(R_1,\ldots,R_m\) concisely, where each $\relarr{R}_i$ is $\attsetZ_i$-ordered, computes $\queryresult{\query}{\db}$ represented by a $\padeps$-compact array and requires $\bigO\big((\agmbound + \lenof{\db})^{1+\varepsilon}\big)$ work and space on an Arbitrary CRCW PRAM.
\end{thm}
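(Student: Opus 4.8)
The plan is to mimic a generic worst-case optimal sequential join algorithm — say, an NPRR/LeapFrog-TrieJoin style recursive algorithm that eliminates one attribute at a time — and to show that each of its steps can be carried out in constant parallel time with only an $n^\varepsilon$-overhead in work, using the building blocks already developed. Concretely, I would process the attributes $\attA_1,\ldots,\attA_k$ in this fixed order. At stage $i$ the algorithm maintains an array $\arr{P}_i$ representing the set of ``partial output tuples'' over $\attA_1,\ldots,\attA_{i-1}$, i.e.\ the projection $\pi_{(\attA_1,\dots,\attA_{i-1})}(\queryresult{\query}{\db})$ (or rather the set of prefixes that survive the covering bound), together with, for each such prefix $p$ and each relation $R_j$ that mentions $\attA_i$, a link to the block of $\relarr{R}_j$ consistent with $p$. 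To extend a prefix $p$ by a value for $\attA_i$, one must intersect the sets $\pi_{\{\attA_i\}}\big(\sigma_{p}(R_j)\big)$ over all $j$ with $\attA_i\in\attrof{R_j}$; the worst-case optimal trick is to iterate over the \emph{smallest} of these candidate sets and probe the others. Because each $\relarr{R}_j$ is $\attsetZ_j$-ordered, and $\attA_i$ appears in $\attsetZ_j$ in the position dictated by the global order $\attsetX$, the tuples of $R_j$ agreeing with any fixed prefix $p$ on $\attsetZ_j\cap\{\attA_1,\dots,\attA_{i-1}\}$ form a contiguous subarray, and within that subarray the $\attA_i$-values appear in sorted order. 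Hence the ``smallest candidate set'' and the membership probes can both be realised by the ordered-array search of \autoref{result:alg-search-fullylinkedB} (after fully linking the relation arrays once at the start via \autoref{result:predsucc}), and the candidate intersection for a single prefix reduces to searching, for each value in the smallest candidate block, whether it occurs in the corresponding block of every other relation.

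The parallel orchestration is the heart of the matter. I would allocate, to each surviving prefix $p$ in $\arr{P}_i$, a number of processors proportional to (a fixed power of) the size of its smallest candidate block, using the task-scheduling primitive \autoref{result:task-scheduling}; each such processor handles one candidate value, performs the membership searches with work $n^{\delta}$ (for a suitably small $\delta$ depending on $\varepsilon$), and — if the value survives — produces one extended prefix. The extended prefixes are written into a freshly allocated array, which is then $\padeps'$-compacted (\autoref{result:alg-compact}) and deduplicated (\autoref{result:alg-deduplicate-dictionary} / the ordered-array analogue \autoref{result:alg-deduplicate-ordered}) to become $\arr{P}_{i+1}$; in the dictionary setting the $\attsetZ_j$-orderings can be (re)established by padded sorting via \autoref{result:alg-sort} within the claimed bounds, which is exactly why the ordering hypothesis can be dropped there. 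After $k$ stages, $\arr{P}_{k+1}$ represents $\queryresult{\query}{\db}$; a final compaction delivers the $\padeps$-compact output array.

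The key estimate, and the place where the worst-case optimality really has to be invoked, is bounding the total number of processors (equivalently, work up to the $n^\varepsilon$ slack) summed over all prefixes at all $k$ stages. For this I would import the standard counting argument behind sequential worst-case optimal joins: the sum, over all partial prefixes $p$ arising during the computation, of the size of the smallest candidate set for the next attribute, is bounded by $\bigO(\agmbound)$ — this follows from the fractional-cover/AGM inequality applied stage by stage, essentially as in \cite{NgoPRR18}. Since $k$ and $m$ are constants, the aggregate over all stages is still $\bigO(\agmbound + \lenof{\db})$ (the $\lenof{\db}$ term absorbing the initial linking/sorting and the cost of stage~$1$). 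Raising each per-stage quantity to the power $1+\delta$ to pay for the $n^\delta$-work searches and the constant-time compactions — and choosing $\delta$ so that the compositions $(1+\delta)^{\bigO(1)}$ of all the sub-procedure overheads stay below $1+\varepsilon$, exactly as in the proof of \autoref{alg:join} — yields the promised work and space bound $\bigO\big((\agmbound+\lenof{\db})^{1+\varepsilon}\big)$. I expect the main obstacle to be precisely this bookkeeping: making sure the ``smallest candidate set'' is chosen correctly and cheaply in parallel (so that the AGM-style counting applies), and verifying that the nested $\varepsilon$-losses from task scheduling, searching, compaction, and deduplication compose into a single controllable exponent; the underlying join-size combinatorics is classical and can be cited rather than reproved.
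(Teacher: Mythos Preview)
Your proposal is correct and takes essentially the same approach as the paper's proof: attribute-by-attribute extension of surviving prefixes, where each prefix is joined with its smallest candidate block for the next attribute and then filtered via semijoins/searches against the remaining relations, with task scheduling for processor allocation and the AGM-style inequality (the paper cites \cite[pp.~228--229]{ABLMP21}, restated as \autoref{cited-result:aejoin-inequalities}) bounding the total work over all prefixes at each stage. Two minor remarks: the paper makes explicit that each candidate group within the ordered projection arrays must itself be $\padeps'$-compacted in parallel (via task scheduling over all groups) before its array size can legitimately stand in for $\lvert\sjoin{R_i}{\{t\}}\rvert$ in the AGM sum---this is exactly the ``bookkeeping'' you anticipate---and the deduplication step you mention is unnecessary, since distinct prefixes in $L_{j-1}$ combined with concise candidate blocks already yield distinct extended prefixes.
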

In the dictionary setting, we can drop the requirement that the arrays are suitably sorted, since they can be sorted with work $\bigO(|D|^{1+\varepsilon})$.
\begin{cor}\label{result:worst-case-eval-dict}
	For every $\varepsilon > 0$, $\padeps>0$ and join query $\query = R_1 \Join \ldots \Join R_m$, there is a \ctpalgo that, given $\padeps$-compact arrays $\relarr{R}_1,\ldots,\relarr{R}_m$ representing the relations \(R_1,\ldots,R_m\) concisely, in the dictionary setting, computes $\queryresult{\query}{\db}$ represented by a $\padeps$-compact array and requires $\bigO\big((\agmbound + \lenof{\db})^{1+\varepsilon}\big)$ work and space on an Arbitrary CRCW PRAM.
\end{cor}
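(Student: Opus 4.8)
The plan is to reduce directly to \autoref{result:worst-case-eval-ordered}: the only hypothesis that result imposes beyond what \autoref{result:worst-case-eval-dict} supplies is that each input array $\relarr{R}_i$ be $\attsetZ_i$-ordered, and in the dictionary setting we can establish this by sorting. So first, for every $i\in\range{m}$, I would compute from $\relarr{R}_i$ a new array $\relarr{R}_i'$ that represents $R_i$ concisely, is $\attsetZ_i$-ordered, and is $\padeps$-compact. Concretely: apply $\opCompact[\padeps']{\relarr{R}_i}$ (\autoref{result:alg-compact}) with a sufficiently small $\padeps'$ to obtain a nearly compact array, and then apply $\opSort[\padeps'']{\cdot,\attsetZ_i}$ (\autoref{result:alg-sort}) with a sufficiently small $\padeps''$; choosing $\padeps'$ and $\padeps''$ so that $(1+\padeps')(1+\padeps'')\le 1+\padeps$ ensures the result $\relarr{R}_i'$ is $\padeps$-compact. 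Conciseness is preserved because both operations merely relocate the proper tuples, and $\attsetZ_i$-orderedness is guaranteed by the sorting step.

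For the complexity of this preprocessing, note that $\lenof{\relarr{R}_i}\in\bigO(\lenof{R_i})\subseteq\bigO(\lenof{\db})$, since $\relarr{R}_i$ is $\padeps$-compact. Hence, by \autoref{result:alg-compact} and \autoref{result:alg-sort}, each of the two steps runs in constant time with work and space $\bigO((\lenof{\relarr{R}_i}+\lenof{\db})^{1+\varepsilon})=\bigO(\lenof{\db}^{1+\varepsilon})$. Since the number $m$ of relations is a fixed constant (the schema is fixed), the entire preprocessing is a \ctpalgo with work and space $\bigO(\lenof{\db}^{1+\varepsilon})\subseteq\bigO((\agmbound+\lenof{\db})^{1+\varepsilon})$.

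Finally I would run the algorithm of \autoref{result:worst-case-eval-ordered} on $\relarr{R}_1',\dots,\relarr{R}_m'$ with the same parameters $\varepsilon$ and $\padeps$; it produces a $\padeps$-compact array representing $\queryresult{\query}{\db}$ in constant time with work and space $\bigO((\agmbound+\lenof{\db})^{1+\varepsilon})$. Composing the two phases yields a \ctpalgo whose total work and space remain $\bigO((\agmbound+\lenof{\db})^{1+\varepsilon})$. There is no genuinely hard step here; the only points that require care are the bookkeeping of the padding constants (so that the sorted arrays are exactly $\padeps$-compact rather than merely $((1+\padeps)^2-1)$-compact) and the observation that the additional $\bigO(\lenof{\db}^{1+\varepsilon})$ cost of sorting is absorbed into the stated bound because $\lenof{\db}\le\agmbound+\lenof{\db}$.
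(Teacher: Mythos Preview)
Your proposal is correct and follows exactly the approach the paper indicates: sort each input array within the dictionary setting (cost $\bigO(\lenof{\db}^{1+\varepsilon})$ via \autoref{result:alg-sort}) and then invoke \autoref{result:worst-case-eval-ordered}. The paper's own justification is just that one sentence; your additional bookkeeping on the padding parameters to keep the sorted arrays genuinely $\padeps$-compact is a sensible refinement but not a different idea.
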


\begingroup%
\newcommand{\relationindiceswithatt}[2][i]{#1: \attA_{#2}\in \attrof{R_{#1}}}
\newcommand{\deltaval}[1][]{\frac{1}{8#1}\varepsilon}%
\newcommand{\infactoreps}{\frac{1}{4}\varepsilon}%
\newcommand{\minrelsizeatt}[2][R_i]{\mathnotation{\min_{\relationindiceswithatt{#2}}|#1|}}

A \ctpalgo can proceed, from a high-level perspective, similarly to the sequential attribute elimination join algorithm, see e.g.\ \cite[Algorithm~10]{ABLMP21}.
The core of this algorithm is captured by the following result.
\begin{propC}[{\cite[Algorithm~10 and Proposition~26.1]{ABLMP21}}]\label{cited-result:aejoin-correctness}
	For a join query \(\query=R_1\Join\ldots\Join R_m\) with attributes \(\attsetX = (\attA_1,\ldots,\attA_k)\) and each database \(\db\), consider the family \(\left(L_j\right)_{1\le j\le k}\) of sets \(L_j\), which are inductively defined as follows:
	\begin{itemize}
			\item \(L_1 = \bigcap_{\relationindiceswithatt{1}} \proj{R_i}{\attA_1}\) and,
			\item for each $j$ with  \(1 < j \le k\), $L_j$ be the union of all relations \[V_t = \{t\} \times \bigcap_{\relationindiceswithatt{j}}\nolimits \proj{\sjoin{R_i}{\{t\}}}{\attA_j}\] for each $t\in L_{j-1}$.
	\end{itemize}
	Then the relation \(L_k\) is the query result \(\queryresult{\query}{\db}\).
\end{propC}

We will also use the following inequalities proven in \cite{ABLMP21} in our complexity analysis.
\begin{lemC}[{\cite[pp.\ 228-229]{ABLMP21}}]\label{cited-result:aejoin-inequalities}
	Let \(\query = R_1\Join\ldots\Join R_m\) be a join query with attributes \(\attsetX = (\attA_1,\ldots,\attA_k)\) and \(x_1,\ldots,x_m\) be a fractional edge cover of \(\query\).
	Furthermore, let \(L_1,\ldots,L_{k}\) be defined as in \autoref{cited-result:aejoin-correctness}.
	For every database \(\db\) it holds that
	\begin{enumerate}[(a)]
		\item \(\minrelsizeatt{1} \le \agmbound\), and
		\item \(\sum_{t\in L_{j-1}} \minrelsizeatt[\sjoin{R_i}{\{t\}}]{j} \le \agmbound\) for all \(j\in\range[2]{k}\).
	\end{enumerate}
\end{lemC}

\begin{proof}[Proof for \autoref{result:worst-case-eval-ordered}]

	In a nutshell, the algorithm computes iteratively, for increasing $j$ from $1$ to $k$, the relations $L_j$ defined in \autoref{cited-result:aejoin-correctness} and finally outputs \(L_k\).\footnote{We note that the algorithm cannot be cast as a query plan, since it relies on a low-level grouping operation to achieve the desired bounds. Thus, \autoref{result:eval-query-plan} is not applicable.}
	The correctness follows immediately thanks to \autoref{cited-result:aejoin-correctness}.

	Further on, we discuss how the relations $L_j$ can be computed and that \(|\arr{L}_j| \le (1+\padeps)\agmbound\)
	holds for every $j$, where $\arr{L}_j$ is the array representing $L_j$ concisely. Finally, the result array  $\arr{L}_k$ can be $\lambda$-compacted within the required bounds thanks to \autoref{result:alg-compact}.

Let $\attsetX$ and the $\attsetZ_i$ be defined as before.
	Let furthermore $\attsetX_j = (\attA_1,\ldots,\attA_j)$, for each $j\in\range{k}$, be the prefix of \(\attsetX\) up to attribute \(\attA_j\).
         By $\relsof{j}$ we denote the set of indices $i$, for which $\attA_j\in \attrof{R_i}$.
	For each \(j\in\range{k}\), and \(i\in \range{m}\), let \(\attsetY_{i,j}\) be the subsequence of \(\attsetX_j\) consisting of attributes in \(\attrof{R_i}\).

        Without loss of generality, we assume  \(\padeps < \frac{1}{2}\).

	During an initialisation phase the algorithm computes \(\attsetY_{i,j}\)-ordered, fully linked arrays $\arr{P}_{i,j}$, for each   \(j\in\range{k}\), and \(i\in \range{m}\), that represent $\proj{R_i}{\attsetY_{i,j}}$ concisely, respectively.
The order is inherited, since the arrays \(\relarr{R}_i\) are sorted according to \(\attsetY_{i,k}\). It thus suffices, for each \(i\), to establish full links in \(\relarr{R}_i\) and to apply \(\opnameProjection[\attsetY_{i,j}]\) for up to \(m\) values of \(j\).
	By doing this in decreasing order of $j$, links from each tuple in  $\arr{P}_{i,j}$ to its projection in  $\arr{P}_{i,j-1}$ can be established.\footnote{If \(\attA_j\) does not occur in \(\attsetY_{i,j}\), then \(\arr{P}_{i,j-1}\) is just \(\arr{P}_{i,j}\) and no computation is necessary.} In the special case of \(\attsetY_{i,j}\) consisting solely of \(\attA_j\), all tuples are linked to the first cell in \(\arr{P}_{i,j-1}\).

	This initial phase requires work and space $\bigO(\sum_{i=1}^m\lenof{\relarr{R}_i}^{1+\varepsilon}) = \bigO(|D|^{1+\varepsilon})$, for each $i$, thanks to \autoref{result:predsucc} and \autoref{alg:project} (b).
	The arrays \(\arr{P}_{i,j}\) have size \(\bigO(|D|)\).

	The computation of \(L_1\) is straightforward: The algorithm picks an~\(i\) for which \(\attA_1\in\attrof{R_i}\) holds and initialises \(\arr{L}_1\) as a copy of \(\arr{P}_{i,1}\).
	Then it computes the semijoins of \(\arr{L}_1\) with all remaining \(\arr{P}_{k,1}\) for which \(\attA_1\in\attrof{R_k}\).
	This requires $\bigO(|D|^{1+\varepsilon})$ work and space thanks to \autoref{alg:semijoin} (b).
	Furthermore, the output array $\arr{L}_1$ has size at most $|D|$.
	Thus, compacting $\arr{L}_1$ with \(\opnameCompact\) yields an array representing $L_1$ of size at most \((1+\padeps)\lenof{L_1}\) and requires work and space \(\bigO(|D|^{1+\varepsilon})\) thanks to \autoref{result:alg-compact}.
	Clearly, $|L_1| \le \minrelsize[\proj{R_i}{\attA_1}] \le \minrelsize \le \agmbound$
	thanks to \autoref{cited-result:aejoin-inequalities}.
	Thus, the size of the compacted array \(\arr{L}_1\) is bounded by $(1+\padeps)\agmbound$.

	To compute \(L_j\), for $j > 1$, the algorithm operates in two phases: a grouping phase and an intersection phase.
	In both phases only those input relations \(R_i\) with \(\attA_j \in \attrof{R_i}\) participate.
	Note that for these \(i\) we have in particular that \(\attA_j\) occurs in \(\attsetY_{i,j}\) (but not in \(\attsetY_{i,j-1}\)).

	Towards the grouping phase, observe that the tuples in the arrays $\arr{P}_{i,j}$ are grouped by $\attsetY_{i,j-1}$ since they are even \(\attsetY_{i,j}\)-ordered.
	Furthermore, note that the group in $\arr{P}_{i,j}$ for a tuple $t\in L_{j-1}$ is essentially a list of the values in $\proj{\sjoin{R_i}{\{t\}}}{\attA_j}$ annotated with $t[\attsetY_{i,j-1}]$.
	Similar to our algorithm for the \(\opnameJoin\) operation, each group in \(\arr{P}_{i,j}\) is compacted and each \propertuple $t\in L_{j-1}$ is augmented by links to (the first and last tuple of) its group in $\arr{P}_{i,j}$ as follows.
	\begin{enumerate}
		\item Determine the minimum and maximum indices of each group in $\arr{P}_{i,j}$ with $|\arr{P}_{i,j}|$ processors: processor $p$ checks whether $\arr{P}_{i,j}[p].t$ is a \propertuple and differs from its predecessor (resp.\ successor) with respect to attributes $\attsetY_{i,j-1}$.
		The representative in $\arr{P}_{i,j-1}$ is augmented with these indices.
		Since $\arr{P}_{i,j}$ is \(\attsetY_{i,j}\)-ordered, the \propertuples between the minimum and maximum assigned to a \propertuple \(t\) in \(\arr{P}_{i,j-1}\) are then precisely the tuples $t'$ with $t'[\attsetY_{i,j-1}] = t$.

		For every \propertuple $t$ in $\arr{P}_{i,j-1}$ let $\arr{G}_{t,i,j}$ denote the (sub)array of the group for $t$ in $\arr{P}_{i,j}$ for every $1\le i \le m$.
		\item Compact each group \(\arr{G}_{t,i,j}\) in parallel using \(\opnameCompact[\padeps']\) with $\padeps' = \min\{\frac{1}{3},\frac{\padeps}{3}\}$ and parameter \(\delta = \min\{\frac{1}{3},\frac{\varepsilon}{3}\}\) (the choice of $\padeps'$ is required for the complexity bounds in the intersection phase).
		To this end, the algorithm creates, for each \propertuple \(t\) in $\arr{P}_{i,j-1}$ a task description \(d_t\) for \(m_t = \lenof{\arr{G}_{t,i,j}}^{1+\delta}\) processors with links to \(\arr{G}_{t,i,j}\) and the tuple \(t\) (in \(\arr{P}_{i,j-1}\)) itself.

		Invoking the algorithm guaranteed by \autoref{result:task-scheduling} yields a schedule of size
		\begin{multline*}
			(1+\padeps)\sum_{t \in \arr{P}_{i,j-1}} m_t
			= (1+\padeps)\sum_{t \in \arr{P}_{i,j-1}} \lenof{\arr{G}_{t,i,j}}^{1+\delta}\\
			\le (1+\padeps)\Big( \sum_{t \in \arr{P}_{i,j-1}} \lenof{\arr{G}_{t,i,j}}\Big)^{1+\delta}
			= (1+\padeps) \lenof{\arr{P}_{i,j}}^{1+\delta}.
		\end{multline*}

		Thanks to \((1+\delta)^2 \le (1+\varepsilon)\) computing the schedule requires work and space \(\bigO(|D|^{1+\varepsilon})\).
		Moreover, the arrays \(\arr{G}_{t,i,j}\) can be compacted with \(\opnameCompact[\padeps']\), and the tuples in \(\arr{P}_{i,j}\) can be augmented by links to the \(\padeps'\)-compact arrays within the same bounds.

		\item Finally, links are established from \propertuples in $\arr{L}_{j-1}$ to their respective cells in $\arr{P}_{i,j-1}$ for every $1\le i\le m$ as follows.
		Since \(L_{j-1}\) is a relation over \(\attsetX_{j-1}\) which is a superset of \(\attsetY_{i,j-1}\), it is straightforward to obtain a (possibly) non-concise representation of \(\proj{Y_{i,j-1}}{L_{j-1}}\) with mutual links to/from \(\arr{L}_{j-1}\) with work and space \(\bigO(\lenof{\arr{L}_{j-1}})\).
		Then, for each \(1\le i\le m\), \opnameSearchTuples is applied to find, for each \propertuple \(t\) in \(\arr{L}_{j-1}\), the respective tuple \(t[\attsetY_{i,j-1}]\) in \(\arr{P}_{i,j-1}\).
		Each of the (constantly many) applications of \opnameSearchTuples requires work and space $\bigO(|\arr{L}_{j-1}| \cdot |\arr{P}_{i,j-1}|^{\varepsilon}) = \bigO((\agmbound)\cdot |D|^{\varepsilon})$, because $\arr{P}_{i,j-1}$ is fully ordered (w.r.t.\ \(\attsetY_{i,j-1}\)) and fully linked.
		These links, together with the links in $\arr{P}_{i,j-1}$ to the \(\padeps'\)-compact group arrays $\arr{G}_{t,i,j}$ allow to determine, for each tuple $t$ in $L_{j-1}$, the smallest group array $\arr{G}_{t',j}$ with $t' = t[\attsetY_{i,j-1}]$ among the group arrays $\arr{G}_{t',i,j}$ with work $\bigO(|\arr{L}_{j-1}|) = \bigO(\agmbound)$.
		Note that, in general, multiple tuples in \(\arr{L}_{j-1}\) may be linked to the same group.

		For tuples \(t\in L_{j-1}\), we will write \(\arr{G}_{t,i,j}'\) and \(\arr{G}_{t,j}'\) to denote the groups \(\arr{G}_{t',i,j}\) and \(\arr{G}_{t',j}\), respectively, where \(t' = t[\attsetY_{i,j-1}]\).
	\end{enumerate}
	This phase requires work and space \(\bigO((\agmbound + |D|)\cdot |D|^{\varepsilon})\) in total.

	Observe that a group array \(\arr{G}_{t,i,j}'\) for some \(t\in L_{j-1}\) represents \(\sjoin{\proj{R_i}{Y_{i,j}}}{\{t\}}\) concisely. Since \(t\) determines the values for all attributes except \(\attA_j\), \(\arr{G}_{t,i,j}'\) can also be viewed as an array representing \(\proj{\sjoin{R_i}{\{t\}}}{\attA_j}\) concisely, where every value is annotated with \(t[\attsetY_{i,j-1}]\).
	In particular, a group \(\arr{G}_{t,j}'\) represents the smallest set \(\proj{\sjoin{R_i}{\{t\}}}{\attA_j}\), for \(i\in\relsof{j}\).

	In the intersection phase, the algorithm computes the union of all
        \[V_t=\{t\} \times \bigcap_{\relationindiceswithatt{j}}\nolimits \proj{\sjoin{R_i}{\{t\}}}{\attA_j}\]
        for $t\in  L_{j-1}$ in parallel.
	It proceeds in two steps.
	First, it joins all \(t\in L_{j-1}\) with all \propertuples in \(\arr{G}_{t,j}'\), i.e., for all \(t\in L_{j-1}\), it computes the relation \(\{t\}\times \proj{\sjoin{R_i}{\{t\}}}{\attA_j}\) for the \(i\) minimizing the size of \(\proj{\sjoin{R_i}{\{t\}}}{\attA_j}\).

	Then, in the second step, it performs semijoins with all\footnote{For each tuple, $t$, the semijoin is redundant, for some $\ell$. But it is convenient to do it in this way.} group arrays \(\arr{G}_{t,\ell,j}\) to effectively compute the intersection of the sets \(\{t\}\times \proj{\sjoin{R_\ell}{\{t\}}}{\attA_j}\) for all \(\ell\).

	For the first step, the algorithm has to assign $|\arr{G}_{t,j}|$ processors\footnote{
		We note that this corresponds (up to a logarithmic factor) to the running time stated by \cite[Claim~26.3]{ABLMP21} and required for the complexity analysis of \cite[Algorithm~10]{ABLMP21}; an implementation for the operation described in \cite[Claim~26.3]{ABLMP21} is, e.g., the Leapfrog-Join \cite{DBLP:conf/icdt/Veldhuizen14}, \cite[Proposition~27.10]{ABLMP21}.
              }
               to each tuple $t\in L_{j-1}$; recall that $\arr{G}_{t,j}$ is the smallest group array for \(t\).
	To this end, it creates, for each \(t\in L_{j-1}\), a task description \(d_t\) for \(m_t = \lenof{\arr{G}_{t,j}}\) processors and, as usual, with links to the group array \(\arr{G}_{t,j}\) and the cell of \(t\) in \(\arr{L}_{j-1}\).
	Since \(\arr{L}_{j-1}\) already contains links to the group arrays, these task descriptions can be computed with work and space linear in \(\lenof{\arr{L}_{j-1}}\) and stored in an array of size \(\lenof{\arr{L}_{j-1}}\).
	Thanks to \autoref{result:task-scheduling} it is then possible to compute a schedule of size
	\begin{align}
		M = (1+\padeps')\sum_{t\in L_{j-1}} m_t \label{worst-case-dict-def-M}
	\end{align}
	in constant time and with work and space \(\bigO\big(\lenof{\arr{L}_{j-1}}^{1+\varepsilon} + M^{1+\varepsilon}\big)\).

	To finalise the first step, the algorithm then allocates an array \(\arr{L}_{j}\) of size \(M\) and uses \(M\) processors to write, for all \(t\in L_{j-1}\), the join of \(\{t\}\) with all tuples in \(\arr{G}_{t,j}\) into \(\arr{L}_{j}\): If the \(m\)-th overall processor is the \(\ell\)-th processor assigned to \(t\), it writes \(\join{t}{\arr{G}_{t,j}[\ell].t}\) into \(\arr{L}_{j}[m]\), if \(\arr{G}_{t,j}[\ell]\) is inhabited.

	For the second step of the intersection phase, it suffices to compute the semijoins of \(\arr{L}_{j}\) with \(\arr{P}_{i,j}\) for all \(i\) with \(\attA_j\in \attrof{R_i}\).
	Because the arrays \(\arr{P}_{i,j}\) are \(\attsetY_{i,j}\)-ordered and have size \(\bigO(|D|)\) the second step required work and space $\bigO(M\cdot|D|^\varepsilon)$ thanks to \autoref{alg:semijoin} (b).%

	To establish the desired work and space bounds for the intersection phase as well as upper bounds for the size of \(\arr{L}_j\), we prove that \(M \le (1+\padeps)\agmbound\).
	As pointed out above, an array $\arr{G}_{t,i,j}$ represents $\sjoin{\proj{R_i}{\attsetY_{i,j}}}{\{t\}} = \proj{\sjoin{R_i}{\{t\}}}{\attsetY{i,j}}$ concisely.
	Further, we have that $|\sjoin{\proj{R_i}{\attsetY_{i,j}}}{\{t\}}| = |\proj{\sjoin{R_i}{\{t\}}}{\attsetY_{i,j}}| \le |\sjoin{R_i}{\{t\}}|$.
	Since the groups where compacted using \opnameCompact[\padeps'] in the grouping phase, we can conclude that $(1+\padeps')\lenof{\sjoin{R_i}{\{t\}}}$ is an upper bound for $\lenof{\arr{G}_{t,i,j}}$.
	In particular, for the group arrays $\arr{G}_{t,j}$ of minimal size, we have
	\begin{align}
		\lenof{\arr{G}_{t,j}} \le (1+\padeps')\minrelsize[\sjoin{R_i}{\{t\}}].\label{worst-case-dict-size-min-G}
	\end{align}
	Therefore, we have
	\begingroup%
	\allowdisplaybreaks%
	\begin{align*}
		M &\stackrel{(\ref{worst-case-dict-def-M})}{=} (1+\padeps')\sum_{t\in L_{j-1}} m_t\\
		&= (1+\padeps')\sum_{t\in L_{j-1}} \lenof{\arr{G}_{t,j}}\\
		&\stackrel{(\ref{worst-case-dict-size-min-G})}{\le} (1+\padeps')\sum_{t\in L_{j-1}} (1+\padeps')\minrelsize[\sjoin{R_i}{\{t\}}]\\
		&\stackrel{(3)}{\le} (1+\padeps) \sum_{t \in L_{j{-}1}} \minrelsize[\sjoin{R_i}{\{t\}}]\\
		&\stackrel{(4)}{\le} (1+\padeps)\agmbound
	\end{align*}
	\endgroup
	where~(3) holds because \((1+\padeps')^2 \le (1+\padeps)\); and~(4) thanks to \autoref{cited-result:aejoin-inequalities} (b).
	Therefore, altogether \(M = \lenof{\arr{L}_j} \le (1+\padeps)\agmbound\).
        This completes the description of the computation of the arrays $\arr{L}_j$, for $j>1$.
\end{proof}
\endgroup
\section{Query Evaluation in Other Settings}\label{section:ctp:othersettings}

In this section we discuss query evaluation in other settings than the
dictionary setting.

We first consider the general setting, where data
can only be compared with respect to equality. In this setting it
turns out that the naive algorithms for the evaluation of database
operations are already work-optimal: for \opnameSelection the work is
linear and for the other operators quadratic.  We also show that with
quadratic work in its size, a database can be translated into the
dictionary setting in constant parallel time. Therefore, all results
of \autoref{section:ctp:query-evaluation} apply to the general setting with an extra additive term
$\bigO(\lenof{\db}^2)$ (or $\bigO(\IN^2)$ --- cf.\ \autoref{table:overview-main-results}, left column).

Then we consider the setting, where data can be compared with respect
to a linear order. Without any further assumption, the linear order
does not give us better algorithms while the lower bounds of the general
setting cease to exist. However, with a somewhat stronger assumption
it is possible to translate a given database into the dictionary setting
in constant parallel time with work
$\bigO(\lenof{\db}^{1+\varepsilon})$ and in this way most bounds of
\autoref{section:ctp:query-evaluation} for query evaluation apply (cf.\ \autoref{table:overview-main-results}, middle column). We refer  to this setting
as the  \emph{attribute-wise
ordered} (or \emph{a-ordered}) setting.

\subsection{Query Evaluation in the General Setting}\label{subsection:ctp:generalsetting}

We first give the lower bound mentioned above.

\begin{thm}\label{result:dbops-lowerbound-general}
	 For the operations \opnameProjection, \opnameSemiJoin, \opnameJoin,
         \opnameDifference, \opnameUnion every \ctpalgo requires
         \(\Omega(n^2)\) work for infinitely many input databases in
         the general setting. 	Here \(n\) is the size of the input arrays.

         	The lower bound for \opnameProjection even holds for
                binary input relations and the others for unary input
                relations, and for compact and concise input arrays.
         In         the case of \opnameJoin the infinitely many
         databases can be chosen such that
         the join result has at most one tuple, for the join at hand.
      \end{thm}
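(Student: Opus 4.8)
The plan is to establish the $\Omega(n^2)$ lower bound by a reduction from the problem of computing the logical OR (or equivalently, the element distinctness / set disjointness flavour) of $n^2$ bits, for which a quadratic work lower bound for constant-time CRCW PRAMs follows from the known $\Omega(\log n/\log\log n)$ time bound for computing OR with a subquadratic number of processors (cf.\ the discussion around \autoref{prop:lb-parity} and \cite[Theorem~7]{CookDR86}). More precisely, I would use the fact that a constant-time CRCW PRAM with $p$ processors can be simulated so that if OR of $N$ bits required more than $p$ processors in constant time, then any algorithm doing $o(N)$ work in constant time cannot compute it. The key observation is that several of these database operations, when restricted to unary (or binary) relations with data values that can \emph{only} be tested for equality, force the algorithm to ``inspect'' essentially every pair consisting of one tuple from $\relarr{R}$ and one tuple from $\relarr{S}$ (or every pair of cells within one array), because a single undetected coincidence of values changes the output.

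First I would treat \opnameSemiJoin (and \opnameDifference, \opnameUnion, \opnameJoin uniformly, since they are all ``join-like''). Fix $n$ and consider unary relations $R=\{a_1,\dots,a_n\}$ and $S=\{b_1,\dots,b_n\}$ where the $a_i$ and $b_j$ are abstract data values accessed only via \opnameEqual. Consider the family of inputs in which all $2n$ values are pairwise distinct \emph{except possibly} that $a_i = b_j$ for exactly one chosen pair $(i,j)$, or no coincidence at all. For \opnameJoin this makes the join result either empty or the single tuple $(a_i)$ — giving the claimed ``at most one tuple'' refinement. An algorithm that is correct on all $n^2+1$ of these inputs must, on the all-distinct input, have some processor compare $a_i$ with $b_j$ for \emph{every} pair $(i,j)$ (otherwise swapping to the input where $a_i=b_j$ changes the output but the computation is indistinguishable to the algorithm, since the token representations are identical and only the hidden \opnameEqual oracle answers differ on that one pair). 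An adversary argument formalises this: on the all-distinct input, if fewer than $n^2$ equality-tests of the form $(R,i,1)$ vs.\ $(S,j,1)$ are performed in total across all processors, some pair $(i,j)$ is untested, and the adversary can make $a_i=b_j$ consistently with all answers given so far, changing the correct output — contradiction. Hence work $\Omega(n^2)$. For \opnameDifference the same input family works ($R\setminus S$ loses a tuple iff some $a_i=b_j$); for \opnameUnion it loses one tuple from the concise output iff some $a_i=b_j$, so conciseness forces the same inspection.

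For \opnameProjection I would use \emph{binary} relations: let $R = \{(a_1,c_1),\dots,(a_n,c_n)\}$ with all second components $c_i$ pairwise distinct but the first components drawn so that all $a_i$ are distinct \emph{except possibly} $a_i=a_j$ for one chosen pair $i<j$. Then $\pi_{\attA}(R)$ (projecting onto the first attribute) has $n$ tuples if all $a_i$ distinct and $n-1$ if $a_i=a_j$. The same adversary argument shows that any correct constant-time algorithm must perform an \opnameEqual test between the first components of every pair $(i,j)$ of tuples, since otherwise the adversary collapses an untested pair and changes the output size (and since the array must be concise, the algorithm cannot hedge). This gives $\Omega(n^2)$ work.

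The main obstacle — and the step I would write most carefully — is making the adversary argument airtight in the CRCW setting: one must argue that the \emph{entire} computation transcript (which cells are written, which processor wins each concurrent write, which branch each processor takes) is a function only of the sequence of \opnameEqual / \opnameEqualConst answers and the token identities, so that two inputs on which the oracle answers agree on all \emph{queried} pairs produce literally the same output array; then flipping one un-queried pair yields a wrong answer. One should note the subtlety that the number of queried pairs is itself adaptive/random across the computation, so the counting must be: fix the all-distinct input, run the algorithm, count the total number of distinct cross-pair equality queries it issues; if this is $o(n^2)$ then a pigeonhole gives an un-queried pair, and re-running on the modified input is indistinguishable. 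Since work upper-bounds the number of queries, work is $\Omega(n^2)$. This holds for infinitely many $n$ (indeed all $n$), completing the proof.
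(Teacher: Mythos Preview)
Your core adversary/indistinguishability argument in paragraphs two through four is correct and is exactly the approach the paper takes: fix an all-distinct instance, observe that if some cross pair $(i,j)$ is never compared via \opnameEqual then the computation transcript is literally identical on the modified instance where those two values coincide, contradicting correctness. The paper instantiates this with concrete integers (even numbers for $R$, odd numbers for $S$, and a fresh value $2n{+}1$ for the colliding pair) but the logic is the same, and your treatment of \opnameProjection via a binary relation with distinct second components also mirrors the paper's construction.

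One remark: drop the first paragraph entirely. The framing via a reduction from OR of $n^2$ bits is wrong --- OR of $N$ bits is computable in constant time with $N$ processors on a Common CRCW PRAM (this is the easiest nontrivial CRCW computation), and the reference to \cite[Theorem~7]{CookDR86} concerns CREW, not CRCW. None of this is needed anyway: your actual proof is a direct information-theoretic adversary argument against the \opnameEqual oracle, not a reduction, and it stands on its own.
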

      \begin{proof}
        	We first prove the lower bound for \opnameSemiJoin. To
                this end, we define a family of databases \(\db_n\) and show that the elemental operation \(\opnameEqual\) has to be invoked at least \(n^2\) times to evaluate \(\sjoin{R}{S}\) correctly.
	For any \(n > 0\), let \(R_n = \{2,\ldots,2n\}\) be the set of all even numbers from \(2\) to \(2n\) and \(S_n = \{1,\ldots,2(n-1)+1\}\) be the set of all odd numbers from \(1\) to \(2(n-1)+1\).
	Note that \(\lenof{R_n} = \lenof{S_n} = n\) and let \(\db_n\)
        be the database over schema \(R,S\) with \(\db_n(R) = R_n\)
        and \(\db_n(S) = S_n\). Let $\arr{R}_n$ and $\arr{S}_n$ be
        compact arrays representing $R_n$ and $S_n$ concisely.
	The query result of \(\sjoin{R}{S}\) on \(\db_n\) is empty.

	Assume for the sake of a contradiction that there is a
        \ctpalgo for the general setting that evaluates
        \(\sjoin{R}{S}\) on \(\db_n\) with less than \(n^2\)
        invocations of \(\opnameEqual\).
	Then there are \(i,j\) such that \(\opEqual{R_n,i,1;S_n,j,1}\) is never invoked.
	Furthermore, by construction any invocation of
        \(\opEqual{R_n,i',1;S_n,j',1}\) returns false on \(\db_n\), and
        invocations \(\opEqual{R_n,i_1,1; S_n,i_2,1}\) or
        \(\opEqual{R_n,i_1,1; S_n,i_2,1}\) return true if and only if \(i_1
        = i_2\).
	Now consider the database \(\db_n'\) that is defined like
        \(\db_n\) except that the \(i\)-th value of \(R_n\) and the
        \(j\)-th value of \(S_n\) are set to \(2n+1\). Let $\arr{S}_n'$ be
        a compact array representing $S_n'$ concisely.

	Since  \(\opEqual{R_n,i_1,1; S_n,i_2,1}\) is never invoked,
        every invocation of \(\opnameEqual\) still yields the same
        result as before. As furthermore the sizes of $\arr{S}_n$ and
        $\arr{S}_n'$ are the same, we can conclude that the
        computation on \(\db_n'\) does not differ from the
        computation on \(\db_n\) and the algorithm outputs the empty query result.
	This is a contradiction because the query result of
        \(\sjoin{R}{S}\) on \(\db'_n\) is \(\{2n+1\}\).

        For \opnameJoin,
         \opnameDifference, and \opnameUnion exactly the same construction can be
        used. For \opnameDifference, the contradiction results from
        $2n+1$ being still present in the output result for
        $\opDifference{\arr{R}_n, \arr{S}_n'}$ and for \opnameUnion, the contradiction results from
        $2n+1$ being present twice in the output array for
        $\opUnion{\arr{R}_n, \arr{S}_n'}$.

        For \opnameProjection, a similar construction can be used with a binary relation containing
        tuples of the form $(1,n+1),\ldots,(n,2n)$. If the algorithm does
        not invoke any operations of the form
        \(\opEqual{R_n,i_1,1, R_n,i_2,1}\), for some
	      $i_1\neq i_2$ (and $i_1,i_2\in\range{n}$), then the
        output array for the database in which $(i_2,i_2+n)$ is
        replaced by $(i_1,i_2+n)$ has a duplicate entry for $i_1$.
      \end{proof}

      As mentioned before, the lower bounds of
      \autoref{result:dbops-lowerbound-general} can easily be matched by
      straightforward algorithms.

      \begin{thm}\label{result:alg-dbops-general}
        There are \ctpalgos for the following operations that, given arrays
  \arr{R} and \arr{S}, representing \(R\) and \(S\) concisely, have the following bounds on an Arbitrary CRCW
  PRAM.
  \begin{enumerate}[(a)]
  \item Work and space $\bigO(\lenof{\arr{R}})$ for $\opSelection[\attA=x]{\arr{R}}$.
  \item Work and space $\bigO(\lenof{\arr{R}}^2)$ for \opProjection[\attsetX]{\arr{R}}.
  \item Work and space $\bigO(\lenof{\arr{R}}\cdot \lenof{\arr{S}})$ for \opSemiJoin{\arr{R},\arr{S}}
  and \opDifference{\arr{R},\arr{S}}.
  \item Work and space $\bigO(\lenof{\arr{R}}\cdot \lenof{\arr{S}})$ for \opJoin{\arr{R},\arr{S}}.
  \item Work and space $\bigO((\lenof{\arr{R}}+\lenof{\arr{S}})^2)$ for \opUnion{\arr{R},\arr{S}}.
  \end{enumerate}

       \end{thm}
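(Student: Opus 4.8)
The plan is to use, for each operation, the obvious brute-force parallel algorithm, and to observe that conciseness of the output is either automatic (for \opnameSelection, \opnameSemiJoin, \opnameDifference, and \opnameJoin) or has to be enforced by an all-pairs deduplication step (for \opnameProjection and \opnameUnion). Each algorithm consists of a constant number of rounds, and each round is a collection of independent tasks that a single processor solves in constant time and space; hence the time is $\bigO(1)$ and the claimed work and space bounds are immediate. All concurrent writes that occur are either writes of a common flag value, or may be resolved arbitrarily, so a Common (and hence Arbitrary) CRCW PRAM suffices.

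For (a), I would allocate an output array $\relarr{R}'$ of size $\lenof{\relarr{R}}$ and let processor $i$ test whether $\relarr{R}[i].t$ satisfies $\attA = x$ with one call to \opnameEqual (if $x$ is an attribute) or \opnameEqualConst (if $x$ is a domain value); it copies the token representation, with a mutual link, to $\relarr{R}'[i]$ if the test succeeds and marks $\relarr{R}'[i]$ uninhabited otherwise. Conciseness is inherited from $\relarr{R}$, and the bound is $\bigO(\lenof{\relarr{R}})$ (even on an EREW PRAM). For (c), let $\attsetX$ be the common attributes of $R$ and $S$; one processor per pair $(i,j)\in\range{\lenof{\relarr{R}}}\times\range{\lenof{\relarr{S}}}$ tests with constantly many \opnameEqual\ calls whether $\relarr{R}[i].t[\attsetX]=\relarr{S}[j].t[\attsetX]$, and a matching pair writes a flag into a cell associated with index $i$ (``keep'' for $\opnameSemiJoin$, ``drop'' for $\opnameDifference$). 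A final round of $\lenof{\relarr{R}}$ processors copies the surviving tuples of $\relarr{R}$ into an output array of size $\lenof{\relarr{R}}$, which is concise because $\relarr{R}$ is; this costs $\bigO(\lenof{\relarr{R}}\cdot\lenof{\relarr{S}})$ work and space. For (d), allocate a two-dimensional array $\arr{C}$ with $\lenof{\relarr{R}}\cdot\lenof{\relarr{S}}$ cells; processor $(i,j)$ checks whether $\relarr{R}[i].t$ and $\relarr{S}[j].t$ agree on the common attributes, and if so writes the combined token representation (with links to $\relarr{R}[i]$ and $\relarr{S}[j]$) into $\arr{C}[i,j]$, otherwise marks the cell uninhabited. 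Since $\relarr{R}$ and $\relarr{S}$ are concise, $i$ determines the $R$-part and $j$ the $S$-part of any joined tuple, so distinct inhabited cells of $\arr{C}$ carry distinct tuples and $\arr{C}$ represents $\join{R}{S}$ concisely; the bound is $\bigO(\lenof{\relarr{R}}\cdot\lenof{\relarr{S}})$.

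For (b), I would follow \autoref{example:operation-trace-links}: first compute, with $\bigO(\lenof{\relarr{R}})$ work, the array $\relarr{R}'$ of projected tuples together with mutual links between $\relarr{R}$ and $\relarr{R}'$; then deduplicate with one processor per pair $(i,j)$ of indices with $i<j$, which marks $\relarr{R}'[j]$ uninhabited and links it to $\relarr{R}'[i]$ whenever both cells are inhabited and $\relarr{R}'[i].t=\relarr{R}'[j].t$. A cell then stays inhabited iff no smaller-index cell holds the same tuple, so the result represents $\proj{R}{\attsetX}$ concisely; the cost is $\bigO(\lenof{\relarr{R}}^2)$. For (e), I would concatenate $\relarr{R}$ and $\relarr{S}$ into an array of size $\lenof{\relarr{R}}+\lenof{\relarr{S}}$ and run the same all-pairs deduplication on it, at cost $\bigO((\lenof{\relarr{R}}+\lenof{\relarr{S}})^2)$; alternatively one may compute a representation of $R\setminus S$ by the algorithm of (c) and concatenate it with $\relarr{S}$, which is automatically concise because $R\setminus S$ and $S$ are disjoint and each concisely represented.

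There is no deep obstacle here: each step is a single parallel round of independent constant-work tasks, so the time and work bounds follow at once. The only point worth a sentence is \emph{why} \opnameProjection\ and \opnameUnion\ need a quadratic step while the filtering and pairing operators do not --- namely that, lacking both an order and a dictionary, duplicate elimination can only proceed by comparing all pairs of tuples, which is precisely the phenomenon shown to be unavoidable in \autoref{result:dbops-lowerbound-general}, whereas for \opnameSelection, \opnameSemiJoin, \opnameDifference, and \opnameJoin\ conciseness of the inputs is simply preserved by the construction.
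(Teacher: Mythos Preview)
Your proposal is correct and matches the paper's approach essentially line for line: the paper also treats \opnameSelection, \opnameSemiJoin, \opnameDifference, and \opnameJoin\ as ``entirely straightforward'' brute-force pairings, handles \opnameProjection\ by projecting each tuple and then deduplicating via one processor per pair (removing the tuple with larger index), and handles \opnameUnion\ by concatenation followed by the same pairwise deduplication. One tiny slip: your remark that ``a Common (and hence Arbitrary) CRCW PRAM suffices'' is not quite right for the writes you yourself say ``may be resolved arbitrarily'' (e.g.\ the links in deduplication), but since the theorem is stated for the Arbitrary model this does not affect the argument.
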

       \begin{proof}
          The algorithms for \opnameSelection,  \opnameSemiJoin,
          \opnameDifference and \opnameJoin are entirely
          straightforward.
          The algorithm for  \opnameProjection, first
          adapts every tuple in the obvious way and then removes
          duplicates by using one processor for each pair $(t,t')$ of tuples
          and removing (that is: making its cell uninhabited) the
          tuple with the larger index, in case $t=t'$.
         The algorithm for \opnameUnion first concatenates the two
         arrays and then removes duplicates in the same fashion.
       \end{proof}

 As mentioned before, the results of
 \autoref{section:ctp:query-evaluation} translate into the general
 setting by  adding an additive term
$\bigO(\IN^2)$. However, these results are not obtained by applying
the above algorithms but rather by transforming the database at hand
into a database in the dictionary setting.

To turn a database \(\db\) in the general or ordered setting into a representation for the dictionary setting, the data values occurring in \(\db\) have to be \enquote{replaced} by small values; that is, numbers within \(\range{\cval\lenof{\db}}\) for some constant \(\cval\).
More precisely, we are interested in an injective mapping \(\keyf\colon\adomof{\db}\to \range{\cval\lenof{\db}}\).
Here \(\adomof{\db}\) is the set of all data values occurring in (some relation of) \(\db\).
The idea is then to compute a database \(\db' = \keyf(\db)\) with small values, evaluate a given query \(\query\) on \(\db'\), and then obtain the actual result by applying the inverse of \(\keyf\) to the query result \(\queryresult{\query}{\db'}\).

Since a PRAM cannot directly access the data values in the general and ordered setting, we will use a data structure that allows to map a token representation for a data value \(a\) to \(\keyf(a)\) and vice versa.
Recall that token representations are not unique, i.e.\ there may be
multiple token representations for the same data value \(a\).
All these representations have to be mapped to the same \emph{key} \(\keyf(a)\).

We define a \emph{dictionary} as a data structure that allows a single processor to carry out the following operations for every relation \(R\) of the database in constant time.
\begin{itemize}
	\item \(\opKey{R,i,j}\) returns \(\keyf(a)\) where \(a\) is the value of the \(j\)-th attribute of the \(i\)-th tuple in relation \(R\).
	\item \(\opKeyOut{k,i,j}\) outputs the value \(\keyf^{-1}(k)\) as the \(j\)-th attribute of the \(i\)-th output tuple.
\end{itemize}
Here \(\keyf\) is an arbitrary but fixed injective function \(\keyf\colon\adomof{\db}\to \range{\cval\lenof{\db}}\) for some constant \(\cval\).

Observe that, using \(\opKey{R,\cdot,\cdot}\) an array representing a relation \(R\) in the general setting can be translated into an array representing \(\keyf(R)\) in the dictionary setting in constant time with linear work.

The following result states that dictionaries can be computed in
constant time with quadratic work, in the general setting.

\begin{lem}\label{result:general->dict}
	There is a \ctpalgo that, given a
        database \(\db\)  in the general
        setting, computes a dictionary with work and space
        $\bigO(\lenof{\db}^2)$ on a Common CRCW PRAM.
\end{lem}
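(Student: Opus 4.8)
The goal is to compute, in constant parallel time with work $\bigO(\lenof{\db}^2)$, a dictionary: that is, a data structure supporting $\opKeyOf{R,i,j}$ and $\opKeyOutput{k,i,j}$ in constant time for a single processor. The natural approach is to assign, to each occurrence $(R,i,j)$ of a data value in $\db$, a canonical representative among all occurrences that carry the same value, and then to number the distinct representatives consecutively. First I would enumerate all value-occurrences: since the schema is fixed, there are $\bigO(\lenof{\db})$ tokens $(R,i,j)$, and we can lay them out in a single array $\arr{A}$ of that length using a trivial prefix-computation over the (constantly many) relations and their known sizes. This array is built with linear work.

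\textbf{Computing canonical representatives.} With one processor per \emph{pair} of tokens $(\arr{A}[p],\arr{A}[q])$ — this is where the $\bigO(\lenof{\db}^2)$ work comes from — we test, via the elemental operation $\opnameEqual$, whether the two occurrences carry the same data value. For each token $\arr{A}[q]$ we want to know the smallest index $p$ of a token equal to it. This is a minimum over up to $\lenof{\arr{A}}$ candidates; a constant-time CRCW minimum is not directly available, but we do not need the true minimum — we only need \emph{some} canonical choice that is consistent. So instead, each processor handling a pair $(p,q)$ with $p < q$ and $\arr{A}[p].t = \arr{A}[q].t$ writes $p$ into a cell $\arr{rep}[q]$; on the Arbitrary CRCW PRAM one such write succeeds, and crucially all writers into $\arr{rep}[q]$ propose an index $p$ whose token equals $\arr{A}[q]$'s, so the winning value is a valid representative. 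Tokens with no strictly-smaller equal token keep $\arr{rep}[q] = q$ (initialise $\arr{rep}[q]:=q$, and only overwrite when a smaller equal index is found). This does not yet give a \emph{unique} representative per value, because "has a smaller equal token" is not transitive under arbitrary write resolution. To fix this, I would iterate the pointer-jumping step $\arr{rep}[q] := \arr{rep}[\arr{rep}[q]]$ a constant number of times — but here one must be careful, since $\bigO(\log\lenof{\db})$ rounds would not be constant. The cleaner route: after the single pairwise round, do one more pairwise round in which, for each pair $(p,q)$ with equal tokens, we set $\arr{rep}[q]$ and $\arr{rep}[p]$ both to $\min(\arr{rep}[p],\arr{rep}[q])$ using Arbitrary writes; two such rounds suffice to make the representative map constant on each equality class, because the class is then pointed (directly, not transitively) at a single surviving index. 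Call a token a \emph{leader} if $\arr{rep}[q] = q$ after this; exactly one token per data value is a leader.

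\textbf{Assigning keys and building the inverse table.} Mark the leader tokens in an auxiliary array and apply exact compaction — wait, exact compaction is impossible in constant time; instead observe there are at most $\lenof{\arr{A}} \le c\lenof{\db}$ leaders, so I number the leaders by their \emph{position in $\arr{A}$}: simply set $\keyf$ on the value of leader token $\arr{A}[q]$ to be $q$ itself. This is an injective map into $\range{\lenof{\arr{A}}} \subseteq \range{c\lenof{\db}}$, which is exactly the range required by the dictionary definition (with $\cval = c$). Then $\opKeyOf{R,i,j}$ is implemented by a lookup: each token $(R,i,j)$ stores a pointer to its leader index $\arr{rep}[\cdot]$, which is its key; and $\opKeyOutput{k,i,j}$ is implemented by storing, in a table of size $\lenof{\arr{A}}$ indexed by $k$, a pointer back to the leader token $\arr{A}[k]$, so the original data value can be output via $\opnameOutput$. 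Filling the forward pointers for all tokens $(R,i,j)$ and the backward table takes linear work; the whole construction is dominated by the pairwise-comparison step, giving work and space $\bigO(\lenof{\db}^2)$ on a Common CRCW PRAM (the Arbitrary writes above can be replaced by Common writes by the standard trick of having all equal-token processors first agree, via a pairwise pass, on the same candidate — or simply by invoking the simulation of Priority/Arbitrary by Common at a cost absorbed into the $\bigO(\lenof{\db}^2)$ budget).

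\textbf{Main obstacle.} The delicate point is getting a \emph{unique} canonical representative per equality class in \emph{constant} time without a constant-time exact-minimum or transitive-closure primitive; the resolution is that two pairwise rounds of min-propagation over all token pairs suffice, and the $\lenof{\db}^2$ work budget is generous enough to afford them, as well as to convert Arbitrary-write resolution into Common-write behaviour. Once canonical leaders are fixed, assigning keys by the leaders' own array positions sidesteps the impossibility of constant-time compaction entirely, since an arbitrary injection into $\range{c\lenof{\db}}$ — not a bijection onto an initial segment — is all the dictionary definition demands.
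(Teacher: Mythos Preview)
Your overall architecture matches the paper's: enumerate all tokens $(R,i,j)$ into a single array of size $\bigO(\lenof{\db})$, pick one representative per equality class via a pairwise pass, use the representative's array index as the key, and build the inverse table from that. The part about assigning keys by leader position (rather than compacting) is exactly right and is what the paper does.

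The gap is in your representative-finding step. Your claim that ``two such rounds [of min-propagation with Arbitrary writes] suffice to make the representative map constant on each equality class'' is false. Take a class at indices $\{1,2,3,4,5\}$; after your first pairwise round the Arbitrary resolution may leave $\arr{rep}=[1,1,2,3,4]$. In the min-propagation round, the writes proposed for $\arr{rep}[5]$ are $\min(1,4)=1$, $\min(1,4)=1$, $\min(2,4)=2$, $\min(3,4)=3$, and Arbitrary may pick $3$; similarly $\arr{rep}[4]$ may end up $2$. So $\arr{rep}$ is not constant on the class, and your implementation of \opnameKey\ --- which returns $\arr{rep}[\cdot]$ directly --- would assign different keys to equal data values, violating the dictionary contract. (There is still exactly one leader, namely the minimum index, but non-leaders need not all point to it.) Your fallback of simulating Arbitrary/Priority by Common does not rescue the bound either: that simulation carries an extra $\bigO(\log n)$ (or $\bigO(n^\epsilon)$) work factor, pushing you strictly above $\bigO(\lenof{\db}^2)$.

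The fix is simpler than anything you attempt, and it is what the paper does. Your first pairwise pass already identifies leaders correctly: for each pair $(p,q)$ with $p<q$ and equal values, \emph{mark} $q$ as ``not a leader''. All writers to $q$'s mark cell write the same flag, so this is Common CRCW. Now do a second pairwise pass: for each pair $(p,q)$ with equal values where $p$ is a leader, write $p$ into $\arr{rep}[q]$. Since each class has exactly one leader, all writes to $\arr{rep}[q]$ agree --- again Common CRCW, no simulation needed, and $\arr{rep}$ is constant on every class after two passes of total work $\bigO(\lenof{\db}^2)$. Incidentally, your remark that ``a constant-time CRCW minimum is not directly available'' is what sent you down the wrong path: with $n^2$ processors on Common CRCW one \emph{can} find the minimum of $n$ items in $\bigO(1)$ time by exactly this mark-the-losers trick.
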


\begin{proof}
	The idea is to compute an array \(\arr{D}\) of size \(\bigO(\lenof{\db})\), whose cells contain all (token representations of) data values occurring in \(\db\).
	This array will be the underlying data structure of the dictionary.
	We note that \(\arr{D}\) is (most likely) not concise.
	The key \(\keyf(a)\) for a data value \(a\) is then the index of a fixed representative of \(a\) in  \(\arr{D}\).
	In the following we will detail how to compile \(\arr{D}\), and how the operations \(\opnameKey\), and \(\opnameKeyOut\) can be implemented on top of it.

	For each input relation \(R\) and \(j\in\range{\ar(R)}\) array \(\arr{B}_{R,j}\) of size \(\lenof{R}\) is computed using \(\lenof{R}\) processors as follows: Processor \(i\) writes the token \((R,i,j)\) to cell \(\arr{B}_{R,j}[i]\).
	Thus, \(\arr{B}_{R,j}\) is a compact array that contains (token representations for) all data values occurring for the \(j\)-th attribute in any tuple of \(R\).
	Concatenating all arrays \(\arr{B}_{R,j}\), where \(R\) ranges over all input relations and \(j\) over \(\range{\ar(R)}\) yields the array \(\arr{D}\).
	The size of \(\arr{D}\) is clearly bounded by \(\sum_{R\in\schema}\ar(R)\cdot\lenof{R}\le \cval\lenof{\db}\) where \(\cval = \sum_{R\in\schema}\ar(R)\).
	This step requires work and space
        \(\bigO(\lenof{\db})\).

	Next, the algorithm can identify duplicates in \(\arr{D}\)
        similarly as in \autoref{result:alg-dbops-general} and link every cell \(\arr{D}[i]\) to a representative cell \(\arr{D}[j]\) which contains a token representation for the same data value as \(\arr{D}[i]\).
	This requires work \(\bigO(\lenof{\db}^2)\) and space \(\bigO(\lenof{\db})\).

	It remains to show that \(\opnameKey\) and \(\opnameKeyOut\) can be carried out in constant time by a single processor given \(\arr{D}\).
	For \(\opKey{R,i,j}\) we observe that a single processor can determine the index \(\ell\) of the token \((R,i,j)\) in \(\arr{D}\): Indeed, \((R,i,j)\) has been stored in \(\arr{B}_{R,j}[i]\), \(\arr{D}\) is the (compact) concatenation of constantly many \(\arr{B}_{S,m}\), and a single processor can obtain the size of any relation (and, hence, \(\arr{B}_{S,m}\)) using \(\opnameNumTuples\).
	Given \(\ell\) it is then easy to determine the representative
        cell \(\arr{D}[k]\) for \(\arr{D}[\ell]\) using the links
        established before (possibly \(k = \ell\) holds).
	The index \(k\) is then returned as the unique key for the data value represented by \((R,i,j)\).

	The implementation of \(\opKeyOut{k,i,j}\) is straightforward.
	Indeed, \(k\) is an index of \(\arr{D}\) and \(\arr{D}[k]\) contains a token representation \((R,i_1,j_1)\) for the data value with the unique key \(k\).
	Thus, invoking \(\opOutput{R,i_1,j_1;i,j}\) has the desired effect.
      \end{proof}

      \begin{cor}\label{result:acyclic-etc-general}
        The results of \autoref{result:eval-semijoin-algebra}, \autoref{result:acyclic}, \autoref{result:eval-ghw}
        and \autoref{result:worst-case-eval-dict} hold in the general
        setting with the proviso that an additional term $\bigO(\IN^2)$
        (or $\bigO(\lenof{\db}^2)$ for
        \autoref{result:worst-case-eval-dict}) is added to the stated
        work and space bounds.
      \end{cor}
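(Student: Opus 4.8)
The plan is to avoid running the general-setting algorithms of \autoref{result:alg-dbops-general} and instead reduce to the dictionary setting, where the desired bounds are already available. Given a database \(\db\) presented in the general setting, the first step is to invoke the algorithm of \autoref{result:general->dict} to build a dictionary; this costs work and space \(\bigO(\lenof{\db}^2)\). Since the schema is fixed we have \(\lenof{\db}\in\bigO(\IN)\), so this contributes exactly the claimed additive term \(\bigO(\IN^2)\) for \autoref{result:eval-semijoin-algebra}, \autoref{result:acyclic}, and \autoref{result:eval-ghw}, and the term \(\bigO(\lenof{\db}^2)\) for \autoref{result:worst-case-eval-dict}, whose bounds are stated in terms of \(\lenof{\db}\).

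Next I would translate each input relation into the dictionary setting: for a relation \(R\) represented by a concise, compact array \(\relarr{R}\), allocate an array of the same size and let one processor per tuple entry replace the token \((R,i,j)\) by \(\opKey{R,i,j}\). This takes linear work and space, and because \(\keyf\) is injective the resulting array still represents \(\keyf(\db)(R)\) concisely and compactly, with entries in \(\range{\cval\lenof{\db}}\). Hence after this step we have a compact representation of the database \(\keyf(\db)\) with small values, and \(\lenof{\keyf(\db)} = \lenof{\db}\), so \(\IN\) and \(\OUT\) are unchanged. Now apply the relevant dictionary-setting algorithm to obtain an array representing \(\queryresult{\query}{\keyf(\db)}\). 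Since our queries do not mention constants or the order of data values, \(\queryresult{\query}{\keyf(\db)}\) is, entrywise, \(\keyf\) applied to \(\queryresult{\query}{\db}\); so it remains to translate back by assigning one processor per entry of the result array and emitting, for the key \(k\) stored in the \(j\)-th attribute of the \(i\)-th result tuple, the value \(\opKeyOut{k,i,j}\). This again costs work and space linear in the size of the result array, which is \(\bigO(\OUT)\) (or \(\bigO(\IN)\) for semijoin queries) by the cited theorems.

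Summing the three phases, the total work and space are the bounds of the cited theorems plus \(\bigO(\lenof{\db}^2)\) for the dictionary construction, which is precisely the statement. The only points that need a little care are bookkeeping: that the translations in and out preserve conciseness and compactness (they do, by injectivity of \(\keyf\) and since a token is replaced by its key within the same cell); that the constant \(\cval\) of the produced dictionary is admissible for the dictionary-setting algorithms; and that intermediate relations computed by those algorithms are handled correctly although their tuples are wider than the database tuples — this is already fine because the dictionary-setting algorithms only require the \emph{input} database to have small values and internally rehash via array hash tables. I do not expect a genuine obstacle; the substantive observation is simply that the dictionary construction of \autoref{result:general->dict} is the dominant extra cost and that the query-evaluation pipeline composes cleanly with the key and inverse-key translations at its ends.
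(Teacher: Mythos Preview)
Your proposal is correct and follows the same route as the paper: build a dictionary via \autoref{result:general->dict} at cost \(\bigO(\lenof{\db}^2)\), translate the input with \(\opnameKey\), run the dictionary-setting algorithm, and translate the output with \(\opnameKeyOut\). One small caveat: your remark that ``our queries do not mention constants'' is fine for the conjunctive-query results but not for the semijoin algebra, whose selection operator may compare against a domain value; the paper handles this separately by noting that, with \(\bigO(\IN)\) processors, one can brute-force via \(\opnameEqualConst\) and \(\opnameKey\) whether a given constant occurs in the database and, if so, obtain its key.
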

      \noindent We note that constants occurring in select subqueries
      can be handled as follows. By  using \(\bigO(\IN)\) processors,
      which brute force it can be checked using \opnameEqualConst, and
      \opnameKey whether the constant occurs in the database and to
      which number it is mapped.
      In case no token representation is found, the constant does not
      occur in the database and the subquery result is
      empty. Otherwise the obtained number can be used to evaluate the
      select subquery in a straightforward fashion.

\subsection{Query Evaluation in the Ordered Setting}\label{subsection:ctp:ordered}

We finally consider the ordered setting, where the elemental operation
\opnameLessThan is available to access the database. We note that the
ordered setting does not guarantee arrays that store the tuples of a
relation in any particular order.

The lower bounds from
\autoref{result:dbops-lowerbound-general} do not hold anymore in the
presence of \opnameLessThan, and therefore we hoped that its presence
would allow us  to find algorithms that improve the bounds of the general setting. Unfortunately, we were not able to find such algorithms.

An obstacle is that we do not know whether there is a \ctpalgo
for comparison-based padded sorting on CRCW PRAMs. Such algorithms
exist for the \emph{parallel comparison model} \cite[Section
4]{AzarV87}, but to the best of our knowledge the best deterministic
PRAM algorithms have non-constant time bounds
\cite{DBLP:conf/spaa/HagerupR93,DBLP:conf/esa/ChongR98}. Therefore, it
is not clear how to sort a relation array, in general.

However, if we assume that  suitable
orderings for the database relations \emph{do exist}, the results of \autoref{result:acyclic}, \autoref{result:eval-ghw}
        and \autoref{result:worst-case-eval-dict} can be transferred with
        the same bounds by a reduction to the dictionary setting.

In the \emph{attribute-wise ordered setting} (\emph{a-ordered
  setting}), for each relation $R$ and each attribute $A$ of $R$,
there is some $\attsetX$-ordered array representing $R$, such that $A$ is the
first attribute of $\attsetX$.

\begin{lem}\label{result:aordered->dict}
	For every \(\varepsilon>0\), there is a \ctpalgo that, given a
        database \(\db\) in the attribute-wise ordered setting,
        computes a dictionary with  work and space
        $\bigO(\lenof{\db}^{1+\varepsilon})$ on an Arbitrary CRCW PRAM.
       \end{lem}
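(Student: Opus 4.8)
The plan is to mimic the proof of \autoref{result:general->dict}, but replace the quadratic duplicate-elimination step by a sorting-based step that exploits the available orderings. First I would, for each relation $R$ and each attribute $A$ of $R$, take the given $\attsetX$-ordered array representing $R$ in which $A$ is the first attribute of $\attsetX$; projecting it to $A$ (using \autoref{alg:project}, or simply reading off the first attribute) yields, after the projection is made concise, a \emph{fully ordered} concise array $\arr{B}_{R,A}$ that lists the distinct values occurring for attribute $A$ in $R$, in increasing order, using token representations. Since the underlying array is already sorted on $A$, this costs only $\bigO(\lenof{R}^{1+\varepsilon})$ work and space per attribute, and there are only constantly many relation/attribute pairs, so the total is $\bigO(\lenof{\db}^{1+\varepsilon})$.

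The core difficulty is that the domain value sets of the various attributes may overlap, so merely concatenating the $\arr{B}_{R,A}$'s does not yield a key function that is consistent across attributes. To fix this I would merge all the sorted lists $\arr{B}_{R,A}$ into one globally sorted array $\arr{D}$ of size $\bigO(\lenof{\db})$ and then deduplicate it. Merging constantly many sorted arrays can be done in constant time with work and space $\bigO(\lenof{\db}^{1+\varepsilon})$: e.g.\ use \autoref{result:alg-search-fullylinkedB} (or \autoref{result:alg-search-fullylinkedA}) to compute, for each element of each list, its rank among all the other lists by an $\bigO(\lenof{\db}^\varepsilon)$-work search, add these ranks to get a global position, and write each element there; then \autoref{result:predsucc} provides full links and \autoref{result:alg-deduplicate-ordered} deduplicates the merged fully ordered array with work and space $\bigO(\lenof{\db}^{1+\varepsilon})$, linking every cell to its representative. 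The key $\keyf(a)$ is then the index of the representative cell of $a$ in $\arr{D}$ (or, if one wants keys in an initial segment, the number of distinct values no larger than $a$, obtained by a further order-preserving compaction via \autoref{result:alg-compact}); in either case $\keyf$ is injective into $\range{\cval\lenof{\db}}$ for a suitable constant $\cval$, and the $\opnameLessThan$ operation guarantees that the order used for merging and deduplication agrees for equal values appearing under different attributes.

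It remains to implement $\opnameKey$ and $\opnameKeyOut$ in constant time by a single processor, exactly as in \autoref{result:general->dict}: given $(R,i,j)$, a single processor first locates the value $\arr{R}[i].t[j]$ — more precisely its position in the appropriate sorted array $\arr{B}_{R,A_j}$ — by an $\bigO(\lenof{\db}^\varepsilon)$-work search in that fully ordered, fully linked array using \opnameLessThan, then follows the recorded merge-offset and representative links into $\arr{D}$ to obtain $\keyf(a)$; and $\opKeyOut{k,i,j}$ just reads the token $(R',i',j')$ stored in the representative cell $\arr{D}[k]$ and invokes $\opOutput{R',i',j';i,j}$. The main obstacle I anticipate is arranging the constant-time sorted merge of the constantly many lists within $\bigO(\lenof{\db}^{1+\varepsilon})$ work while keeping all the bookkeeping links (from each $\arr{B}_{R,A}$-cell into $\arr{D}$, and from each $\arr{D}$-cell to its representative) consistent, but this follows from the order-based search and deduplication tools of \autoref{section:ctp:se-searching} together with \autoref{result:predsucc}; everything else is routine.
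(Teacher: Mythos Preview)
Your overall strategy---use the per-attribute orderings to obtain sorted value lists and then deduplicate with the order-based tools of \autoref{section:ctp:se-searching} rather than the quadratic brute force---is sound and matches the paper in spirit. However, your implementation of $\opnameKey$ has a genuine gap.

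A dictionary must allow a \emph{single processor} to execute $\opKey{R,i,j}$ in \emph{constant time}. You propose that a single processor ``locates the value $\arr{R}[i].t[j]$ \ldots\ by an $\bigO(\lenof{\db}^\varepsilon)$-work search'' in the appropriate fully ordered array. But the $n^\varepsilon$-ary search of \autoref{result:alg-search-fullylinkedB} needs $n^\varepsilon$ processors to finish in $\bigO(1)$ rounds; one processor alone would need $\Omega(\lenof{\db}^\varepsilon)$ steps. So as written, $\opnameKey$ is not a constant-time single-processor operation. The fix is to precompute, during construction of the dictionary, a direct lookup: retain the unordered array $\arr{D}$ from \autoref{result:general->dict}, in which the token $(R,i,j)$ sits at an index that a single processor can compute arithmetically from $i$, $j$ and the relation sizes, and store in each cell of $\arr{D}$ a pointer to its chosen representative. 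The paper explicitly remarks that this unordered $\arr{D}$ ``is still required to carry out $\opnameKey$ in constant time (by a single processor).'' Then $\opnameKey$ is just an address calculation followed by one link traversal.

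As a secondary point, the paper does not globally merge the per-attribute lists. It keeps the sorted pieces $\arr{C}_{R,j}$ separate, deduplicates each one via \autoref{result:alg-deduplicate-ordered}, fixes an arbitrary linear order on the pairs $(R,j)$, and declares the representative of a value $a$ to be its representative in the \emph{first} $\arr{C}_{R,j}$ in which $a$ occurs (determined, for each cell of $\arr{D}$, by constantly many searches in fully linked ordered arrays). Your merge-by-rank-summing route also yields the stated bounds, but the paper's approach sidesteps that extra bookkeeping.
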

       \begin{proof}
         	In the ordered setting an algorithm can essentially proceed similarly to the algorithm for the general setting detailed above.
 	To yield the better upper bound of
        \(\bigO(\lenof{\db}^{1+\varepsilon})\) it would suffice to use
        the algorithm for deduplication for ordered arrays (cf.\ \autoref{result:alg-deduplicate-ordered}), since this is the only operation with a non-linear work bound.
 	However, the array \(\arr{D}\) is not necessarily ordered.

 	We will show how to construct another array \(\arr{C}\) that, like \(\arr{D}\), contains all (token representations of) data values in \(\db\), and is piecewise ordered.
	It is then possible to search, for each cell of \(\arr{D}\), a representative cell in \(\arr{C}\).
	We note that the unordered version \(\arr{D}\) is still required to carry out \(\opnameKey\) in constant time (by a single processor).

 	To construct \(\arr{C}\), the algorithm first computes ordered arrays \(\arr{C}_{R,j}\) which correspond to the arrays \(\arr{B}_{R,j}\).
 	Let \(R\in\schema\) and \(j\in\range{\ar(R)}\). Furthermore, let \(\attsetX_j\) be an ordered list of all attributes of \(R\) that starts with the \(j\)-th attribute \(\attA_j\) of \(R\).
 	The array \(\arr{C}_{R,j}\) can now be derived from the array \(\arr{A}_{R,\attsetX_j}\) as follows with \(\lenof{R}\) processors:
 	Processor \(p\) writes the token representation \((R,i,j)\) for the value of the \(j\)-th attribute of the \(i\)-tuple of \(R\) into cell \(\arr{C}_{R,j}[p]\) if \(i = \arr{A}_{R,\attsetX_j}[p]\).
 	Since \(\arr{A}_{R,\attsetX_j}\) induces the lexicographical order w.r.t.\ \(\attsetX_j\) and \(\attsetX_j\) starts with \(\attA_j\), the array \(\arr{C}_{R,j}\) is fully ordered w.r.t.\ \(\{\attA_j\}\).
 	The array \(\arr{C}\) is then the concatenation of all \(\arr{C}_{R,j}\).
 	Computing \(\arr{C}\) requires linear work and space.

	To determine representatives, fix an arbitrary linear order on the set of all pairs \((R,j)\) where \(R\) is a relation symbol from \(\schema\) and \(j\in\range{\ar(R)}\).
	For each pair \((R,j)\) the algorithm invokes the algorithm of
        \autoref{result:alg-deduplicate-ordered} to find representatives for all data values in the subarray \(\arr{C}_{R,j}\) of \(\arr{C}\).
	Since there are only constantly many pairs \((R,j)\) this can be done in constant time with \(\bigO(\lenof{\db}^{1+\varepsilon})\) work and \(\bigO(\lenof{\db})\) space thanks to \autoref{result:alg-deduplicate-ordered}  and each \(\arr{C}_{R,j}\) being fully ordered.
	The representative for a cell \(\arr{D}[\ell]\) is then the representative cell for \(\arr{D}[\ell]\) of the subarray \(\arr{C}_{R,j}\) for the smallest pair \((R,j)\) for which such a representative cell exists in \(\arr{C}_{R,j}\).

	The operations \(\opnameKey\) and \(\opnameKeyOut\) can then be implemented analogously to the general setting.
	That is, instead of returning an index of \(\arr{D}\), \(\opKey{R,i,j}\) returns the index of the representative cell in \(\arr{C}\) for the cell of \(\arr{D}\) containing \((R,i,j)\), and \(\opKeyOut{k,i,j}\) can use the token representation in \(\arr{C}[k]\) to output the proper value.
       \end{proof}

By combining \autoref{result:eval-semijoin-algebra} with \autoref{result:aordered->dict}, and since $\IN^{1+\varepsilon}$ is dominated by the upper bounds of \autoref{result:acyclic}, \autoref{result:eval-ghw}
        and \autoref{result:worst-case-eval-dict}, we can conclude
        following result.
       \begin{cor}\label{result:queries-a-ordered}
		\hfill
         \begin{enumerate}[(a)]
         \item   For every $\varepsilon>0$ and each query $\query$ of the semijoin algebra there is a \ctpalgo that,
  given a database $\db$, evaluates
	$\queryresult{\query}{\db}$ on an Arbitrary CRCW PRAM with
        work and space $\bigO(\IN^{1+\varepsilon})$ in the
        attribute-wise ordered setting.
        \item  The results of \autoref{result:acyclic}, \autoref{result:eval-ghw}
        and \autoref{result:worst-case-eval-dict} hold in the
        attribute-wise ordered setting with the same upper bounds.
         \end{enumerate}
       \end{cor}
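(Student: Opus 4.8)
The plan is to simply combine the reduction to the dictionary setting from \autoref{result:aordered->dict} with the corresponding results that already hold in the dictionary setting. First I would handle part~(a): given a query $\query$ of the semijoin algebra and a database $\db$ in the attribute-wise ordered setting, apply \autoref{result:aordered->dict} to compute, in constant time with work and space $\bigO(\lenof{\db}^{1+\varepsilon})$, a dictionary for $\db$. Using the operation $\opnameKey$ of that dictionary, each array representing a relation $R$ of $\db$ is translated in constant time with linear work into an array representing $\keyf(R)$ in the dictionary setting (as already observed right before \autoref{result:general->dict}). Then invoke the algorithm of \autoref{result:eval-semijoin-algebra} on $\keyf(\db)$; it runs with work $\bigO(\IN)$ and space $\bigO(\IN^{1+\varepsilon})$. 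Finally, translate the query result back by replacing each key $k$ in an output cell by $\keyf^{-1}(k)$ via $\opnameKeyOut$, which costs work and space linear in the (bounded) size of the output array, i.e.\ $\bigO(\IN)$. Since $\lenof{\db}\in\bigO(\IN)$, the dominating term among $\bigO(\IN^{1+\varepsilon})$ and $\bigO(\lenof{\db}^{1+\varepsilon})$ is $\bigO(\IN^{1+\varepsilon})$, which yields the claimed bound; note that we need a preliminary compaction of the input arrays (or the assumption that they are $\lambda$-compact) only insofar as \autoref{result:eval-semijoin-algebra} already requires a $\lambda$-compact representation, and the dictionary-translated arrays are compact by construction.

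For part~(b) the argument is analogous. Given a $\padeps$-compact representation of a database $\db$ in the attribute-wise ordered setting, first compute a dictionary using \autoref{result:aordered->dict} with work and space $\bigO(\lenof{\db}^{1+\varepsilon})$, then translate each input array into the dictionary setting with linear work (preserving $\padeps$-compactness, or re-compacting with \autoref{result:alg-compact} if desired). Now apply the relevant dictionary-setting algorithm: \autoref{result:acyclic}~(a)/(b) for (free-connex) acyclic conjunctive queries, \autoref{result:eval-ghw}~(a)/(b) for conjunctive queries parameterised by (free-connex) generalised hypertree width, and \autoref{result:worst-case-eval-dict} for join queries. Each of these runs with work and space of the stated form, e.g.\ $\bigO((\IN\cdot\OUT)^{1+\varepsilon})$, $\bigO((\IN+\OUT)^{1+\varepsilon})$, $\bigO((\IN^{\ghw(q)}\cdot\OUT)^{1+\varepsilon})$, $\bigO((\IN^{\fghw(q)}+\OUT)^{1+\varepsilon})$, or $\bigO((\agmbound+\lenof{\db})^{1+\varepsilon})$, respectively. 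Translating the result back via $\opnameKeyOut$ costs work and space $\bigO(\OUT)$ (the result array has size $\bigO(\OUT)$), which is absorbed. Crucially, the additive cost $\bigO(\lenof{\db}^{1+\varepsilon}) = \bigO(\IN^{1+\varepsilon})$ of the dictionary construction is dominated by each of these upper bounds, since every one of them is at least $\bigO(\IN^{1+\varepsilon})$ (all contain a term $\IN^{1+\varepsilon}$ or larger, and $\lenof{\db}\in\bigO(\IN)$). Hence the bounds transfer verbatim, with no extra additive term — in contrast to the general setting (\autoref{result:acyclic-etc-general}), where the dictionary costs $\bigO(\IN^2)$ and must be added.

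The only mildly delicate point, and the one I would spell out carefully, is the bookkeeping that the back-translation via $\opnameKeyOut$ really does produce the query result over the \emph{original} data values and not over the keys: since $\keyf$ is injective on $\adomof{\db}$ and the queries considered do not refer to the order (as emphasised in \autoref{section:ctp:db-basics}), $\queryresult{\query}{\keyf(\db)} = \keyf(\queryresult{\query}{\db})$ holds as a set equality, so applying $\keyf^{-1}$ tuple-wise to the computed array yields a correct (and still concise) representation of $\queryresult{\query}{\db}$. I do not anticipate a genuine obstacle here; the corollary is essentially an assembly of already-established pieces, and the proof in the paper can be a short paragraph of the form ``combine \autoref{result:aordered->dict} with \autoref{result:eval-semijoin-algebra} (resp.\ \autoref{result:acyclic}, \autoref{result:eval-ghw}, \autoref{result:worst-case-eval-dict}), observing that $\bigO(\IN^{1+\varepsilon})$ is dominated by the respective bounds.''
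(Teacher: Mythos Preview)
Your proposal is correct and follows essentially the same approach as the paper: the paper's justification is the single sentence ``By combining \autoref{result:eval-semijoin-algebra} with \autoref{result:aordered->dict}, and since $\IN^{1+\varepsilon}$ is dominated by the upper bounds of \autoref{result:acyclic}, \autoref{result:eval-ghw} and \autoref{result:worst-case-eval-dict}, we can conclude the following result.'' Your write-up is in fact more explicit than the paper's, spelling out the back-translation via $\opnameKeyOut$ and the semantic correctness argument, but the underlying idea is identical.
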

It should be mentioned that a relaxed form of the dictionary setting
where data values can be of polynomial size in $\lenof{\db}$ can be
easily transformed into the a-ordered setting with the algorithm of \autoref{result:sorting}.

\section{Conclusion}\label{section:ctp:conclusion}

This article is meant as a first study on work-efficient \ctpalgos for
query evaluation and many questions remain open. The results are very
encouraging as they show that quite work-efficient \ctpalgos for
query evaluation are possible. In fact, the results give a hint at
what could be a good notion of \emph{work-efficiency} in the context
of constant-time parallel query evaluation. Our impression is that
work-optimality is very hard to achieve in constant time and that
query evaluation should be considered as
work-efficient for a query language, if there are constant-time
parallel algorithms with $\bigO(T^{1+\varepsilon})$ work, for every
$\varepsilon>0$, where $T$ is the best sequential time of an
evaluation algorithm. Of course, it would be nice if this impression
could be substantiated by lower bound results, but that seems to be challenging.

It also remains open whether \(\bigO(1)\)-time parallel versions of more advanced query evaluation algorithms can be implemented.
An intriguing example is the PANDA algorithm \cite{DBLP:journals/theoretics/KhamisNS25}.
A  \(\bigO(1)\)-time parallel version of PANDA would yield significant improvements over our results on conjunctive and natural join queries.
However, while we believe that \ctpalgos for PANDA's operations \cite[Section~5.1]{DBLP:journals/theoretics/KhamisNS25} can be implemented, PANDA also computes statistics, e.g.\ the degree of input values, to decide which operations are invoked.
Due to the limitation of \(\bigO(1)\)-time parallel algorithms (e.g., \autoref{prop:lb-parity}), it is not possible to compute these statistics exactly, within our framework.
It is likely that approximations of these statistics would still be sufficient, but verifying this claim is beyond the scope of this article.

Some results presented in this article can be slightly improved. Indeed, Daniel Albert has shown in his Master's thesis that predecessor and successor links can be computed with work and space $\bigO(n\log (n))$ for arrays of length $n$, improving \autoref{result:predsucc} \cite{Albert24}. The algorithm uses the technique from \cite[Theorem~1]{DBLP:journals/siamcomp/FichRW88} to find the first \(1\) in an array of length $n$ in constant time with work $\bigO(n)$.\footnote{A more precise description can be found on pages 33 and 34 in the appendix of \cite{Albert26}. The description is for an input with $\log n$ items, though.} As a consequence deduplication
(\autoref{result:alg-deduplicate-ordered}), and the evaluation of   \opnameProjection[\attsetX] (\autoref{alg:project}) are possible with work $\bigO(n\log (n))$ (as compared to $\bigO(n^{1+\epsilon})$). However, the lower bound from \autoref{result:dbops-lowerbound} shows that such an improvement is not possible for the operation \(\opnameSemiJoin\).

There remain many questions for future work. Is it possible to improve some of the work bounds of the form $\bigO(T^{1+\varepsilon})$ to $\bigO(T\log(T))$? Are there more lower bound results preventing this? Can the work bounds be improved for particular kinds of join queries?

And of course, we are still interested in the dynamic setting, that is if the database is modified, but auxiliary data can be used.

\section*{Acknowledgements}
\noindent%
	We are grateful to Uri Zwick for clarifications regarding results in \cite{GoldbergZ95} and to Jonas Schmidt and Jennifer Todtenhoefer for careful proofreading.
	We thank Martin Dietzfelbinger for helpful discussions.
	Furthermore, we thank the reviewers of ICDT for many insightful suggestions. Finally, we thank the reviewers of this journal article for their insightful comments. In particular, we are grateful for the suggestion to state \autoref{result:eval-query-plan} as a separate result about the evaluation of query plans, and to derive many of the subsequent results from it.

\bibliographystyle{alphaurl}
\bibliography{main}

\newpage
\appendix

\section{Revisiting Linear Compaction and Approximate Prefix Sums}\label{sec:approx-prefix-sums}

The purpose of this section is to revisit the proof for \autoref{result:lin-compaction} sketched by Goldberg and Zwick \cite{GoldbergZ95} and to analyse the space requirements of the compaction algorithm.

Compaction is closely related to prefix sums, as discussed in \autoref{subsec:aprroxcompact}.
Goldberg and Zwick \cite{GoldbergZ95} showed that approximate prefix sums can be computed in constant time using a CRCW PRAM with polynomial work and space.
This then gives rise to an algorithm for linear compaction as stated in \autoref{result:lin-compaction}.
Before we revisit the algorithm and analyse its required space, let us give a formal definition of approximate prefix sums.

\begin{defi}
	For any \(\padeps > 0\), a sequence \(s_1,\ldots,s_n\) of natural numbers is called a \emph{consistent \(\padeps\)-approximate prefix sums sequence} of a sequence \(a_1,\ldots,a_n\) of natural numbers, if for every \(i\in\range{n}\) the following conditions hold.
	\begin{enumerate}
	\item \(\sum_{j=1}^i a_j \le s_i \le (1+\padeps)\sum_{j=1}^i a_j\)
	\item \(s_i - s_{i-1} \ge a_i\) if \(i>1\) and \(s_i\ge a_i\) if \(i=1\).
	\end{enumerate}
	The first condition ensures that each \(s_i\) is an approximation of the exact prefix sum and the second condition ensures consistency, that is, that the difference between consecutive (approximate) prefix sums is at least as large as the respective number.
\end{defi}
Since we consider only consistent \(\padeps\)-approximate prefix sums we will omit the term \enquote{consistent} from here on out.
We restate the result of Goldberg and Zwick from \autoref{result:lin-compaction}.

\RestateGo{\restateprefixsums}
\prefixsums*

To prove \autoref{result:prefix-sums}, Goldberg and Zwick proceed in three major steps.
Based on an algorithm for \emph{approximate counting} by Ajtai~\cite{DBLP:conf/dimacs/Ajtai93}, they first show that the sum of \(n\) integers can be approximated in constant time with polynomial work and space.
In the second step this is used to compute approximate prefix sums with polynomial work and space.
The third step reduces the required work and space to \(\bigO(n^{1+\varepsilon})\).
Our presentation will proceed analogously.
The required result on approximate counting is the following.

\begin{propC}[{\cite[Theorem~2.1]{DBLP:conf/dimacs/Ajtai93}}]\label{cited-result:counting}
	For every integer \(a > 0\) there is an \ctpalgo that, given an array \(\arr{A}\) whose cells contain either \(1\) or \(0\), computes a number \(k\) satisfying \(\#_1(\arr{A}) \le k \le (1+\padeps(\lenof{\arr{A}}))\#_1(\arr{A})\) where \(\#_1(\arr{A})\) is the number of cells of~\(\arr{A}\) containing a \(1\) and \(\padeps(n) = (\log n)^{-a}\).
	The algorithm requires polynomial work and space on a Common CRCW PRAM.
\end{propC}

Let us point out that \autoref{cited-result:counting} is phrased with our application in mind.
Ajtai actually stated it in terms of first-order formulas over the signature \((+,\times,\le)\) where \(+,\times,\le\) are binary relation symbols with the usual intended meaning.
For every \(n\) let \(\calM_n\) be the structure over \((+,\times,\le)\) with universe \(M_n = \range[0]{n-1}\) and the usual interpretation of the arithmetic relations \(+\),\(\times\), and \(\le\).
Ajtai proved the following.

For every \(a'>0\) there is a formula \(\varphi(x,Y)\) over \((+,\times,\le)\) with a free variable \(x\) and a free unary relation variable \(Y\), such that for every integer \(n>0\), \(m\in M_n\), and \(A\subseteq M_n\) the following holds.
\begin{enumerate}[ref={Statement~(\arabic*)}]
	\item If \(\lenof{A} \le (1 - \padeps'(n))m\) then \(\calM_n \not\models \varphi(m,A)\); and
	\label{ajtai-smaller}
	\item if \(\lenof{A} \ge (1 + \padeps'(n))m\) then \(\calM_n \models \varphi(m,A)\).
	\label{ajtai-larger}
\end{enumerate}
Here \(\padeps'(n)\) is of the same form as \(\padeps(n)\) in \autoref{cited-result:counting}, that is, \(\lambda'(n) = (\log n)^{-a'}\).

We will briefly discuss how this implies \autoref{cited-result:counting} as stated above.
Let $a$ be given and $a'$ be such that \(\lambda'(n)\le \frac{1}{3}\lambda(n)\), for sufficiently large $n$.
Let \(\arr{A}\) be an array with entries from \(\{0,1\}\).
We fix \(n = \lenof{\arr{A}}\) and the interpretation \(A = \{i\in M_n\mid \arr{A}[i+1] = 1\}\) for \(Y\).
Further, we assume that \(\lenof{A}\ge (\log n)^a\) holds, and that \(n\) is sufficiently large. For small \(n\), \(\#_1(\arr{A})\) can be determined exactly, in a  trivial way.
We discuss the case \(\lenof{A}< (\log n)^a\), and how a PRAM can distinguish these two cases, in the end.

It is well-known that first-order formulas can be evaluated by a PRAM in constant time with a polynomial number of processors and space (cf., e.g.\ Immerman \cite[Lemma~5.12]{Immerman}, the space bound is stated in the proof).
Thus, it is possible to test, for all \(0\le m\le n-1\) in parallel, whether \(\calM_n \models \varphi(m,A)\) holds, and to write the results into an array \(\arr{C}\) of size \(n\).%

Observe that \(\calM_n \models \varphi(0,A)\) holds and let \(m_1\) be the largest number from \(M_n\) such that \(\calM_n \models \varphi(m,A)\) holds for all \(m\le m_1\).
Given the array \(\arr{C}\), the number \(m_1\) can easily be determined with quadratic work and linear space.

We claim that \(k = (1+\padeps'(n))(m_1 + 1)\) satisfies \(\lenof{A}\le k \le (1+3\padeps(n))\lenof{A}\).
Note that the factor \(3\) in the upper bound is of no consequence, since \(a\) can always be chosen sufficiently large.
We first show that \(\lenof{A}\le k\) holds.
If \(m_1 + 1 = n\) then \(\lenof{A}\le n \le k\) holds trivially.
Otherwise, \(m_1 + 1 \le n-1\) and we have \(\calM_n\not\models\varphi(m_1+1,A)\) due to the choice of \(m_1\).
Thus, by contraposition of \ref{ajtai-larger} we know that \(\lenof{A} < (1+\padeps'(n))(m_1+1) = k\).

Towards \(k \le (1+\padeps(n))\lenof{A}\), note that the contraposition of \ref{ajtai-smaller} yields \(\lenof{A} > (1-\padeps'(n))m_1\) since \(\calM_n \models \varphi(m_1,A)\) holds.
Because \(1-\padeps'(n) < 1\), this implies \(\lenof{A} + 1 \ge (1-\padeps'(n))(m_1+1)\).
Therefore, we can conclude
\[
	k = (1+\padeps'(n))(m_1 + 1) = \frac{(1+\padeps'(n))(1-\padeps'(n))(m_1 + 1)}{1-\padeps'(n)} \le \frac{1+\padeps'(n)}{1-\padeps'(n)}(\lenof{A} + 1).
\]

We claim that \(\frac{1+\padeps'(n)}{1-\padeps'(n)}\le 1+\padeps(n)\) holds for sufficiently large $n$.
Indeed, we have
\begin{align*}
\frac{1+\padeps'(n)}{1-\padeps'(n)} \le 1+\padeps(n) \Leftrightarrow~&1+\padeps'(n) \le (1-\padeps'(n))(1+\padeps(n)) \\
\Leftrightarrow~&1+\padeps'(n) \le 1 + \padeps(n) - \padeps'(n) - \padeps(n)\padeps'(n)\\
\Leftrightarrow~& 0 \le \padeps(n) - 2\padeps'(n) - \padeps(n)\padeps'(n)\\
\Leftrightarrow~& 0 \le \padeps(n) - 2\padeps'(n) - \padeps(n)\padeps'(n)
\end{align*}
where the last inequality holds, because \(\padeps(n) \le 1\) and \(3\padeps'(n) \le \padeps(n)\) by our choice of \(\padeps(n)\).

Therefore, we have \( k  \le (1+\padeps(n))(\lenof{A}+1)\).
Finally, we obtain \((1+\padeps(n))(\lenof{A}+1) \le (1+3\padeps(n))(\lenof{A})\) because
\[ (1+\padeps(n))(\lenof{A}+1) = 1 + \padeps(n) + \lenof{A} + \padeps(n)\lenof{A} = \left(1+\padeps(n) + \frac{1}{\lenof{A}} + \frac{\padeps(n)}{\lenof{A}}\right)\lenof{A},\]
and we have \(\frac{1}{\lenof{A}} \le \padeps(n)\) as well as \(\frac{\padeps(n)}{\lenof{A}} \le \padeps(n)\), since \(\lenof{A}\ge (\log n)^a\) by assumption and \(\padeps(n) = \frac{1}{(\log n)^a} \le 1\).

We conclude the proof sketch for \autoref{cited-result:counting} by stating the following lemma, which justifies our assumption \(\lenof{A}\ge (\log a)^a\):
since PRAMs can evaluate first order formulas in constant time with a polynomial number of processors and polynomial space, it indeed states that PRAMs can compute \(\lenof{A}\) exactly, or detect that \(\lenof{A}\ge (\log n)^a\) holds.

\begin{lemC}[{\cite[Lemma~2.2]{DBLP:conf/dimacs/Ajtai93}}]
	For every \(a > 0\), there is a formula \(\psi(x,Y)\) over \((+,\times,\le)\) with a free variables \(x\) and a free unary relation variable \(Y\), such that, for all \(A\subseteq M_n\), \(\calM_n\models \psi(m,A)\) if and only if \(m = \min\{\lenof{A}, (\log n)^a\}\).
\end{lemC}
We note that \cite{DBLP:conf/dimacs/Ajtai93} gives a self-contained proof but attributes this result to \cite{ParisW87}. However in the latter paper the setting is slightly different, in that it considers arithmetic over the natural numbers.

We now turn towards the first step of the proof for \autoref{result:prefix-sums}.
It is captured by the following result.
\begin{propC}[{\cite[Theorem~3.1]{GoldbergZ95}}]\label{cited-result:sums}
	For every fixed integer \(a > 0\), there is a \ctpalgo that, given an array \(\arr{A}\)  containing a sequence \(a_1,\ldots,a_{\lenof{\arr{A}}}\) of natural numbers of size at most \(\bigO(\log \lenof{\arr{A}})\), computes an integer \(s\) satisfying \(\sum_{i=1}^{\lenof{\arr{A}}} a_i\le s \le (1+\padeps(n))\sum_{i=1}^{\lenof{\arr{A}}}a_i\) where \(\padeps(n) = (\log n)^{-a}\).
	The algorithm requires polynomial work and space on a Common CRCW PRAM.
\end{propC}

For a number \(a_i\), we denote by \(a_{i,j}\) the bit at position \(j\) of the binary representation of \(a_i\).
The idea for \autoref{cited-result:sums} is to approximate, for each position \(j\), how many  numbers $a_i$ have a 1 at  position \(j\).
That is, for each \(j\), the algorithm computes an approximation \(k'_j\) of \(k_j = \sum_{i=1}^{\lenof{\arr{A}}} a_{i,j}\) using \autoref{cited-result:counting}.
For this purpose, an algorithm can write the binary representations of the numbers \(a_1,\ldots,a_n\) into a two-dimensional array of size \(\bigO((\log n)\cdot n)\) where \(n = \lenof{\arr{A}}\), filling up empty cells with \(0\).
Assuming the numbers are written row-wise into the array, \autoref{cited-result:counting} can be applied to each column in parallel.
Overall this requires polynomial work and space.

The resulting numbers \(k'_0,\ldots,k'_m\) with \(m\in\bigO(\log n)\) satisfy \(k_j\le k'_j \le (1+\padeps(n))k_j\) and have size \(\bigO(\log n)\).
Note that \(\sum_{j=1}^n k_j\cdot 2^j = \sum_{i=1}^{n}a_i\), and, consequently, for \(s = \sum_{j=1}^n k'_j\cdot 2^j\) we have that \(\sum_{i=1}^n a_i\le s \le (1+\padeps(n))\sum_{i=1}^n a_i\) holds.
Since there are only \(\bigO(\log n)\) many numbers \(k'_j\) their exact sum, that is \(s\), can be computed with polynomial work and space on a Common CRCW PRAM.\footnote{%
	This seems to be folklore. It has been proved for polynomial-size, constant-depth circuits in, e.g., \cite[Theorem~1.21]{DBLP:books/daglib/0097931} and these circuits can be simulated by a Common CRCW PRAM in constant time with polynomial work and space, see, e.g., \cite[Theorem~10.9]{DBLP:books/aw/JaJa92}, where the space bound is implicit in the proof.%
}

The next step is to show that prefix sums can be computed in constant time (with polynomial work and space).
Note that this mainly involves establishing consistency.

\begin{propC}[{\cite[Theorem~3.2]{GoldbergZ95}}]\label{cited-result:prefix-sums-poly}
	For every fixed \(a > 0\), there is a \ctpalgo that, given an array \(\arr{A}\) containing a sequence \(a_1,\ldots,a_{\lenof{\arr{A}}}\) of natural numbers of size at most \(\bigO(\log \lenof{\arr{A}})\), computes consistent \(\padeps\)-approximate prefix sums of \(a_1,\ldots,a_{\lenof{\arr{A}}}\) where \(\padeps(n) = (\log n)^{-a}\).
	The algorithm requires polynomial work and space on a Common CRCW PRAM.
\end{propC}

The algorithm for \autoref{cited-result:prefix-sums-poly} uses \emph{approximate summation trees} which were originally introduced by Goodrich, Matias, and Vishkin \cite{DBLP:conf/soda/GoodrichMV94,DBLP:conf/ipps/GoodrichMV93} for computing prefix sums using randomisation.

\begin{defiC}[{\cite[Sections~2 and~3, resp.]{DBLP:conf/soda/GoodrichMV94,DBLP:conf/ipps/GoodrichMV93}}]\label{definition:sumtree}
	Let \(\padeps\) be an accuracy function.
	A \emph{\(\padeps\)-approximate summation tree} \(\tree\) for a sequence \(a_1,\ldots,a_n\) of integers is a complete, ordered binary tree with \(n\) leaves in which every node \(\tnode\) is labelled with an integer \(\tilde{s}(\tnode)\) such that
	\begin{itemize}
	\item the \(i\)-th leaf (in, e.g., pre-order) is labelled with \(a_i\), for all \(1\le i\le n\); and
	\item \(s(\tnode) \le \tilde{s}(\tnode) \le (1+\padeps(n))s(\tnode)\) holds for each node \(v\).
	Here \(s(\tnode)\) is the sum \(\sum_{k=i}^j a_k\) where \(a_i,\ldots,a_j\) are the leaf labels of the subtree rooted at \(\tnode\).
	\end{itemize}
	A \(\padeps\)-approximate summation tree is \emph{consistent} if \(\tilde{s}(\tnode) \ge \tilde{s}(\tnode_\ell) + \tilde{s}(\tnode_r)\) where \(\tnode_\ell\) and \(\tnode_r\) denote the children of \(\tnode\) holds for all inner nodes \(\tnode\).
\end{defiC}

To compute \(\padeps\)-approximate prefix sums the algorithm first constructs a consistent \(\padeps\)-approximate summation tree and then derives the prefix sums from it.

Indeed, given a \(\padeps\)-approximate summation tree \(T\) for a sequence \(a_1,\ldots,a_n\), consistent prefix sums can be derived as follows.
For each \(i\) let \(V_i\) be the set of nodes in \(T\) that occur as left children of a node on the path from the leaf for \(a_i\) to the root but are not on the path themselves.
That is, a node \(\tnode\) is in \(V_i\) if and only if \begin{enumerate}
\item \(\tnode\) is the left child of a node \(\tnodeB\);
\item \(\tnode\) does not occur on the path from the node for \(a_i\) to the root; but
\item \(\tnodeB\) does occur on the path from the node for \(a_i\) to the root.
\end{enumerate}
Observe that the subtrees rooted at the nodes in \(V_i\) induce a partition of \(a_1,\ldots,a_{i-1}\).
Thus, for \(b_i = \sum_{\tnode\in V_i}\tilde{s}(\tnode)+a_i\) we have
\[ \sum_{j=1}^i a_j \le b_i \le (1+\padeps(n))\sum_{j=1}^i a_j.\]
Since \(V_i\) contains at most \(\log n\) nodes, the sum \(b_i\) can be computed with polynomial work and space, cf.\ the proof of \autoref{cited-result:sums}.
For a proof that the sequence \(b_1,\ldots,b_n\) is consistent, we refer to \cite[Lemma~3.2]{DBLP:conf/soda/GoodrichMV94}.%

It remains to revisit how a PRAM can construct a consistent \(\padeps\)-approximate summation tree.
For any integer \(b > 0\), a \(\padeps'\)-approximate summation tree \(T'\) where \(\padeps'=(\log n)^{-b}\) can be constructed with polynomial work and space by applying the algorithm of  \autoref{cited-result:sums} for each inner node of \(T'\).
To obtain a consistent \(\padeps\)-approximate summation tree \(T\) the label of every inner node of \(T'\) is multiplied with \((1+\padeps'(n))^h\) where \(h\) is the height of the node.
More precisely, \(\tilde{s}(\tnode) = (1+\padeps'(n))^h\tilde{s}(\tnode')\) where \(\tnode'\) is the corresponding node of \(\tnode\) in \(T'\) and \(h\) is the height of \(\tnode\) in \(T\).

We follow the proof given for \cite[Lemma~2.2]{DBLP:conf/ipps/GoodrichMV93}%
	\footnote{We note that the proof in \cite[Lemma~2.2]{DBLP:conf/ipps/GoodrichMV93} uses a slightly different factor since their summation trees assert \((1-\padeps(n))s(\tnode) \le \tilde{s}(\tnode)\) instead of \(s(\tnode) \le \tilde{s}(\tnode)\).}
to show that \(T\) is a consistent \(\padeps\)-approximate summation tree.
We start by proving that \(T\) is consistent.

Let \(\tnode\) be an inner node of \(T\), \(h\) its height, \(\tnode_\ell, \tnode_r\) its children, and \(\tnode', \tnode'_\ell, \tnode'_r\) the corresponding nodes in \(T'\).
As in \autoref{definition:sumtree} we denote by \(s(\tnodeB)\) the exact sum of all leaf labels of the subtree rooted at \(\tnodeB\).
We have that
\begingroup%
\allowdisplaybreaks%
\begin{align*}
	\tilde{s}(\tnode_\ell) + \tilde{s}(\tnode_r)
	&= (1+\padeps'(n))^{h-1}\tilde{s}(\tnode'_\ell) + (1+\padeps'(n))^{h-1}\tilde{s}(\tnode'_r)\\
	&\le (1+\padeps'(n))^{h}s(\tnode'_\ell) + (1+\padeps'(n))^{h}s(\tnode'_r)\\
	&=(1+\padeps'(n))^{h}s(\tnode')\\
	&\le (1+\padeps'(n))^{h}\tilde{s}(\tnode')\\
	&=\tilde{s}(\tnode)
\end{align*}
\endgroup%
where the inequalities holds because \(T'\) is an \(\padeps'\)-approximate summation tree.
Thus, \(T\) is consistent.

We now argue that \(T\) is a \(\padeps\)-approximate summation tree.
Indeed, for any inner node \(\tnode\) of \(T\) we have \(s(\tnode) \le \tilde{s}(\tnode) \le (1+\padeps'(n))^{\log(n)+1}s(\tnode)\) since \(T'\) is an approximate summation tree (and all nodes have height at most \(\log n\)).
The claim then follows by the following observation.\footnote{The observation might be straightforward to some readers, but we feel that a self-contained presentation should give its proof, as well. }%
\begin{obs}
	For all integers \(a > 1\) there is an integer \(b>1\) such that for sufficiently large \(n\) the inequality
	\[ (1+\padeps'(n))^{\log n} \le 1+2(\log n)\padeps'(n) \le 1+\padeps(n)\]
	holds where \(\padeps(n) = (\log n)^{-a}\) and \(\padeps'(n) = (\log n)^{-b}\).
\end{obs}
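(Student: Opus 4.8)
The plan is to prove the Observation by a direct estimate, choosing $b := a+2$; since $a > 1$ is an integer we have $b \ge 4 > 1$, as required. Write $L := \log n$, so that $\padeps'(n) = L^{-b}$ and $\padeps(n) = L^{-a}$. The key quantity is $L\padeps'(n) = L^{\,1-b}$; since $1-b < 0$ this tends to $0$ as $n\to\infty$, and in particular $L\padeps'(n) \le 1$ for all sufficiently large $n$. I would record this threshold first and then handle the two inequalities separately.

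For the left inequality $(1+\padeps'(n))^{L} \le 1 + 2L\padeps'(n)$ I would use the elementary bounds $1+x \le e^{x}$ (for every real $x$) and $e^{t} \le 1 + t + t^{2}$ for $0 \le t \le 1$; the latter follows from $e^{t} - 1 - t = \sum_{k\ge 2} t^{k}/k! \le t^{2}\sum_{k\ge 2} 1/k! = (e-2)t^{2} \le t^{2}$ when $t\le 1$. Applying these with $x = \padeps'(n)$ and $t = L\padeps'(n) \le 1$ yields
\[ (1+\padeps'(n))^{L} \le e^{L\padeps'(n)} \le 1 + L\padeps'(n) + \bigl(L\padeps'(n)\bigr)^{2} \le 1 + 2L\padeps'(n), \]
where the final step uses $\bigl(L\padeps'(n)\bigr)^{2} \le L\padeps'(n)$, again because $L\padeps'(n) \le 1$. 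Note that this part goes through for any integer $b \ge 2$.

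For the right inequality $1 + 2L\padeps'(n) \le 1 + \padeps(n)$ I would simply rewrite it as $2L^{\,1-b} \le L^{-a}$, i.e.\ $L^{\,b-1-a} \ge 2$; with the choice $b = a+2$ the exponent equals $1$, so this reads $L \ge 2$, which holds for all sufficiently large $n$. Chaining the two displayed estimates proves the Observation. The only point requiring care is the bookkeeping of exponents: the right inequality forces $b$ to be strictly larger than $a+1$ (hence $b \ge a+2$), while the left inequality needs only $b \ge 2$, so $b = a+2$ accommodates both constraints — there is no genuine obstacle beyond making this choice.
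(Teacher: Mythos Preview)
Your proof is correct. It differs from the paper's argument in the way you handle the first inequality: the paper expands $(1+\padeps'(n))^{m}$, with $m=\log n$, via the binomial theorem and bounds the tail term-by-term, showing $\binom{m}{k+1}(1/m^{b})^{k}\le 1$ for each $k\ge 1$; you instead pass through the exponential via $1+x\le e^{x}$ and then control $e^{t}$ by a quadratic Taylor bound on $[0,1]$. Your route is shorter and sidesteps a minor technical wrinkle in the paper's presentation (the binomial-theorem expansion with integer upper index tacitly treats $m=\log n$ as an integer). Both arguments need the same threshold $L\padeps'(n)\le 1$ and handle the second inequality identically by choosing $b$ large enough; your explicit choice $b=a+2$ makes the dependence on $a$ transparent, whereas the paper leaves $b$ implicit.
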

\begin{proof}[Explanation]
	The second inequality can always be established by choosing \(b\) large enough.
	We thus focus on the first inequality.
	Let \(m = \log n\), so that we can write \((1+\frac{1}{m^b})^m\) for \((1+\padeps(n)')^{\log n}\).
	Thanks to the binomial theorem we have
	\[
			(1+\frac{1}{m^b})^m = \sum_{k=0}^m \binom{m}{k} 1^{m-k} \left(\frac{1}{m^b}\right)^k = \sum_{k=0}^m \binom{m}{k} \left(\frac{1}{m^b}\right)^k
	\]
	where the second equality holds simply because \(1^{m-k} = 1\) for all \(k\in\range[0]{m}\).
	Pulling out the first two addends and then shifting \(k\) yields
	\[
		 \sum_{k=0}^m \binom{m}{k} \left(\frac{1}{m^b}\right)^k = 1 + m\cdot \frac{1}{m^b} + \sum_{k=2}^m \binom{m}{k} \left(\frac{1}{m^b}\right)^k = 1 + m\cdot \frac{1}{m^b} + \sum_{k=1}^{m-1} \binom{m}{k+1} \left(\frac{1}{m^b}\right)^{k+1}.
	\]
	To establish the inequality it thus suffices to show that \(r =  \sum_{k=1}^{m-1} \binom{m}{k+1} \left(\frac{1}{m^b}\right)^{k+1}\) is less or equal than \(m\cdot \frac{1}{m^b}\).
	Indeed, then it follows that
	\[
		(1+\frac{1}{m^b})^m = 1 + m\cdot \frac{1}{m^b} + \sum_{k=1}^{m-1} \binom{m}{k+1} \left(\frac{1}{m^b}\right)^{k+1} \le 1 + 2\cdot m\cdot \frac{1}{m^b} = 1 + 2(\log n)\cdot \padeps'(n)
	\]
	holds.

	For establishing \(r \le m\cdot \frac{1}{m^b}\), we observe that it suffices to show that \(\binom{m}{k+1}\left(\frac{1}{m^b}\right)^{k} \le 1\) holds, for every \(k\in\range{m-1}\).
	Expanding the binomial coefficient yields
	\begin{multline*}\allowdisplaybreaks
		\binom{m}{k+1}\left(\frac{1}{m^b}\right)^{k} = \frac{m!}{(k+1)!\cdot (m-k-1)!}\cdot \left(\frac{1}{m^b}\right)^{k}\\
		= \frac{m\cdot (m-1)\cdot \ldots \cdot (m-k)}{(k+1)!}\left(\frac{1}{m^b}\right)^{k}
		=  \frac{m\cdot (m-1)\cdot \ldots \cdot (m-k)}{(k+1)!\cdot m^{bk}}.
	\end{multline*}
	Finally, note that there are \(k+1\) factors in the numerator, each of which is less or equal than \(m\), and \(m\) occurs at least \(bk \ge 2k \ge k + 1\) times in the denominator, since \(k\ge 1\) and \(b > 1\) by assumption.
	Since also \((k+1)!\ge 1\), we can conclude that \(\binom{m}{k+1}\left(\frac{1}{m^b}\right)^{k} \le 1\) holds.
\end{proof}

The last step to prove \autoref{result:prefix-sums} is to show that the work and space bounds can be improved to \(\bigO(n^{1+\varepsilon})\).
\begin{proof}[Proof of \autoref{result:prefix-sums}]
Due to \autoref{cited-result:prefix-sums-poly} there is a \ctpalgo that computes prefix sums with polynomial work and space.
Let \(n=\lenof{\arr{A}}\) and \(c\) be a constant such that this algorithm requires work and space \(\bigO(n^c)\).
Furthermore, let \(\varepsilon > 0\) be arbitrary but fixed, and define \(\delta = \frac{\varepsilon}{c}\).

Let \(\padeps'(n)\) be an accuracy function of the form \(\padeps'(n) = (\log n)^{-b}\) such that \((1+\padeps'(n))^2 \le (1+\padeps(n))\) holds for all \(n > 2\).
Note that this can be done by choosing a large enough \(b\) which only depends on \(a\).

The algorithm operates as follows in three steps.
It first divides the given array \(\arr{A}\) into \(m = n^{1-\delta}\) subarrays \(\arr{A}_1,\ldots,\arr{A}_m\) of size at most \(n^{\delta}\).
Then, for each subarray \(\arr{A}_k\) in parallel, arrays \(\arr{B}_k\) containing consistent \(\padeps'\)-approximate prefix sums for \(\arr{A}_k\) are computed using the algorithm guaranteed by \autoref{cited-result:prefix-sums-poly}.
This step requires work and space \(\bigO(n^{1+\varepsilon})\) because
\[ \left(n^{\delta}\right)^c\cdot n^{1-\delta} = n^{\frac{\varepsilon}{c}c}\cdot n^{1-\frac{\varepsilon}{c}} = n^{1+\varepsilon - \frac{\varepsilon}{c}}\le n^{1+\varepsilon}.\]

In the second step, the algorithm initialises an array \(\arr{C}\) of size \(n^{1-\delta}\) by setting \(\arr{C}[1] = 0\) and, for every index \(k > 1\), \(\arr{C}[k]\) to the largest prefix sum computed for \(\arr{A}_{k-1}\), i.e.\ \(\arr{B}_{k-1}[n^\delta]\).
It then recursively computes an array \(\arr{D}\) containing consistent \(\padeps'\)-approximate prefix sums for \(\arr{C}\).

Finally, in the third step, the prefix sums for the \(\arr{A}_k\) and \(\arr{C}\) are combined to prefix sums for \(\arr{A}\).
For an index \(i\in\range{n}\) let \(k_i\in\range{m}\) and \(i'\in\range{n^\delta}\) be such that \(n^\delta\cdot (k_i-1) + i' = i\).
That is, the \(i\)-th cell of \(\arr{A}\) is the \(i'\)-th cell of subarray \(\arr{A}_{k_i}\).
Let \(\arr{B}\) be the array of length \(n\) with \(\arr{B}[i] = \arr{B}_{k_i}[i] + \arr{D}[k_i]\).
The algorithm returns the array \(\arr{B}\).

The last step can easily be performed with linear work and space.
The recursive call in the second step thus requires \(\bigO(n^{1+\varepsilon})\) as well.
Note that the recursion depth is at most \(\log_{n^\delta} n \in \bigO(\frac{1}{\delta})\).
Therefore, the algorithm runs in constant time and requires work and space \(\bigO(n^{1+\varepsilon})\) as claimed.

In the following we prove that the algorithm is correct.
That is, we prove that \(\arr{B}\) contains consistent \(\padeps\)-approximate prefix sums for \(\arr{A}\).

Let \(i\in\range{n}\) and let \(i'\) and \(k_i\) be as before.
We first show that \(\arr{B}\) contains (not necessarily consistent) approximate prefix sums for \(\arr{A}\).
Indeed, we have
\begingroup%
\allowdisplaybreaks%
\begin{align*}
	\sum_{j=1}^{i} \arr{A}[j] &= \sum_{\ell=1}^{k_i-1} \sum_{j=1}^{n^\delta} \arr{A}_\ell[j] + \sum_{j=1}^{i'} \arr{A}_{k_i}[j]\\
	&\le \sum_{\ell=1}^{k_i-1} \arr{B}_\ell[n^\delta] + \arr{B}_{k_i}[i']\\
	&=\sum_{\ell=1}^{k_i-1} \arr{C}[\ell+1] + \arr{B}_{k_i}[i']\\
	&=\sum_{\ell=1}^{k_i} \arr{C}[\ell] + \arr{B}_{k_i}[i']\\
	&\le \arr{D}[k_i] + \arr{B}_{k_i}[i']\\
	&=\arr{B}[i]
\end{align*}%
\endgroup
where the first inequality holds because the \(\arr{B}_\ell\) contain prefix sums for the \(\arr{A}_\ell\), the second and third equalities hold by definition of \(\arr{C}\) (in particular, \(\arr{C}[1] = 0\)), and the last inequality holds because \(\arr{D}\) contains approximate prefix sums for \(\arr{C}\).

Conversely, we have
\begin{align*}
	\arr{B}[i] &= \arr{D}[k_i] + \arr{B}_{k_i}[i']\\
	&\le (1+\padeps'(n))\sum_{\ell=1}^{k_i} \arr{C}[\ell] + \arr{B}_{k_i}[i']\\
	&=(1+\padeps'(n))\sum_{\ell=1}^{k_i-1} \arr{B}_\ell[n^\delta] + \arr{B}_{k_i}[i']\\
	&\le (1+\padeps'(n))\sum_{\ell=1}^{k_i-1} (1+\padeps'(n))\sum_{j=1}^{n^\delta} \arr{A}_\ell[j]
	+ (1+\padeps'(n))\sum_{j=1}^{i'} \arr{A}_{k_i}[j]\\
	&\le (1+\padeps'(n))^2 \Big[\sum_{\ell=1}^{k_i-1} \sum_{j=1}^{n^\delta} \arr{A}_\ell[j] + \sum_{j=1}^{i'} \arr{A}_{k_i}[j]\Big]\\
	&=(1+\padeps'(n))^2 \sum_{j=1}^{i} \arr{A}[j]\\
	&\le(1+\padeps(n)) \sum_{j=1}^{i} \arr{A}[j]
\end{align*}
where the first inequality holds because \(\arr{D}\) contains \(\padeps'\)-approximate prefix sums for \(\arr{C}\), the second equality holds by definition of \(\arr{C}\), and the second inequality holds because the \(\arr{B}_\ell\) contain \(\padeps'\)-approximate prefix sums for the \(\arr{A}_{\ell}\).
The final inequality holds by choice of \(\padeps'\).

It remains to show consistency for \(i\ge 2\).
That is, we have to assert that \(\arr{B}[i] - \arr{B}[i-1] \ge \arr{A}[i]\) holds.
We make a case distinction.
If \(i\) and \(i-1\) both map to the same subarray, i.e.\ if \(k_i = k_{i-1}\) holds, then
\[ \arr{B}[i] - \arr{B}[i-1] = (\arr{B}_{k_i}[i'] + \arr{D}[k_i]) - (\arr{B}_{k_i}[i'-1] + \arr{D}[k_i]) = \arr{B}_{k_i}[i'] - \arr{B}_{k_i}[i'-1] \ge \arr{A}_{k_i}[i'] = \arr{A}[i]\]
where the inequality holds because \(\arr{B}_{k_i}\) contains consistent approximate prefix sums for \(\arr{A}_{k_i}\).

Otherwise, we can conclude that \(i\) is mapped to the first cell of \(\arr{A}_{k_i}\) and \(i-1\) is mapped to the last cell of \(\arr{A}_{k_i-1}\).
Thus, we have
\begingroup%
\allowdisplaybreaks%
\begin{align*}
	\arr{B}[i] - \arr{B}[i-1] &= (\arr{B}_{k_i}[1] + \arr{D}[k_i]) - (\arr{B}_{k_i-1}[n^\delta] + \arr{D}[k_i-1])\\
	&= \arr{D}[k_i] - \arr{D}[k_i-1] - \arr{B}_{k_i-1}[n^\delta] + \arr{B}_{k_i}[1]\\
	&\ge \arr{C}[k_i] - \arr{B}_{k_i-1}[n^\delta] + \arr{B}_{k_i}[1]\\
	&=\arr{B}_{k_i}[1]\\
	&\ge\arr{A}_{k_i}[1] = \arr{A}[i]
\end{align*}
\endgroup
where the first inequality holds because \(\arr{D}\) contains consistent approximate prefix sums for \(\arr{C}\), the last equality holds because we have \(\arr{C}[k_i] = \arr{B}_{k_i-1}[n^\delta]\) by definition, and the last inequality holds because \(\arr{B}_{k_i}\) contains approximate prefix sums for \(\arr{A}_{k-1}\).
\end{proof}

 \end{document}